\documentclass[DIV12,11pt]{scrartcl}

\usepackage{microtype}

\bibliographystyle{plainurl}

\usepackage{amsthm,amssymb,amsmath}
\usepackage[noblocks]{authblk}
\usepackage{paralist}

\title{New Abilities and Limitations of \\Spectral Graph Bisection}
\author[1]{Martin  R. Schuster}
\author[1]{Maciej Li\'{s}kiewicz}
\affil[1]{Institute of Theoretical Computer Science, University of L\"{u}beck, Germany}

\theoremstyle{plain}
\newtheorem{theorem}{Theorem}[section]
\newtheorem{lemma}[theorem]{Lemma}
\newtheorem{corollary}[theorem]{Corollary}

\newtheorem{proposition}[theorem]{Proposition}
\newtheorem{claim}[theorem]{Claim}

\newcommand\eqcomment[1]{&& \text{\scriptsize\textcolor{black!70}{\vrule height 14pt depth 7pt width 0.4pt \ $#1$}}}
\newcommand\eqlabel[1]{\stepcounter{equation}\tag{\theequation}\label{#1}}

\usepackage{tikz}
\usetikzlibrary{graphs,datavisualization,datavisualization.formats.functions}

\usepackage{paralist}
\usepackage{wrapfig}
\usepackage[ruled,vlined,english,linesnumbered]{algorithm2e}

\usepackage[utf8]{luainputenc} 
\usepackage{listings}
\usepackage{floatflt}
\usepackage{wrapfig}
\usepackage{csvsimple}
\usepackage{hyperref}


\newcommand{\caO}{{\cal O}}
\newcommand{\myvec}[1]{{\mathbf #1}}

\newcommand{\swozero}{S\setminus \{\myvec{0}\}}
\newcommand{\rwoconst}{\mathbb{R}^n_{\ne c\myvec{1}}}

\def\bmat#1,#2,#3,#4\emat{\begin{pmatrix}#1&#2\\#3&#4\end{pmatrix}}

\newcommand{\bw}{\operatorname{bw}}
\newcommand{\opt}{\mathrm{opt}}
\newcommand{\diag}{\operatorname{diag}}
\newcommand{\tsum}{\operatorname{sum}}
\newcommand{\cutwidth}{\operatorname{cw}}

\newcommand{\tr}{\operatorname{tr}}

\newcommand{\cG}{\mathcal{G}}
\newcommand{\cR}{\mathcal{R}}
\newcommand{\cO}{\mathcal{O}}

\newcommand{\cA}{\mathcal{A}}

\allowdisplaybreaks

\setlength{\textfloatsep}{6.0pt plus 1.0pt minus 2.0pt}
\setlength{\floatsep}{7.0pt plus 2.0pt minus 2.0pt}
\setlength{\intextsep}{7.0pt plus 2.0pt minus 2.0pt}

\makeatletter
\g@addto@macro \normalsize {%
 \setlength\abovedisplayskip{4pt}
 \setlength\belowdisplayskip{4pt}
}
\makeatother

\setlength{\topsep}{2pt}

\begin{document}
\date{}

\maketitle

\begin{abstract}
Spectral based heuristics belong to well-known commonly used methods 
which determines provably minimal graph bisection or outputs ``fail'' when 
the optimality cannot be certified. In this paper we focus on Boppana's 
algorithm which belongs to one of the most prominent methods of this type.
It is well known that the algorithm works well in the random 
\emph{planted bisection model} -- the standard class of graphs for analysis
minimum bisection and relevant problems. In 2001 Feige and Kilian posed the question 
if Boppana's algorithm works well in the semirandom model by Blum and Spencer.
In our paper we answer this question affirmatively. We show also that the 
algorithm achieves similar performance on graph classes which extend  
the semirandom model. 

Since the behavior of Boppana's algorithm on the semirandom 
graphs remained unknown, Feige and Kilian proposed a new semidefinite programming 
(SDP) based approach and proved that it works on this model. The relationship between 
the performance of the SDP based algorithm and Boppana's approach was left 
as an open problem. In this paper we solve the problem in a complete way
by proving that the bisection algorithm of Feige and Kilian provides exactly 
the same results as Boppana's algorithm.
As a consequence we get that Boppana's algorithm achieves the optimal 
threshold for exact cluster recovery in the \emph{stochastic block model}. 
On the other hand we prove some limitations of Boppana's approach:
we show that if the density difference on the parameters of the 
planted bisection model is too small then the algorithm fails with high 
probability in the model.
\end{abstract}

\section{Introduction}

The minimum graph bisection problem is one of the classical
NP-hard problems \cite{garey1976some}: for an undirected graph $G$
the aim is to partition the set of vertices $V=\{1,\ldots, n\}$ ($n$ even) 
into two equal sized sets, such that the number of cut edges, i.e.  edges with  
endpoints in different bisection sides, is minimized. 
The bisection width of a graph $G$, denoted by $\bw(G)$, 
is then the minimum number of cut edges in a bisection of $G$.
Due to practical  significance in VLSI design, image processing, computer vision 
and many other applications
(see \cite{lengauer2012combinatorial, bhatt1984framework,
wu1993optimal, kwatra2003graphcut, lipton1980applications, schloegel2000graph})
and its theoretical importance, the problem has been the subject of a considerable amount  
of research from different perspectives: approximability
\cite{saran1995finding,arora1995polynomial,feige2000approximating,Feige2002,khot2006ruling},
average-case complexity \cite{Bui1987}, and 
parameterized algorithms \cite{marx2006parameterized,van2013parameterized}
including the seminal paper in this field  by  Cygan et~al.  \cite{Cygan2014} 
showing that the minimum bisection is fixed parameter tractable.

In this paper we consider polynomial-time algorithms
that for an input graph either output the  \emph{provable} minimum-size 
bisection or ``fail''  when the optimality cannot be certified.
The methods should work well for all (or almost all, depending on the model) 
graphs of particular classes,
i.e.~provide for them a certified optimum bisection, 
while for irregular, worst case instances the output 
can be ``fail'', what is justifiable.
We investigate two well-studied graph models: 
the \emph{planted bisection model} and its extension the \emph{semirandom model} 
which are widely used to analyze and benchmark graph partitioning algorithms. We refer to  
\cite{Bui1987,dyer1989solution,Boppana1987,blum1995coloring,condon2001algorithms,Feige2001,carson2001hill,mcsherry2001spectral,bollobas2004max,Coja2005,makarychev2012approximation} to cite some of the relevant works.
Moreover, we consider the 
\emph{regular graph model} introduced of Bui et al.~\cite{Bui1987}
and a new extension of the semirandom model. 
For a (semi)random model
we say that some property 
is satisfied with high probability (w.h.p.) if the probability that
the property holds tends to $1$ as the number of vertices $n\to \infty$. 

In the planted bisection model, denoted as $\cG_{n}(p,q)$
with parameters $1>p=p(n) \ge q(n) = q > 0$, 
the vertex set $V=\{1,\ldots, n\}$ is partitioned randomly into 
two equal sized sets  $V_1$ and $V_2$, called the \emph{planted bisection}. 
Then for every pair of vertices do independently: 
if both vertices belong to the same part of the bisection 
(either both belong to $V_1$ or both belong to $V_2$) then include an edge 
between them with probability $p$; If the two vertices belong to different parts,
then connect the vertices by an edge with probability $q$. 
In the semirandom model for graph bisection \cite{Feige2001}, 
initially a graph $G$ is chosen at random according
to model $\cG_{n}(p,q)$. Then a monotone adversary is allowed to modify
$G$ by applying an arbitrary sequence of the following monotone transformations:
(1) The adversary may remove from the graph any edge
crossing a minimum bisection;
(2) The adversary may add to the graph any edge not crossing
the bisection.
Finally, in the regular random model, denoted as  $\cR_{n}(r,b)$,
with $r=r(n) < n$ and $b=b(n) \le (n/2)^2$,
the probability distribution is uniform on the set of all graphs
on $V$ that are $r$-regular and have bisection width $b$. 
 
The planted bisection model was first proposed in the sociology literature 
\cite{holland1983stochastic} under the name \emph{stochastic block model}
to study community detection problems in random graphs. In this setting, the 
planted bisection $V_1,V_2$ (as described above) models latent communities 
in a network and the goal here is to recover the communities from the observed graph.
In the general case, the model allows some errors by recovering, multiple communities, 
and also that $p(n) < q (n)$. The community detection problem on the stochastic block model
has been subject of a considerable amount of research in physics, statistics and computer science
(see e.g. \cite{abbe2017community,moore2017computer} for current surveys).
In particular, an intensive study has been carried out on providing lower bounds 
on $|p-q|$ to ensure recoverability of the planted bisection.  

The main focus of our work is the bisection algorithm proposed by Boppana \cite{Boppana1987}.
Though introduced almost three decades ago, the algorithm belongs still 
to one of the most important heuristics in this area. However, several basic questions
concerning the algorithm's performance remain open. Using a spectral based approach, 
Boppana constructs an implementable algorithm which, assuming the density difference 
\begin{equation} \label{eq:dens:diff}
  p-q \ge c\sqrt{p\ln n}/\sqrt{n} \quad \text{for a certain  constant $c>0$}
\end{equation}
bisects $\cG_{n}(p,q)$ optimally w.h.p. (certifying the optimality
of the solutions).
Remarkably, for a long time this was the largest subclass of graphs $\cG_{n}(p,q)$
for which a minimum bisection could be found.
Since under the assumption \eqref{eq:dens:diff} the planted bisection is minimum w.h.p.,
Boppana's algorithm solves the recovery problem for the stochastic block model with two communities. 
Boppana's  algorithm works well also on the regular graph model $\cR_{n}(r,b)$,
assuming that 
\begin{equation} \label{eq:rand:reg:assumption}
  r \ge 6 \quad\text{and} \quad 
  b \le o(n^{1-1/\lfloor (r/2+1)/2 \rfloor} ).
\end{equation}

In this paper we investigate the problem if, under assumption \eqref{eq:dens:diff}, 
Boppana's algorithm works well for the semirandom model. This question was posed 
by Feige and Kilian in \cite{Feige2001} and remained open so far.  
In our work we answer the question affirmatively.
We show also that Boppana's algorithm provides the same results as 
the algorithm proposed currently by Hajek, Wu, and Xu \cite{hajek2016achieving}.
As a consequence we get that Boppana's algorithm achieves the optimal 
threshold for exact recovery in the stochastic block model with 
parameters $p=\alpha\log(n)/n$ and $q=\beta\log(n)/n$.
On the other hand we show some limitations of the algorithm.
One of the main results in this direction is that the density difference 
\eqref{eq:dens:diff} is tight: we prove that if 
$p-q \le o(\sqrt{p\cdot  \ln n}/\sqrt{n})$
then the algorithm fails on $\cG_{n}(p,q)$ w.h.p.

\paragraph*{Our Results.}
The motivation of our research was to systematically explore graph properties 
which guarantee that Boppana's algorithm outputs a  certified optimum bisection.
Due to \cite{Boppana1987} we know that random graphs from 
$\cG_{n}(p,q)$ and $\cR_{n}(r,b)$ satisfy such properties w.h.p.~under 
assumptions \eqref{eq:dens:diff}   and \eqref{eq:rand:reg:assumption} 
on $p,q,r,$ and $b$ as discussed above.
But, as we will see later, the algorithm works well also 
for instances which deviate significantly from such random graphs.

Our first technical contribution is a modification of the algorithm 
to cope with graphs of more than one optimum bisection, like e.g. hypercubes. 
The algorithm proposed originally  
by Boppana does not manage to handle such cases.
Our modification is useful to work on wider classes of graphs.

In this paper we introduce a natural generalization of 
the semirandom model of Feige and Kilian \cite{Feige2001}.
Instead of $\cG_{n}(p,q)$, we start with an arbitrary initial 
graph model $\cG_{n}$, and then apply a sequence of the transformations
by a monotone adversary as in \cite{Feige2001}. We denote such a model 
by $\cA(\cG_{n})$. One of our main positive results is that 
if Boppana's algorithm outputs the minimum-size bisection for 
graphs in  $\cG_n$ w.h.p., then the algorithm finds a minimum bisection w.h.p. 
for the adversarial graph model $\cA(\cG_n)$, too.
As a corollary, we get that under assumption \eqref{eq:dens:diff}, 
Boppana's algorithm works well in the semirandom model, denoted here 
as $\cA(\cG_{n}(p,q)$), and, 
assuming \eqref{eq:rand:reg:assumption}, in 
$\cA(\cR_n(r,b))$ \textendash\ the semirandom regular model.
This solves the open problem posed by Feige and Kilian in  \cite{Feige2001}.
To the best of our knowledge,  Boppana's algorithm is the only method known so far,
that finds (w.h.p.) provably optimum bisections  on all of the above random graph classes.

Since the behavior of the
algorithm on the (common) semirandom 
model $\cA(\cG_{n}(p,q))$ remained unknown so far, 
Feige and Kilian proposed in \cite{Feige2001} a new 
semidefinite programming (SDP) based approach which works 
for semirandom graphs, assuming \eqref{eq:dens:diff}. 
The relationship between the performance of the SDP based algorithm 
and Boppana's approach was left in \cite{Feige2001} as an open problem.
Feige and Kilian conjecture 
that for every graph $G$, their objective function $h_p(G)$
to certify the bisection optimality
and the lower bound computed in Boppana's algorithm
give the same value.  
In our paper we answer this question affirmatively.
To compare the algorithms, we provide a primal SDP formulation 
for Boppana's approach and prove that it is equivalent to the dual 
SDP of Feige and Kilian. 
Next we give a dual program to the primal formulation of Boppana's algorithm
and prove that the optima of the primal and dual programs are equal to 
each other. Note that unlike linear programming,
for semidefinite programs there may be a duality gap.
Thus, we show that the bisection algorithm of Feige and Kilian
provides exactly the same results as Boppana's algorithm. 
However, an important advantage of the spectral method by Boppana over
the SDP based approach by Feige and Kilian is that the spectral 
method is practically implementable reducing the bisection problem
for graphs with $n$ vertices to computing minima of a convex function 
of $n$ variables while the
algorithm in \cite{Feige2001} needs to 
solve a semidefinite program over $n^2$ variables.

From the result that the method by Feige and Kilian is equivalent to Boppana's
we get, as a consequence, that Boppana's algorithm achieves the sharp threshold
for exact cluster recovery in the stochastic block model which has been obtained recently 
by Abbe et~al.~\cite{abbe2016exact} and independently by Mossel et~al.~\cite{mossel2015consistency}.
In~\cite{abbe2016exact,mossel2015consistency} it is proved that 
in the (binary) stochastic block model, with $p=\alpha\log(n)/n$ and
$q=\beta\log(n)/n$ for fixed constants $\alpha \not= \beta$, 
if $(\sqrt{\alpha}-\sqrt{\beta})^2>2$, the planted clusters can be
exactly recovered (up to a permutation of cluster indices) with probability converging
to one; if $(\sqrt{\alpha}-\sqrt{\beta})^2<2$, no algorithm can exactly recover 
the clusters with probability converging to one.
Note, that the choice of $p$ and $q$ is well justified: Mossel et~al.
show that if $q<p = \log(n)/n$ then the exact recovery is impossible
for these parameters. 
In \cite{hajek2016achieving} Hajek et~al.
proved that the SDP of Feige and Kilian achieves the optimal threshold, i.e. 
if  $(\sqrt{\alpha}-\sqrt{\beta})^2>2$ then the SDP reconstructs communities w.h.p.
From our result we get, that Boppana's algorithm achieves the threshold, too.

To analyze limitations of the spectral approach
we provide structural properties of the space of feasible solutions 
searched by the algorithm.
This allows us to prove that if an optimal bisection contains some 
forbidden subgraphs, then Boppana's algorithm fails.
Using these tools, we were able to show that 
if the density difference 
$p-q$ is asymptotically smaller than $\sqrt{p\cdot  \ln n}/\sqrt{n}$
then Boppana's  algorithm fails to determine a certified optimum bisection 
on $\cG_{n}(p,q)$ w.h.p. Note that our impossibility result is not a direct consequence 
of the lower bound for the  exact cluster recovery discussed above. 
For example, for $q=\cO(1)/n$ and $p=\sqrt{\log{n}} /n$ 
from Mossel et~al.~\cite{mossel2015consistency} we know that for these parameters 
the exact recovery is impossible but obviously this does not 
imply that determining of a certified optimum bisection is impossible either.

\paragraph*{Related Works.}
Spectral partitioning goes back to Fiedler \cite{Fiedler1975}, who
first proposed to use eigenvectors to derive partitions. Spielman and
Teng e.g. showed, that spectral partitioning works well on
planar graphs \cite{Spielman1996,Spielman2006}, although there are also
graphs on which purely spectral algorithms perform poorly, as shown by
Guattery and Miller \cite{Guattery1998}.

Also other algorithms have been proven to work on the planted
bisection model. Condon and Karp \cite{condon2001algorithms}
developed a linear time algorithm for the more general $l$-partitioning
problem. Their algorithm finds the optimal partition with probability
$1-\exp(-n^{\Theta(\varepsilon)})$ in the planted bisection model with
parameters satisfying $p-q=\Omega(1/n^{1/2-\varepsilon})$. Carson and
Impaglizzo \cite{carson2001hill} show that a hill-climbing algorithm is
able to find the planted bisection w.h.p.
for parameters
$p-q=\Omega((\ln^3 n)/n^{1/4})$.
Dyer and Frieze \cite{dyer1989solution} provide a min-cut via degrees
heuristic that, assuming $n(p-q)=\Omega(n)$ finds and certifies the minimum bisection w.h.p. 
Note, that the density difference~\eqref{eq:dens:diff} assumed by Boppana 
still outperforms the above ones.
Moreover a disadvantage of the methods against Boppana's algorithm, except for the last one,  
is that they do not certify the optimality of the solutions.
In \cite{mcsherry2001spectral} McSherry describes a spectral based heuristic
that applied to $\cG(p,q)$ finds a minimum bisection w.h.p if
$p$ and $q$ satisfy assumption \eqref{eq:dens:diff} but it does not 
certify the optimality.
Importantly, the algorithms above, similarly as Boppana's method,
solve the recovery problem for the stochastic block model with two
communities.

In \cite{Coja2005}  Coja-Oghlan developed a new  spectral-based
algorithm which, on the planted partition model $\cG_{n}(p,q)$,
enables for a wider range of parameters than  \eqref{eq:dens:diff},
certifying the optimality of its solutions.
The algorithm~\cite{Coja2005}
assumes that 
 $ p-q \ge \Omega(\sqrt{p\ln (np)}/\sqrt{n})$. If the parameters $p$ and $q$
describe non-sparse graphs, this condition is essentially the same 
as Boppana's assumption.
For sparse graphs, however, Coja-Oghlan's constraint allows a
larger subclass. 
For example, the algorithm works in $\cG_n(p,q)$  
for $q=\cO(1)/n$ and $p=\sqrt{\log{n}} /n$. Due to 
results presented in our paper we know that 
Boppana's algorithm fails w.h.p. for such graphs. 
Interestingly, the condition on the density difference by Coja-Oghlan
allows graphs for which the minimum bisection width is strictly smaller 
than the width of the planted bisection w.h.p.
However, a drawback of 
Coja-Oghlan's algorithm is that to work well in the planted bisection model with 
\emph{unknown} parameters $p$ and $q$, the algorithm 
has to learn the parameters
since it is based on the knowledge of values $p$ and $q$. Also the performance of the algorithm
on other families, like e.g. semirandom graphs and  the regular random graphs
$\cR_n(r,b)$, is unknown.
Recent research by Coja-Oghlan et\,al. \cite{Coja2015} contributes to a better
understanding of the planted bisection model and average
case behavior of a minimum bisection. 

\vspace*{1mm}
The paper is organized as follows. The next section contains an overview
over Boppana's algorithm. In Section~\ref{sec:non-unique:bisections}
we propose a modification of the algorithm to deal with non-unique optimum bisections.
In Section~\ref{sec:adversarial:model} we define the adversarial graph model and show, that
Boppana's algorithm works well on this class. Next
we develop a new analysis of the algorithm and use it to show some limitations
of the method.
Finally, in Section~\ref{sec:comparing:boppana:feige}
we compare the algorithm to the SDP approach of Feige and Kilian.
We conclude the paper with a discussion.
The  proofs of most of the propositions presented in 
Sections \ref{sec:Boppana:recall} through \ref{sec:comparing:boppana:feige}
are moved to the appendix (Section~\ref{sec:appendix}).

\section{Boppana's  Graph Bisection Algorithm}\label{sec:Boppana:recall}
In this section  we fix definitions and notations 
used in our paper and we recall Boppana's algorithm
and known facts on its performance.  
We need the details of the algorithm to describe 
its extension in the next section. 
For a given  graph $G=(V,E)$, with $V=\{1,\ldots,n\}$, Boppana defines a function $f$
for all  real vectors $x,d\in\mathbb{R}^n$ as
\begin{equation}\textstyle
f(G,d,x)=\sum_{\{i,j\}\in E}\frac{1-x_ix_j}{2}+\sum_{i\in V}d_i (x_i^2-1).
\label{function_f}
\end{equation}
Call by $S\subset \mathbb{R}^n$ the subspace of all vectors 
$x\in\mathbb{R}^n$, with $\sum_i x_i=0$.
Based on $f$, the function $g'$ is defined as follows
\begin{equation}
g'(G,d)=\min_{\|x\|^2=n, x\in S} f(G,d,x),
\end{equation}
where $\|x\|$ denotes $L_2$ norm of $x$.
Vector $x$ is named a \emph{bisection vector} if $x\in \{+1,-1\}^n$ and $\sum_i x_i =0$.
Such $x$ determines a bisection  of $G$ 
of the cut width denoted as $\cutwidth(x)=\sum_{\{i,j\}\in E}\frac{1-x_ix_j}{2}$. 
For a bisection vector $x$ the function $f$ takes the value
\eqref{function_f} regardless of $d$.
Minimization over all such $x$ would give the minimum bisection width. 
Since $g'$ uses a relaxated constraint we get 
$g'(G,d)\leq \bw(G)$ where, recall, $\bw(G)$ 
denotes the
bisection width of $G$.
To improve the bound, Boppana tries to find some $d$ 
which leads to a minimal decrease of the function value of $g'$ 
compared to the bisection width:
\begin{equation}\label{def:function:h}
  h(G)=\max_{d\in \mathbb{R}^n} g'(G,d).
\end{equation}
It is easy to see that for every graph $G$ we have $h(G) \le \bw(G)$.

In order to compute $g'$ efficiently, Boppana expresses the function 
in spectral terms. To describe this we need some definitions.
Let $I$ denote the 
$n$-dimensional identity matrix and let $P=I-\frac{1}{n} J$
be the projection matrix which projects a vector $x\in\mathbb{R}^n$ to 
the projection $Px$ of vector $x$ into the subspace $S$. 
Here, $J$ denotes an $n\times n$ matrix of ones.
For a matrix $B\in \mathbb{R}^{n\times n}$, the matrix $B_S=PBP$ 
projects a vector $x\in\mathbb{R}^n$ to $S$, then applies $B$ 
and projects the result again into $S$.
Further, for $B\in\mathbb{R}^{n\times n}$ and $d\in \mathbb{R}^{n}$ 
we denote the sum of $B$'s elements as $\tsum(B)=\sum_{ij} B_{ij}$ 
and by $\diag(d)$ we denote the $n\times n$ diagonal matrix $D$ 
with the entries of the vector $d$ on the main diagonal, i.\,e. $D_{ii}=d_i$.

Now assume $B\in\mathbb{R}^{n\times n}$ is symmetric and let $B_S=PBP$. 
Denote by $\rwoconst$ the real space $\mathbb{R}^n$ 
without the subspace spanned by the identity vector $\myvec{1}$, i.\,e. 
$\rwoconst=\mathbb{R}^n\setminus \{c\myvec{1}: c\in\mathbb{R}\}$. We define
$\lambda(B_S)=\max_{x\in\rwoconst} \frac{x^T B_S x}{\|x\|}.$
It is easy to see that if $\lambda(B_S)\geq 0$ then
\begin{equation}\label{eq:Rayleigh:Quotient}
\lambda(B_S)=\max_{x\in\mathbb{R}^n} \frac{x^T B_S x}{\|x\|}
\end{equation}
i.\,e. $\lambda(B_S)$ is the largest eigenvalue of the matrix $B_S$.
Vectors $x$ that attain the maximum are exactly the eigenvectors corresponding to the largest eigenvalue $\lambda(B_S)$ of $B_S$.

Let $G$ be an undirected graph with $n$ vertices and adjacency matrix $A$. 
Let further $d\in\mathbb{R}^n$ be some vector and let $B=A+\diag(d)$, then we define
\[g(G,d)=\frac{\tsum(B)-n \lambda(B_S)}{4}.\]
In \cite{Boppana1987} it is shown  that
function $g'$ can be expressed as  $g'(G,d)=g(G,-4d)$.
Since in the definition of $h$ in~\eqref{def:function:h} we maximize over all $d$, we   
can conclude that 
\begin{equation}\label{eq:for:function:h}
  h(G)\ = \ \max_{d\in \mathbb{R}^n} g(G,d)  \ = \ 
      \max_{d\in \mathbb{R}^n}  \frac{\tsum(A+\diag(d))-n \lambda((A+\diag(d))_S)}{4}. 
\end{equation}
Boppana's algorithm that finds and certifies an optimal bisection, works as follows:

\begin{algorithm}[h]
 \SetKwInOut{Input}{Input}
 \caption{Boppana's Algorithm 
        \label{alg:Boppana}}
  \Input{Graph $G$ with adjacency matrix $A$.}
  Compute $h(G)$: Numerically find a vector $d^\opt$ 
  which maximizes $g(G,d)$. Let $D=\diag(d^\opt)$. 
  Use constraint $\sum_i d^\opt_i=2|E|$ to ensure $\lambda((A+D)_S)>0$\;\label{bb:alg:step:one}
  Construct a bisection: 
  Let $x$ be an eigenvector corresponding to the eigenvalue $\lambda((A+D)_S)$. Construct a bisection vector $\hat x$ by splitting at the median $\bar x$ of $x$, i.e. 
  let $\hat x_i = +1$ if $x_i\geq \bar x$ and $\hat x_i = -1$ if $x_i<\bar x$. 
  If $\sum_i {\hat x}_i > 0$, move (arbitrarily)
  $\frac{1}{2}\sum_i {\hat x}_i$ vertices $i$ with $x_i=\bar x$ to part $-1$ letting $\hat x_i = -1$\;\label{bb:alg:step:two}
    Output $\hat x$; If $\cutwidth(\hat x) = h(G)$ output ``optimum bisection'' else output ``fail''.
\end{algorithm}

One can prove that $g$ is concave and hence, the maximum in Step 1 can be 
found in polynomial time with arbitrary precision \cite{Groetschel1981}.
To analyse the algorithm's performance, Boppana proves the following, for a sufficiently large constant
$c>0$:

\begin{theorem}[Boppana \cite{Boppana1987}] \label{thm:boppana:main}
Let $G$ be a random graph from $\cG_{n}(p, q)$, 
and let
$p-q  \ge c (\sqrt{p\ln n}/\sqrt{n})$.
Then with probability 
$1 - \cO(1/n)$, the bisection width of $G$ equals $h(G)$.
\end{theorem}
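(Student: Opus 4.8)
The only nontrivial direction is $h(G)\ge\bw(G)$, since $h(G)\le\bw(G)$ holds for every graph. As $h(G)=\max_{d}g(G,d)$, it suffices to produce, with probability $1-\cO(1/n)$, one vector $d^{\opt}$ with $g(G,d^{\opt})\ge\bw(G)$; then $h(G)\ge g(G,d^{\opt})\ge\bw(G)\ge h(G)$, so all are equal (and the planted bisection turns out optimal as a by‑product). Let $V_1,V_2$ be the planted sides, $y\in\{+1,-1\}^{n}$ the planted bisection vector (so $y\in S$), $a_i,b_i$ the numbers of neighbours of $i$ inside resp.\ across the planted bisection (so $(Ay)_i=(a_i-b_i)y_i$), and $E_{\mathrm{in}},E_{\mathrm{cross}}$ the non‑crossing resp.\ crossing edges. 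The plan is to take
\[
 \lambda^{\ast}:=\frac{4|E_{\mathrm{in}}|}{n},\qquad d^{\opt}_i:=\lambda^{\ast}-(a_i-b_i),\qquad B:=A+\diag(d^{\opt}).
\]
This is rigged so that $(By)_i=(Ay)_i+d^{\opt}_iy_i=(a_i-b_i)y_i+\lambda^{\ast}y_i-(a_i-b_i)y_i=\lambda^{\ast}y_i$, i.e.\ $By=\lambda^{\ast}y$, whence $B_Sy=PBy=\lambda^{\ast}y$. A one‑line count gives $\tsum(\diag(d^{\opt}))=n\lambda^{\ast}-\sum_i(a_i-b_i)=4|E_{\mathrm{in}}|-\bigl(2|E_{\mathrm{in}}|-2|E_{\mathrm{cross}}|\bigr)=2|E|$, so Step~1's constraint $\sum_i d_i=2|E|$ holds and $\tsum(B)=4|E|$. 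Hence, \emph{provided} $\lambda(B_S)=\lambda^{\ast}$,
\[
 g(G,d^{\opt})=\frac{\tsum(B)-n\lambda(B_S)}{4}=\frac{4|E|-4|E_{\mathrm{in}}|}{4}=|E_{\mathrm{cross}}|=\cutwidth(y)\ \ge\ \bw(G),
\]
as required. So everything reduces to verifying that $\lambda^{\ast}$ is the top eigenvalue of $B_S$.

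For that, $B_S$ is symmetric with $B_S\myvec1=0$ and $B_Sy=\lambda^{\ast}y$ where $\lambda^{\ast}>0$, and its remaining eigenvalues lie in $\bigl(-\infty,\ \sup\{\,w^{T}Bw:\ w\in S,\ w\perp y,\ \|w\|=1\,\}\,\bigr]$ since $w^{T}B_Sw=w^{T}Bw$ for $w\in S$. Thus $\lambda(B_S)=\lambda^{\ast}$ iff $w^{T}Bw\le\lambda^{\ast}$ for every such $w$, and as $w^{T}Bw=w^{T}Aw+\sum_i d^{\opt}_iw_i^{2}=w^{T}Aw+\lambda^{\ast}-\sum_i(a_i-b_i)w_i^{2}$ this becomes
\[
 w^{T}Aw\ \le\ \textstyle\sum_i(a_i-b_i)\,w_i^{2}\qquad\text{for all unit }w\in S,\ w\perp y .
\]
I would bound the sides separately: the right side is at least $m:=\min_i(a_i-b_i)$, and for the left side I split $A=\overline A+(A-\overline A)$ with $\overline A=\mathbb E[A]=\tfrac{p+q}{2}(J-I)+\tfrac{p-q}{2}(yy^{T}-I)$, for which $\overline A w=-pw$ whenever $w\perp\myvec1$ and $w\perp y$, so $w^{T}Aw\le -p+\|A-\overline A\|\le\|A-\overline A\|$. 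Hence it suffices to show $\|A-\overline A\|\le m$ with probability $1-\cO(1/n)$.

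Two concentration estimates finish this. First, the hypothesis $p-q\ge c\sqrt{p\ln n}/\sqrt n$ with $q\ge0$ forces $np\ge c^{2}\ln n$, and the classical bound on the spectral norm of a centred random adjacency matrix then gives $\|A-\overline A\|=\cO(\sqrt{np})$ with probability $1-\cO(1/n)$. Second, each $a_i-b_i$ is a difference of independent binomials with mean $\tfrac n2(p-q)-p$ and variance at most $np$, so a Chernoff bound yields $\Pr\!\bigl[a_i-b_i<\tfrac n2(p-q)-p-t\bigr]\le\exp\bigl(-\Omega(t^{2}/np)\bigr)$; taking $t=\Theta(\sqrt{np\ln n})$ and a union bound over the $n$ vertices gives $m\ge\tfrac n2(p-q)-p-\cO(\sqrt{np\ln n})$ with probability $1-\cO(1/n)$. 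On the intersection of the two events, $m\ge\tfrac n2(p-q)-p-\cO(\sqrt{np\ln n})$ exceeds $\|A-\overline A\|=\cO(\sqrt{np})$ once $c$ is a large enough absolute constant, so $\|A-\overline A\|\le m$, $\lambda(B_S)=\lambda^{\ast}$, and $\bw(G)=h(G)$, all with probability $1-\cO(1/n)$.

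The main obstacle is the spectral‑norm estimate $\|A-\mathbb E[A]\|=\cO(\sqrt{np})$: this is the only genuinely hard input, proved in Boppana's treatment by a moment (trace‑method) bound on $A-\mathbb E[A]$, whereas the algebraic identities pinning down $d^{\opt}$ and $\lambda^{\ast}$, the variational reduction to ``$w^{T}Aw\le m$'', and the Chernoff/union bound are routine. It is worth noting where the $\sqrt{\ln n}$ comes from: to control $\min_i(a_i-b_i)$ \emph{uniformly} over all $n$ vertices one must go $\Theta(\sqrt{\ln n})$ standard deviations into the lower tail, and the resulting per‑vertex failure probability $\cO(1/n^{2})$ is exactly what yields the $1-\cO(1/n)$ bound asserted in the theorem.
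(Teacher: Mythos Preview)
The paper does not give its own proof of this statement; it is cited as Boppana's result from~\cite{Boppana1987} and used as a black box. Your reconstruction is correct and is essentially Boppana's original argument: the test vector $d^{\opt}=d^{(y)}+\lambda^{\ast}\myvec{1}$ (in the paper's notation~\eqref{def:dy}), the variational reduction on $S\cap y^{\perp}$ to the inequality $\|A-\mathbb{E}A\|\le\min_i(a_i-b_i)$, the Chernoff bound plus union bound for the minimum, and the trace-method spectral-norm estimate $\|A-\mathbb{E}A\|=\cO(\sqrt{np})$ are exactly the ingredients of Boppana's proof, so there is nothing to compare against within the present paper.
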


From this result one can conclude that the value $h(G)$ 
computed by the algorithm is,  w.h.p.,
equal to the optimal bisection width of $G$.
However, to guarantee that the algorithm works well
one needs additionally to show that it also finds an optimal bisection:

\begin{theorem}\label{thm:boppana:certify}
For random graphs $G$ from  $\cG_{n}(p, q)$,
with $p-q\ge c (\sqrt{p\ln n}/\sqrt{n})$,
Boppana's algorithm certifies the optimality of $h(G)$ 
revealing  
w.h.p.
the bisection vector $\hat x$ of $\cutwidth(\hat x)=h(G)$.
\end{theorem}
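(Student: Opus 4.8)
The plan is to leverage Theorem~\ref{thm:boppana:main}, which already gives us that $h(G)=\bw(G)$ w.h.p., and then argue that the eigenvector-rounding procedure in Step~\ref{bb:alg:step:two} necessarily recovers a genuine minimum bisection under this equality. The starting point is to understand the structure of the optimum $d^\opt$. First I would recall from Boppana's analysis that when $h(G)=\bw(G)$, the KKT/optimality conditions for the concave maximization $\max_d g(G,d)$ force the maximizing $d^\opt$ to have a very rigid form: the matrix $B_S = (A+\diag(d^\opt))_S$ must have the planted-bisection vector $x^\ast$ (the $\pm1$ indicator of $V_1,V_2$) as an eigenvector for its top eigenvalue $\lambda(B_S)$. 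Indeed, concavity of $g$ means that a subgradient-zero condition holds at $d^\opt$; writing this out, the derivative with respect to each $d_i$ involves $(x_i)^2-1$ averaged over top eigenvectors $x$, and for the maximum to equal the bisection width this average must vanish coordinatewise, which happens exactly when the (projected) top eigenspace is spanned by $\pm1$-vectors summing to zero.

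The second step is to pin down the \emph{dimension} of that top eigenspace. The key technical claim is that, w.h.p.\ under $p-q\ge c\sqrt{p\ln n}/\sqrt n$, the largest eigenvalue $\lambda(B_S)$ of $B_S=(A+\diag(d^\opt))_S$ is \emph{simple}, so the top eigenspace is one-dimensional and therefore equals $\mathrm{span}(x^\ast)$. This is where a spectral gap estimate is needed: one compares $B_S$ to its expectation and uses concentration (e.g.\ a bound on $\|A - \mathbb{E}A\|$ of order $\sqrt{np}$ together with the fact that $\mathbb{E}A$ restricted to $S$ has a single relevant eigenvalue separated from the rest by roughly $n(p-q)/2 \gg \sqrt{np\ln n}$) to conclude the second eigenvalue is strictly smaller. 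I expect this spectral-gap argument to be the main obstacle, since it requires controlling the perturbation $\diag(d^\opt)$ as well — but here one uses that $d^\opt$ is itself determined by the optimization and, via the same concentration bounds Boppana already established for Theorem~\ref{thm:boppana:main}, its entries are of lower order than the relevant gap, so it cannot close it.

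Given a simple top eigenvalue with eigenvector $x = x^\ast$ (up to scaling and sign), the rounding in Step~\ref{bb:alg:step:two} is then immediate to analyze. Since $x^\ast\in\{+1,-1\}^n$ with $\sum_i x^\ast_i = 0$, its median $\bar x$ lies strictly between $-1$ and $+1$, so thresholding at the median reproduces $x^\ast$ exactly: $\hat x_i = +1$ iff $x^\ast_i \ge \bar x$ iff $x^\ast_i = +1$. The tie-breaking clause does not trigger because no coordinate equals $\bar x$ and $\sum_i \hat x_i = 0$ already. Hence $\cutwidth(\hat x) = \cutwidth(x^\ast) = \bw(G) = h(G)$ w.h.p., and the algorithm outputs ``optimum bisection.'' One should also handle the benign case where the numerically returned eigenvector $x$ is an arbitrary unit vector in the (one-dimensional) top eigenspace, i.e.\ $x = \gamma x^\ast$ for some $\gamma \ne 0$: scaling does not affect the median split up to a global sign flip of $\hat x$, and the final normalization step ($\sum_i \hat x_i \le 0$) selects a canonical sign, so the conclusion is unaffected.

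Finally, I would note that the only probabilistic ingredients are (i) the event $h(G)=\bw(G)$ from Theorem~\ref{thm:boppana:main}, which holds with probability $1-\cO(1/n)$, and (ii) the spectral-gap event ensuring simplicity of $\lambda(B_S)$, which holds w.h.p.\ by the same family of concentration estimates; intersecting these two events of probability $1-o(1)$ gives the claimed w.h.p.\ guarantee. It is worth remarking that a uniqueness-of-optimal-bisection statement (w.h.p.\ the planted bisection is the \emph{unique} minimum bisection under \eqref{eq:dens:diff}) is what underlies simplicity of the eigenvalue here; on graphs with multiple optimal bisections — such as hypercubes — this step genuinely fails, which is precisely why the modification in Section~\ref{sec:non-unique:bisections} is needed, but for $\cG_n(p,q)$ in the stated regime uniqueness holds and the argument goes through cleanly.
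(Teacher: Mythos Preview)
Your outline matches the paper's own argument almost exactly: invoke Theorem~\ref{thm:boppana:main} for $h(G)=\bw(G)$, show the optimum bisection vector lies in the top eigenspace of $B_S^{\opt}$, argue that w.h.p.\ this eigenspace is one-dimensional (the paper, like you, defers this to a careful re-reading of Boppana's proof, citing \cite{Blumofe1993}), and then observe that the median split of a scalar multiple of $x^\ast$ returns $x^\ast$.

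The one place your reasoning diverges from the paper, and is not quite sound as written, is the justification that $x^\ast$ is a top eigenvector. You derive this from the KKT condition $0\in\partial_d g(G,d^\opt)$, claiming that the subgradient-zero condition ``happens exactly when the top eigenspace is spanned by $\pm1$-vectors.'' That is not what the first-order condition gives: it only says there is a convex combination of squared (projected) unit top eigenvectors whose $i$th entry equals $1/n$ for every $i$, and this condition is satisfied at \emph{every} maximizer $d^\opt$ regardless of whether $h(G)=\bw(G)$, so it cannot by itself place $x^\ast$ in the eigenspace. The paper instead proves this step as Lemma~\ref{lemma:y:largest:eigenvector} by a direct Rayleigh-quotient calculation: for any optimum bisection vector $y$ one has $y^TB_S^{\opt}y/\|y\|^2=(\tsum(B^\opt)-4\bw(G))/n=\lambda(B_S^{\opt})$, hence $y$ is a top eigenvector. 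Replacing your KKT paragraph by this computation (or, equivalently, by the observation that $f(G,d^\opt,x^\ast)=\bw(G)=g'(G,d^\opt)$ so $x^\ast$ attains the constrained extremum) closes the gap, and then the rest of your sketch goes through exactly as in the paper.
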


To prove this theorem one first has to revise carefully the proof  
of Theorem~\ref{thm:boppana:main} in \cite{Boppana1987} 
and show that  w.h.p.
the multiplicity of the 
largest eigenvalue of the matrix $(A+D)_S$ in Step 1 is~1.
This was observed already in \cite{Blumofe1993}.
Next we need  the following property:
\begin{lemma}
Let $G$ be a graph with $h(G)=\bw(G)$ and 
let $d^\opt\in\mathbb{R}^n$ s.\,t. 
$g(G,d^\opt)=\bw(G)$ and $\sum_i d^\opt_i\geq 4\bw(G)-2|E|$. 
Denote further by $B^\opt=A+\diag(d^\opt)$. 
Then every optimum bisection vector $y$ is an eigenvector 
of $B^\opt_S$ corresponding to the largest eigenvalue $\lambda(B^\opt_S)$.
\label{lemma:y:largest:eigenvector}
\end{lemma}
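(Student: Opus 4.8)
The plan is to exploit the obvious lower bound on $\lambda(B_S)$ obtained by feeding a bisection vector into the Rayleigh quotient, to observe that under the two hypotheses this bound is \emph{tight} for $B^\opt_S$ together with an optimum bisection vector $y$, and then to read off from the equality case that $y$ is a top eigenvector of $B^\opt_S$.

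First I would establish, for \emph{every} bisection vector $y$ and \emph{every} $d\in\mathbb{R}^n$ with $B=A+\diag(d)$, the identity
\[
  y^T B_S y \;=\; \tsum(B)\,-\,4\cutwidth(y).
\]
Indeed, $y\in S$ forces $Py=y$, hence $y^TB_Sy=y^TBy=y^TAy+\sum_i d_i y_i^2=y^TAy+\sum_i d_i$ using $y_i^2=1$; moreover $y^TAy=2\sum_{\{i,j\}\in E}y_iy_j=2|E|-4\cutwidth(y)$ straight from the definition of $\cutwidth$; and $\tsum(B)=2|E|+\sum_i d_i$. Since $\|y\|^2=n$ and $y$, being a $\pm1$ vector with $\sum_i y_i=0$, is not a multiple of $\myvec1$ (so $y\in\rwoconst$), the definition of $\lambda$ gives
\[
  \lambda(B_S)\;\ge\;\frac{y^TB_Sy}{\|y\|^2}\;=\;\frac{\tsum(B)-4\cutwidth(y)}{n},
\]
which rearranges to the familiar bound $g(G,d)\le\cutwidth(y)$ (and in particular recovers $h(G)\le\bw(G)$).

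Now I would specialize to $d=d^\opt$ and an arbitrary optimum bisection vector $y$, so that $\cutwidth(y)=\bw(G)$. By hypothesis $g(G,d^\opt)=\bw(G)$, so $\bw(G)=g(G,d^\opt)\le\cutwidth(y)=\bw(G)$ is a chain of equalities; in particular the Rayleigh-quotient inequality above must be an equality, i.e.\ $\lambda(B^\opt_S)=y^TB^\opt_Sy/\|y\|^2$. Next I would invoke the second hypothesis $\sum_i d^\opt_i\ge 4\bw(G)-2|E|$, equivalently $\tsum(B^\opt)\ge 4\bw(G)$: combined with $4\bw(G)=4g(G,d^\opt)=\tsum(B^\opt)-n\lambda(B^\opt_S)$ this yields $\lambda(B^\opt_S)\ge 0$. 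Hence by \eqref{eq:Rayleigh:Quotient} the number $\lambda(B^\opt_S)$ is the largest eigenvalue of $B^\opt_S$ over all of $\mathbb{R}^n$, and the vectors attaining the maximum of its Rayleigh quotient are exactly the corresponding eigenvectors; since $y$ attains this maximum, $y$ is an eigenvector of $B^\opt_S$ for the eigenvalue $\lambda(B^\opt_S)$, which is the claim.

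The computation is short; the only point that needs care is the role of the inequality $\sum_i d^\opt_i\ge 4\bw(G)-2|E|$. Dropping it, one still gets $\lambda(B^\opt_S)=y^TB^\opt_Sy/\|y\|^2$, but with $\lambda(B^\opt_S)$ possibly negative, in which case a maximizer of the Rayleigh quotient over $\rwoconst$ need not be a genuine eigenvector (the supremum could be approached along directions near $\myvec1$). The hypothesis is exactly what pins $\lambda(B^\opt_S)\ge0$ so that \eqref{eq:Rayleigh:Quotient} and its description of the maximizers apply; I expect this to be essentially the only subtlety.
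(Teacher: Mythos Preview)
Your proof is correct and follows essentially the same approach as the paper: compute the Rayleigh quotient of $B^\opt_S$ at an optimum bisection vector $y$, use $g(G,d^\opt)=\bw(G)$ to show it equals $\lambda(B^\opt_S)$, and use the hypothesis $\sum_i d^\opt_i\ge 4\bw(G)-2|E|$ to ensure $\lambda(B^\opt_S)\ge 0$ so that equality in the Rayleigh quotient forces $y$ to be a genuine eigenvector. The paper merely orders the two ingredients the other way around (first deriving $\lambda(B^\opt_S)\ge 0$, then computing the quotient), but the content is identical.
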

(The proof of Lemma~\ref{lemma:y:largest:eigenvector}, as the 
proofs of most of the remaining propositions presented in this paper,
are given in Section~\ref{sec:appendix}.)
This completes the proof that
the algorithm works well 
on random graphs from  $\cG_{n}(p, q)$.

\section{Certifying Non-Unique Optimum Bisections}
\label{sec:non-unique:bisections}
From the previous section we know that if the bound $h(G)$ 
is tight and the bisection of minimum size is unique, or more precisely the multiplicity of
the largest eigenvector of $B_S$ is~1, Boppana's algorithm 
is able to certify the optimality of the resulting bisection.
We say that a graph $G$ has a unique optimum bisection if there 
exists a unique, up to the sign, bisection vector $x$ such that $\cutwidth(x)=\cutwidth(-x)=\bw(G)$.
In this paper we investigate families of graphs, different than random graphs 
$\cG_{n}( p, q)$, for which the Boppana's approach works well. 
To this aim we first need to show a modification which handles cases such that 
$h(G)=\bw(G)$ but for which no unique bisection of minimum size exists.
As we will see later hypercubes satisfy these two conditions.
We present our algorithm below. 
Note that if the multiplicity of  the largest eigenvalue of
$B^\opt_S$ is 1, then the algorithm outputs the same result as in 
the original algorithm by Boppana.

\begin{algorithm}[h]
 \SetKwInOut{Input}{Input}
 \caption{Boppana's Algorithm Certifying Non-Unique Optimum Bisections
        \label{alg:modif:Boppana}}
  \Input{Graph $G$ with adjacency matrix $A$.}
   Perform Step 1 of Algorithm~\ref{alg:Boppana};
   Let $x$ be an eigenvector corresponding to the eigenvalue $\lambda((A+D)_S)$ and 
   let $k$ be the multiplicity of the largest eigenvalue of $(A+D)_S$\; 
   If $k=1$ then construct a bisection vector $\hat x$ by splitting at the median $\bar x$ 
   as in Step 2  of Algorithm~\ref{alg:Boppana};
   Next output $\hat x$ and  
   if $\cutwidth(\hat x) = h(G)$ output ``optimum bisection'' else output ``fail'';
   If  $k>1$ then perform the steps below\;
   Let $M\in\mathbb{R}^{n\times k}$ be the matrix with 
   $k$ linear independent 
   eigenvectors corresponding to this largest eigenvalue;
   Transform the matrix to the reduced column echelon form, i.\,e.\ there are $k$ 
   rows which form an identity matrix, s.t. $M$ still spans the same subspace\;
   Brute force: for every combination of $k$ 
   coefficients from $\{+1,-1\}$ take the linear combination of 
   the $k$ vectors of $M$ with the coefficients and
   verify if the resulting vector $x$ is a bisection vector, i.e. $x\in \{+1,-1\}^n$
   with $\sum_i x_i=0$. If yes and if $\cutwidth(x)=h(G)$ then output $x$ and continue.
   This needs $2^k$ iterations\; \label{mod:boppana:brute:force} 
    If in Step~\ref{mod:boppana:brute:force} no bisection vector $x$
    is given then output ``fail''.
\end{algorithm}

\begin{theorem}
  If $h(G)=\bw(G)$ then  the algorithm above reconstructs all optimal bisections. 
  Every achieved bisection vector corresponds to an optimal bisection.
\label{theorem:bop:modified:works}
\end{theorem}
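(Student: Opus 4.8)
The plan is to deduce the theorem from Lemma~\ref{lemma:y:largest:eigenvector} together with the reduced column echelon form normalization performed in the brute-force step of Algorithm~\ref{alg:modif:Boppana}. First I would check that the vector $d^\opt$ computed by Step~1 of Algorithm~\ref{alg:Boppana} satisfies the hypotheses of Lemma~\ref{lemma:y:largest:eigenvector}: since $d^\opt$ maximizes $g(G,\cdot)$ we have $g(G,d^\opt)=h(G)$, which equals $\bw(G)$ by assumption, and the constraint $\sum_i d^\opt_i=2|E|$ imposed in that step, combined with $\bw(G)\le|E|$, yields $\sum_i d^\opt_i=2|E|\ge 4\bw(G)-2|E|$. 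Hence Lemma~\ref{lemma:y:largest:eigenvector} applies with $B^\opt=A+D$ and shows that \emph{every} optimum bisection vector of $G$ lies in the eigenspace $\mathcal{E}$ of the largest eigenvalue $\lambda((A+D)_S)$. (In fact $\mathcal{E}\subseteq S$: the constraint of Step~1 forces $\lambda((A+D)_S)>0$ while $(A+D)_S\myvec{1}=\myvec{0}$, so the top eigenvectors are orthogonal to $\myvec{1}$.) Soundness of the algorithm --- ``every achieved bisection vector corresponds to an optimal bisection'' --- is then immediate, because a vector is output only after being verified to be a bisection vector with $\cutwidth(x)=h(G)=\bw(G)$, i.e.\ an optimum bisection vector.

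It remains to establish completeness: every optimum bisection vector is actually reconstructed. In the branch $k=1$, some optimum bisection vector $y$ lies in $\mathcal{E}$ and $\dim\mathcal{E}=1$, so $\mathcal{E}=\mathbb{R}y$ and the computed eigenvector $x$ equals $\alpha y$ for some $\alpha\neq 0$; thus $x$ takes exactly the two values $\pm\alpha$, each on $n/2$ coordinates. A short case analysis of the median split in Step~2 of Algorithm~\ref{alg:Boppana} --- the median necessarily lies between $-|\alpha|$ and $|\alpha|$, and when the rebalancing move is triggered it relocates precisely the $n/2$ coordinates attaining the median value --- shows that $\hat x\in\{y,-y\}$, whence $\cutwidth(\hat x)=\cutwidth(y)=\bw(G)=h(G)$ and the algorithm certifies optimality. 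Since $\pm y$ are the only optimum bisection vectors in this case, all optimum bisections are recovered.

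For $k>1$, let $y$ be an arbitrary optimum bisection vector. By the first paragraph $y\in\mathcal{E}$, which is the column space of $M$, so there is a unique $z\in\mathbb{R}^k$ with $y=Mz$ (the columns of $M$ are linearly independent). Because $M$ has been transformed to reduced column echelon form, there are row indices $i_1<\dots<i_k$ whose rows of $M$ form the $k\times k$ identity matrix, so $z_j=(Mz)_{i_j}=y_{i_j}$ for each $j$. As $y\in\{+1,-1\}^n$, this forces $z\in\{+1,-1\}^k$; hence $y=Mz$ is one of the $2^k$ sign combinations enumerated in Step~\ref{mod:boppana:brute:force}, and being a bisection vector with $\cutwidth(y)=\bw(G)=h(G)$ it passes both tests and is output (the combination $-z$ likewise yields $-y$, so both orientations of every optimum bisection appear). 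Together with soundness this proves the theorem.

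The technical content is light once Lemma~\ref{lemma:y:largest:eigenvector} is in hand; I expect the two points needing genuine care to be (i) matching the normalization $\sum_i d^\opt_i=2|E|$ of Step~1 to the inequality $\sum_i d^\opt_i\ge 4\bw(G)-2|E|$ required by the lemma, via $\bw(G)\le|E|$, and (ii) the elementary but convention-sensitive verification that the median split and the subsequent rebalancing in the $k=1$ branch return exactly $\pm y$ and not some other $\{+1,-1\}$ vector. The $k>1$ completeness argument is then a clean consequence of the observation that the reduced column echelon form forces the coefficient vector of any $\{+1,-1\}$ vector lying in $\mathcal{E}$ to itself be a $\{+1,-1\}$ vector.
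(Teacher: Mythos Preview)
Your proof is correct and follows essentially the paper's approach: both rely on Lemma~\ref{lemma:y:largest:eigenvector} to place every optimum bisection vector in the top eigenspace of $B^\opt_S$. You are in fact more explicit than the paper on two points the paper leaves implicit --- that the reduced column echelon form forces any $\{\pm 1\}$-vector in the eigenspace to have $\{\pm 1\}$ coefficients in the basis $M$ (so the $2^k$ enumeration is exhaustive), and the $k=1$ median-split analysis. The one place the paper argues differently is soundness: you simply invoke the algorithm's built-in check $\cutwidth(x)=h(G)$, which is immediate; the paper instead proves the stronger structural fact that \emph{no} non-optimal bisection vector can lie in the eigenspace, via a direct Rayleigh-quotient computation showing that $y^T B^\opt_S y/\|y\|^2=\lambda(B^\opt_S)$ forces $y^T A y$ (equivalently $\cutwidth(y)$) to be extremal. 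Your route is shorter; the paper's yields a clean characterization of exactly which bisection vectors the eigenspace contains.
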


The eigenvalues for the family  of hypercubes are explicitly known \cite{harary1988}.
Hence, we can
verify that the bound $h(G)$ is tight and Boppana's algorithm 
with the modification above works, i.e. finds an optimal bisection. For a hypercube $H_n$ with
$n$ vertices we have $h(H_n)=g(H_n,$ $(2-\log n)\myvec{1})=n/2=\bw(H_n)$.
Since the hypercube with $n$ vertices has $\log n$ optimal bisections and
the largest eigenspace of $B_S$ has multiplicity $\log n$, the brute force part in our
modification of Boppana's algorithm results in a linear factor of $n$ for the overall runtime.
Thus, the algorithm runs in polynomial time.
In the next section we will extend this result to an adversarial model based
on hypercubes and show, that Boppana's algorithm works on that model as well.

\section{Bisections in Adversarial Models}
\label{sec:adversarial:model}

We introduce the \emph{adversarial model}, denoted by $\cA(\cG_n)$,
as a generalization of the semirandom model in the following way.
Let $\cG_n$ be a graph model, i.e. a class of graphs 
with distributions over graphs of $n$ nodes ($n$ even).
In the model $\cA(\cG_n)$, initially a graph $G$ is chosen at random according to
$\cG_n$. Let $(Y_1,Y_2)$
be a fixed, but arbitrary optimal bisection of $G$.
Then, similarly as in \cite{Feige2001}, a monotone adversary is allowed to modify
$G$ by applying an arbitrary sequence of the following monotone transformations:
\begin{compactenum}
  \item The adversary may remove from the graph any edge $\{u, v\}$
crossing a minimal bisection ($u \in Y_1$ and $v \in Y_2$);
  \item The adversary may add to the graph any edge $\{u, v\}$ not crossing
the bisection ($u,v \in Y_1$ or $u,v \in Y_2$).
\end{compactenum}
For example, $\cA(\cG_n(p,q))$ is the semirandom model as defined in \cite{Feige2001}.

We will prove that Boppana's algorithm works well 
for graphs from adversarial model $\cA(\cG_n)$ if the algorithm
works well for $\cG_n$.
First we show that, if the algorithm is able 
to find an optimal bisection size of a graph, we can add edges 
within the same part of an optimum bisection and that 
we can remove cut edges, and the algorithm will still work.
This solves the open question of Feige and Kilian \cite{Feige2001}.

Note that the result follows alternatively from 
Corollary~\ref{corr:feige:is:boppana}
(presented in Section~\ref{sec:comparing:boppana:feige})
that the SDPs  of \cite{Feige2001} are equivalent to Boppana’s optimization function 
and form the property proved in \cite{Feige2001} that 
the objective function of the dual SDP of Feige and Kilian
preserves minimal bisection regardless of monotone transformations. 
The aim of this section is to give a direct proof of this property for Boppana's algorithm.

\begin{theorem}
Let $G=(V,E)$ be a graph with $h(G)=\bw(G)$. Consider some optimum bisection
$Y_1,Y_2$ of $G$.
\begin{compactenum}
\item Let $u$ and $v$ be two vertices within the same part, i.e. $u,v\in Y_1$ or $u,v\in Y_2$,
  and let $G'=(V, E\cup \{\{u,v\}\})$. Then $h(G')=\bw(G')$.
\item Let $u$ and $v$ be two vertices in different parts, i.e. $u\in Y_1$ and $v\in Y_2$, with $\{\{u,v\}\}\in E$
  and let $G'=(V, E\setminus \{\{u,v\}\})$. Then $h(G')=\bw(G)-1=\bw(G')$.
\end{compactenum}
  \label{theorem:add:edge:within:part}
  \label{theorem:remove:cut:edge}
\end{theorem}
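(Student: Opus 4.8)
The plan is to verify both items by exhibiting an explicit witness $d'$ for the new graph $G'$ that certifies $h(G')$ equals the claimed bisection width, and to bound $h(G')$ from above using the generic inequality $h \le \bw$ together with the effect of the single edge modification on the bisection width. Throughout, I will use Lemma~\ref{lemma:y:largest:eigenvector}: since $h(G)=\bw(G)$, there is a vector $d^\opt$ with $g(G,d^\opt)=\bw(G)$ and $\sum_i d^\opt_i$ large enough, and then the optimum bisection vector $y$ corresponding to $(Y_1,Y_2)$ is an eigenvector of $B^\opt_S=(A+\diag(d^\opt))_S$ for its largest eigenvalue $\lambda(B^\opt_S)$. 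I will keep $y$ fixed as this distinguished eigenvector for the rest of the argument.

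For item~1 (adding an edge $\{u,v\}$ inside one part): write $A' = A + E_{uv}$ where $E_{uv}$ is the adjacency matrix of the single new edge, so $B' := A' + \diag(d^\opt) = B^\opt + E_{uv}$. Since $u,v$ lie in the same part, $y_u y_v = 1$, hence $y^T E_{uv} y = 2 y_u y_v = 2$, and also $\tsum(E_{uv}) = 2$; moreover $\cutwidth(y)$ is unchanged, so $y$ still represents an optimum-sized bisection of $G'$ with $\cutwidth_{G'}(y) = \bw(G)$. The key computation is to show $y$ remains a largest-eigenvalue eigenvector of $B'_S$. Using $Py = y$ (as $y \in S$), we get $B'_S y = B^\opt_S y + P E_{uv} P y = \lambda(B^\opt_S) y + P E_{uv} y$; here $E_{uv} y = y_v \mathbf{e}_u + y_u \mathbf{e}_v = y_u(\mathbf{e}_u + \mathbf{e}_v)$ since $y_u = y_v$, and one checks this projects to something proportional to... actually the cleanest route is the variational one: for every $x \in S$ with $\|x\|^2 = n$ we have $x^T B'_S x = x^T B^\opt_S x + (x_u + x_v)^2 \ge x^T B^\opt_S x \ge \,$(lower bound), so adding the edge only increases the quadratic form, while $\tsum$ increases by exactly $2$. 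Plugging into $g(G',d^\opt) = (\tsum(B') - n\lambda(B'_S))/4$ and tracking both increments, together with the matching upper bound $h(G') \le \bw(G') = \bw(G)$ (the bisection width cannot increase when adding an intra-part edge, and it cannot decrease below $\bw(G)$ either since any bisection of $G'$ is a bisection of $G$ of the same or larger cut), pins down $h(G') = \bw(G')$. The delicate point is confirming that $\lambda(B'_S)$ does not strictly increase — i.e.\ that the new quadratic form is still maximized at $y$ — which I expect to follow because $(x_u+x_v)^2$ vanishes precisely on the eigenvector direction $y$ (as $y_u + y_v = \pm 2 \ne 0$... no), so in fact one must argue more carefully, perhaps by showing $g(G',\,d^\opt + \text{correction})$ with a small diagonal adjustment at $u,v$ recovers the bound; this reconciliation of the diagonal term is the main obstacle.

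For item~2 (removing a cut edge $\{u,v\}$): now $y_u y_v = -1$, so deleting the edge decreases $\cutwidth_{G}(y)$ by exactly $1$, giving a bisection of $G'$ of width $\bw(G) - 1$; and one argues $\bw(G') = \bw(G)-1$ because no bisection of $G$ has width below $\bw(G)$ so $\bw(G') \ge \bw(G)-1$, while $(Y_1,Y_2)$ achieves $\bw(G)-1$ in $G'$. For the lower bound $h(G') \ge \bw(G)-1$, I again take $B' = B^\opt - E_{uv}$ and compute $x^T B'_S x = x^T B^\opt_S x - (x_u - x_v)^2$; now $(y_u - y_v)^2 = 4$, so the quadratic form at $y$ drops, and I need the diagonal witness to compensate. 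The natural move is to adjust $d^\opt$ at coordinates $u$ and $v$: set $d' = d^\opt + t(\mathbf{e}_u + \mathbf{e}_v)$ for a suitable $t$, so that $B' = A' + \diag(d') = B^\opt - E_{uv} + t(\mathbf{e}_u \mathbf{e}_u^T + \mathbf{e}_v \mathbf{e}_v^T)$, chosen so that $y$ is again an eigenvector for the largest eigenvalue; the rank-one/rank-two perturbation $-E_{uv} + t(\ldots)$ can be arranged (taking $t = 1$ makes $-E_{uv} + \mathbf{e}_u\mathbf{e}_u^T + \mathbf{e}_v\mathbf{e}_v^T = (\mathbf{e}_u - \mathbf{e}_v)(\mathbf{e}_u - \mathbf{e}_v)^T \succeq 0$) to be positive semidefinite and to annihilate $y$ up to the projection, so that $\lambda(B'_S) \le \lambda(B^\opt_S)$ is preserved with $y$ still optimal, while $\tsum$ changes by $-2 + 2t = 0$ when $t=1$ — wait, that gives the wrong shift, so the correct $t$ must be tuned to make $\tsum(B') = 4(\bw(G)-1) + \text{(sum constraint)}$ work out; balancing the $\tsum$ decrement against the $n\lambda$ change to land exactly on $\bw(G)-1$ is the crux. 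I would finish by combining the two inequalities $h(G') \ge \bw(G)-1$ and $h(G') \le \bw(G') = \bw(G)-1$. The main obstacle in both parts is the same: choosing the diagonal perturbation so that the perturbed matrix $(A'+\diag(d'))_S$ still has $y$ among its top eigenvectors with the eigenvalue shifted by exactly the right amount to keep $g$ equal to the new bisection width, rather than merely bounding it; I expect this to come down to a clean rank-$\le 2$ PSD decomposition of the combined edge-plus-diagonal perturbation.
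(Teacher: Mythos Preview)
Your overall plan is the paper's approach: take the witness $d^\opt$ for $G$, perturb it only at coordinates $u,v$, and show that the combined edge-plus-diagonal change is a rank-one matrix that (i) is sign-definite so $\lambda$ does not move the wrong way and (ii) vanishes at the optimum bisection vector $y$ so the bound stays tight. But you have not nailed the perturbation, and the direction you are heading is inverted. First, two quadratic-form computations are off: with no diagonal change, $x^T E_{uv}x = 2x_ux_v$, not $(x_u+x_v)^2$, and $-x^T E_{uv}x = -2x_ux_v$, not $-(x_u-x_v)^2$. Second, and more seriously, in Part~2 you try $t=+1$, which makes the total perturbation $-E_{uv}+\mathbf e_u\mathbf e_u^{T}+\mathbf e_v\mathbf e_v^{T}=(\mathbf e_u-\mathbf e_v)(\mathbf e_u-\mathbf e_v)^{T}\succeq 0$. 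A PSD perturbation can only push $\lambda(B'_S)$ \emph{up}, hence $g(G',d')$ \emph{down}; and it does not vanish at $y$ since $(y_u-y_v)^2=4$. So this choice yields at best $g(G',d')\le \bw(G)-1$, the wrong inequality. Your closing remark that everything should reduce to a ``PSD decomposition'' reflects the same sign confusion.

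The fix is to take $t=-1$ in \emph{both} parts, i.e.\ $d'=d^\opt-\mathbf e_u-\mathbf e_v$. Then in Part~1 the combined change is $E_{uv}-\mathbf e_u\mathbf e_u^{T}-\mathbf e_v\mathbf e_v^{T}=-(\mathbf e_u-\mathbf e_v)(\mathbf e_u-\mathbf e_v)^{T}\preceq 0$, and in Part~2 it is $-E_{uv}-\mathbf e_u\mathbf e_u^{T}-\mathbf e_v\mathbf e_v^{T}=-(\mathbf e_u+\mathbf e_v)(\mathbf e_u+\mathbf e_v)^{T}\preceq 0$. In each case the perturbation is negative semidefinite, so $\lambda(B'_S)\le\lambda(B^\opt_S)$ (which one may normalize to $0$), and it annihilates $y$ exactly because $y_u=y_v$ in Part~1 and $y_u=-y_v$ in Part~2; hence $\lambda(B'_S)=\lambda(B^\opt_S)$. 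The $\tsum$ bookkeeping then gives $g(G',d')=\bw(G)$ in Part~1 and $g(G',d')=\bw(G)-1$ in Part~2, matching the upper bounds $h(G')\le\bw(G')$ you already argued.
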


\begin{proof}[Sketch of proof]
In order to prove the first part of the theorem, i.e. when  
we add an edge~$\{u,v\}$, let $A$ and $A'$ denote the adjacency matrices of $G$ and $G'$, 
respectively. It holds $A'=A+A^\Delta$ with $A^\Delta_{uv}=A^\Delta_{vu}=1$ and zero everywhere else. 
The main idea is now, that we can derive a new optimal correction vector $d'$ for $G'$ based on the optimal
correction vector $d^\opt$ for $G$. We set $d'=d^\opt+d^\Delta$ with
\[ 
  d^\Delta_i=\begin{cases}
  -1 & \text{ if } i=u \text{ or } i=v,\\
  0 & \text{ else.}
  \end{cases}
\]

The known changes in the adjacency matrix as well as the derived correction vector allow us to compute $g(G',d')$
and to show that $g(G',d')=\bw(G')$. The proof of the second part of the theorem works analogously.
The complete proof can be found in the appendix.
\end{proof}

\begin{theorem}
  If Boppana's algorithm finds a minimum bisection for a graph model $\cG_n$ w.h.p., then
  it finds a minimum bisection w.h.p. for the adversarial model $\cA(\cG_n)$, too.
\label{theorem:boppana:adversary}
\end{theorem}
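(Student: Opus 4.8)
The plan is to reduce the statement to Theorem~\ref{theorem:add:edge:within:part} by a straightforward induction on the number of moves made by the monotone adversary, and then to turn the resulting equality $h=\bw$ into an actual reconstruction of a minimum bisection via Theorem~\ref{theorem:bop:modified:works}. First I would isolate the only probabilistic ingredient: saying that Boppana's algorithm finds a minimum bisection for $\cG_n$ w.h.p.\ means that, with probability tending to $1$ over the random choice of $G\sim\cG_n$, the algorithm outputs a bisection vector $\hat x$ with $\cutwidth(\hat x)=h(G)$, and since $\cutwidth(\hat x)\ge\bw(G)\ge h(G)$ always, this forces $h(G)=\bw(G)$. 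Hence it suffices to prove the \emph{deterministic} claim: whenever $G$ satisfies $h(G)=\bw(G)$ and $G'$ is obtained from $G$ by an arbitrary sequence of the monotone transformations with respect to a fixed optimum bisection $(Y_1,Y_2)$ of $G$, Boppana's algorithm in the form of Algorithm~\ref{alg:modif:Boppana} reconstructs an optimal bisection of $G'$.

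Write $G=G_0,G_1,\dots,G_m=G'$ for the intermediate graphs. I would prove by induction on $i$ the invariant ``$h(G_i)=\bw(G_i)$ and $(Y_1,Y_2)$ is an optimum bisection of $G_i$''. The base case $i=0$ is the hypothesis. For the inductive step, if $G_{i+1}$ arises from $G_i$ by deleting a cut edge $\{u,v\}$ of $(Y_1,Y_2)$, then Theorem~\ref{theorem:add:edge:within:part}(2) yields $h(G_{i+1})=\bw(G_i)-1=\bw(G_{i+1})$, and $(Y_1,Y_2)$, whose cut decreased by exactly one, is again optimal. If $G_{i+1}$ arises by adding an edge inside $Y_1$ or inside $Y_2$, then Theorem~\ref{theorem:add:edge:within:part}(1) yields $h(G_{i+1})=\bw(G_{i+1})$; moreover the cut of $(Y_1,Y_2)$ is unchanged, equal to $\bw(G_i)$, while $\bw(G_{i+1})\ge\bw(G_i)$ because any bisection cuts at least as many edges in the supergraph $G_{i+1}$ as in $G_i$, so $\bw(G_{i+1})=\bw(G_i)$ and $(Y_1,Y_2)$ stays optimal. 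This closes the induction, and in particular $h(G')=\bw(G')$.

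Finally, I would apply Theorem~\ref{theorem:bop:modified:works} to $G'$: as $h(G')=\bw(G')$, Algorithm~\ref{alg:modif:Boppana} reconstructs an optimal bisection of $G'$ and certifies it, i.e.\ it finds a minimum bisection of $G'$. Since this conclusion holds for \emph{every} adversary as soon as the initial draw satisfies $h(G)=\bw(G)$ — an event that depends only on $G\sim\cG_n$ and has probability tending to $1$ — Boppana's algorithm finds a minimum bisection w.h.p.\ on $\cA(\cG_n)$, uniformly over (possibly adaptive) adversaries.

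I do not expect a real obstacle in this argument: all the analytic content sits in Theorem~\ref{theorem:add:edge:within:part}, which we are allowed to assume. The only point that needs care is the bookkeeping in the induction, namely checking that the \emph{same} bisection $(Y_1,Y_2)$ fixed by the adversary remains an optimum bisection after each move, so that the hypotheses of Theorem~\ref{theorem:add:edge:within:part} are available at the next step; this is exactly where one uses monotonicity of the bisection width under edge additions and the explicit ``$-1$'' in part~(2). (If in addition one wants a polynomial running time, one needs a bound on the multiplicity $k$ of the top eigenvalue of $(A'+\diag(d'))_S$, so that the brute-force step of Algorithm~\ref{alg:modif:Boppana} costs only $2^k=\mathrm{poly}(n)$ iterations; for correctness, which is what the statement asserts, this is not required.)
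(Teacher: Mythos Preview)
Your proposal is correct and follows exactly the approach the paper intends: the paper does not spell out a separate proof of Theorem~\ref{theorem:boppana:adversary} but treats it as an immediate consequence of Theorem~\ref{theorem:add:edge:within:part} via the obvious induction on adversary moves, together with Theorem~\ref{theorem:bop:modified:works} for the reconstruction step. Your careful tracking of the invariant that $(Y_1,Y_2)$ remains an optimum bisection after each move is precisely the bookkeeping needed to keep the hypotheses of Theorem~\ref{theorem:add:edge:within:part} available throughout, and your remark on the multiplicity~$k$ and running time is a valid side observation that the paper leaves implicit.
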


As a direct consequence, we obtain the following corollary regarding the semirandom graph
model considered by Feige and Kilian:

\begin{corollary}\label{corr:Bopp:works:for:semi:random}
Under assumption \eqref{eq:dens:diff} on $p$ and $q$, Boppana's algorithm 
computes the minimum bisection in  $\cA(\cG_n(p,q))$, i.e. in the semirandom model,  w.h.p.
\end{corollary}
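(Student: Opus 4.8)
The plan is to obtain this statement purely by assembling results already established in the excerpt: the known good behaviour of Boppana's algorithm on the plain planted model is fed into the transfer principle of Theorem~\ref{theorem:boppana:adversary}. First I would recall that, by Theorem~\ref{thm:boppana:main}, a random graph $G$ drawn from $\cG_n(p,q)$ with $p-q\ge c(\sqrt{p\ln n}/\sqrt{n})$ satisfies $h(G)=\bw(G)$ with probability $1-\cO(1/n)$; and that by Theorem~\ref{thm:boppana:certify} — together with the eigenvalue-multiplicity observation of \cite{Blumofe1993} and Lemma~\ref{lemma:y:largest:eigenvector} — the algorithm in fact outputs a bisection vector $\hat x$ with $\cutwidth(\hat x)=h(G)$ and certifies it, again w.h.p. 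In other words, Boppana's algorithm ``finds a minimum bisection'' (in the strong, certifying sense) for the model $\cG_n=\cG_n(p,q)$ w.h.p., which is precisely the hypothesis required by Theorem~\ref{theorem:boppana:adversary}.

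Next I would instantiate Theorem~\ref{theorem:boppana:adversary} with $\cG_n=\cG_n(p,q)$; its conclusion is exactly that Boppana's algorithm finds a minimum bisection w.h.p.\ for $\cA(\cG_n(p,q))$. It then remains only to observe that $\cA(\cG_n(p,q))$ is, by the definition given in Section~\ref{sec:adversarial:model}, identical to the semirandom model of Feige and Kilian \cite{Feige2001}: one starts from $G\sim\cG_n(p,q)$, fixes an optimal bisection $(Y_1,Y_2)$ — which under \eqref{eq:dens:diff} coincides w.h.p.\ with the planted bisection — and then allows the monotone adversary to delete cut edges and insert non-cut edges. Chaining these observations yields the corollary.

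There is no genuine obstacle here; the only two points needing care are interpretive. First, the hypothesis of Theorem~\ref{theorem:boppana:adversary} must be read in the certifying sense, so that Theorem~\ref{thm:boppana:certify}, rather than merely Theorem~\ref{thm:boppana:main}, is the correct input. Second, one should check that the ``w.h.p.'' in the conclusion is uniform over all monotone adversary strategies; this is automatic because the proof of Theorem~\ref{theorem:boppana:adversary} (via Theorem~\ref{theorem:add:edge:within:part}) argues deterministically that the good event for $G$ — namely $h(G)=\bw(G)$ together with recoverability — is preserved under every single monotone transformation, so no union bound over the (exponentially many) possible adversary strategies is incurred, and the failure probability for $\cA(\cG_n(p,q))$ is bounded by that for $\cG_n(p,q)$. (Alternatively, the same conclusion follows from Corollary~\ref{corr:feige:is:boppana} combined with the monotonicity of the Feige--Kilian objective $h_p$, but the route through Theorem~\ref{theorem:boppana:adversary} is the self-contained one.)
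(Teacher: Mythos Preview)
Your proposal is correct and follows exactly the route the paper takes: the corollary is stated as ``a direct consequence'' of Theorem~\ref{theorem:boppana:adversary}, instantiated with $\cG_n=\cG_n(p,q)$ using Theorems~\ref{thm:boppana:main} and~\ref{thm:boppana:certify} to verify its hypothesis. Your additional remarks on the certifying sense and on uniformity over adversary strategies are accurate elaborations of points the paper leaves implicit.
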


In \cite{Boppana1987}, Boppana also considers random regular graphs $\cR_{n}(r,b)$, where
a graph is chosen uniformly over the set of all $r$-regular graphs with bisection width~$b$.
He shows that his algorithm works w.h.p. on this graph under the assumption that
$b=o(n^{1-1/\lfloor (r+1)/2\rfloor})$. We can now define the semirandom regular graph model as adversarial model $\cA(\cR_{n}(r,b))$.
 Applying Theorem~\ref{theorem:boppana:adversary}, we obtain

\begin{corollary}
Under assumption \eqref{eq:dens:diff}  on $p$ and $q$, Boppana's algorithm 
computes the minimum bisection in the semirandom regular model  w.h.p.
\end{corollary}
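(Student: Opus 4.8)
The plan is to obtain the corollary as an immediate instance of Theorem~\ref{theorem:boppana:adversary} applied with the base model $\cG_n := \cR_n(r,b)$. The first step is to recall the result of Boppana quoted just before the corollary: under the hypothesis \eqref{eq:rand:reg:assumption} on $r$ and $b$, Boppana's algorithm finds and certifies a minimum bisection w.h.p.\ on $\cR_n(r,b)$. In the terminology used here this means that for $G$ drawn from $\cR_n(r,b)$ one has $h(G)=\bw(G)=b$ w.h.p.; moreover, exactly as in the passage from Theorem~\ref{thm:boppana:main} to Theorem~\ref{thm:boppana:certify}, one verifies that w.h.p.\ the largest eigenvalue of $(A+D)_S$ from Step~1 has multiplicity one, so that by Lemma~\ref{lemma:y:largest:eigenvector} the algorithm actually outputs a bisection vector $\hat x$ with $\cutwidth(\hat x)=h(G)$. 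Thus ``Boppana's algorithm finds a minimum bisection for $\cR_n(r,b)$ w.h.p.'' in precisely the sense required by the hypothesis of Theorem~\ref{theorem:boppana:adversary}.

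The second step is then a direct application of Theorem~\ref{theorem:boppana:adversary} with this choice of $\cG_n$. Since $\cA(\cR_n(r,b))$ is by definition the semirandom regular model, the theorem yields at once that Boppana's algorithm computes the minimum bisection w.h.p.\ on the semirandom regular model, which is exactly the claim.

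I do not anticipate a genuine obstacle, as the statement is a one-line corollary of the machinery already in place. The only point that needs care is that the hypothesis of Theorem~\ref{theorem:boppana:adversary} refers to the algorithm \emph{outputting} a minimum-size bisection vector, and not merely to the numerical equality $h(G)=\bw(G)$; this gap is the regular-graph analogue of what Theorem~\ref{thm:boppana:certify} establishes for $\cG_n(p,q)$, and it is closed in the same manner, namely by showing simplicity of the top eigenvalue of $(A+D)_S$ w.h.p.\ for random $r$-regular graphs of bisection width $b$. (I would also note that the parameter condition invoked here is \eqref{eq:rand:reg:assumption} on $r$ and $b$, in place of \eqref{eq:dens:diff} on $p$ and $q$.)
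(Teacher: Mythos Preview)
Your proposal is correct and matches the paper's intended argument: the corollary is obtained by instantiating Theorem~\ref{theorem:boppana:adversary} with $\cG_n=\cR_n(r,b)$, using Boppana's result that his algorithm works w.h.p.\ on $\cR_n(r,b)$ under the regular-model assumption. Your parenthetical observation is also apt: the hypothesis should indeed be \eqref{eq:rand:reg:assumption} on $r$ and $b$ rather than \eqref{eq:dens:diff} on $p$ and $q$; the reference in the stated corollary appears to be a typographical slip.
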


Theorem~\ref{theorem:boppana:adversary} can also be applied on deterministic graph classes,
e.g. the class of hypercubes. We then obtain:
\begin{corollary}
  Boppana's algorithm (with our modification for non-unique bisections) finds an optimal bisection
  on adversarial modified hypercubes.
\end{corollary}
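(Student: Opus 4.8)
The plan is to assemble three facts that are already available: the bound is tight for the pristine hypercube, $h(H_n)=\bw(H_n)=n/2$; tightness of the bound is preserved under the monotone moves of the adversary (Theorem~\ref{theorem:add:edge:within:part}); and the modified algorithm succeeds on every graph for which the bound is tight (Theorem~\ref{theorem:bop:modified:works}). Correctness will then be immediate, and essentially all the work is in the running-time bound.

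First I would fix a coordinate cut $(Y_1,Y_2)$ of $H_n$ as the reference optimum bisection and let $G$ be an adversarially modified hypercube, i.e.\ a graph obtained from $H_n$ by an arbitrary finite sequence of the adversary's moves relative to $(Y_1,Y_2)$. I would then prove, by induction on the length of this sequence, that after every move the current graph $\tilde G$ still satisfies $h(\tilde G)=\bw(\tilde G)$ and that $(Y_1,Y_2)$ is still an optimum bisection of $\tilde G$; this keeps Theorem~\ref{theorem:add:edge:within:part} applicable at the next step. The base case is the known identity $h(H_n)=\bw(H_n)=n/2$. For the step, if the next move adds an edge inside $Y_1$ or $Y_2$, the first part of Theorem~\ref{theorem:add:edge:within:part} gives $h(\tilde G')=\bw(\tilde G')$, and $(Y_1,Y_2)$ stays optimum since neither its cut value nor $\bw$ changes; if the move deletes a cut edge, the second part of Theorem~\ref{theorem:remove:cut:edge} gives $h(\tilde G')=\bw(\tilde G')=\bw(\tilde G)-1$, and $(Y_1,Y_2)$ stays optimum because its cut value drops by exactly one as well. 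Once $h(G)=\bw(G)$ holds for the final graph $G$, Theorem~\ref{theorem:bop:modified:works} directly yields that the modified algorithm outputs a bisection vector $\hat x$ with $\cutwidth(\hat x)=h(G)=\bw(G)$ and certifies it; in fact it recovers all optimum bisections of $G$. This is, in effect, the argument behind Theorem~\ref{theorem:boppana:adversary}, specialised to the deterministic one-element ``model'' $\{H_n\}$ and run with the non-unique variant of the algorithm.

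For a polynomial running time the quantity to control is the multiplicity $k$ of the largest eigenvalue of $(A+D)_S$ in Step~1, because the brute-force step performs $2^k$ iterations. Using the correction vector $d'=d^\opt+d^\Delta$ exhibited in the proof of Theorem~\ref{theorem:add:edge:within:part}, each adversary move perturbs $(A+D)_S$ by a symmetric rank-one matrix: an internal-edge addition contributes the block $\begin{pmatrix}-1 & 1\\ 1 & -1\end{pmatrix}$ on $\{u,v\}$, which is negative semidefinite of rank one, while (by the symmetric construction) a cut-edge deletion contributes a positive semidefinite rank-one block and strictly raises the top eigenvalue (the value of $g$ drops by one while $\tsum(A+D)$ is unchanged). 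Rank-one eigenvalue interlacing then finishes the count: a negative rank-one perturbation cannot enlarge the top eigenspace, so as long as only internal edges have been added $k$ stays at its initial value $\log n$ or below; and once a cut edge has been deleted, the strict increase of the top eigenvalue forces, by interlacing, $k=1$, and it stays $1$ thereafter. Hence $2^k\le n$ throughout, and the algorithm runs in polynomial time. (Shifting $d$ by a multiple of $\myvec 1$ changes neither $g$ nor these multiplicities, so the normalisation $\sum_i d_i=2|E|$ in Step~1 is harmless for the count.)

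The step I expect to be the main obstacle is making this running-time argument fully rigorous. Step~1 may return \emph{any} maximiser of the concave function $g$ satisfying $\sum_i d_i=2|E|$, and a priori distinct maximisers could have top eigenspaces of different dimensions, so bounding $k$ for the particular $d'$ built in the induction does not by itself bound the $k$ the algorithm actually encounters. I would close this gap either by showing that the multiplicity of the top eigenvalue of $(A+\diag(d))_S$ is the same for all maximisers $d$ of $g$ meeting the normalisation -- which should follow from the convexity of the set of maximisers together with Lemma~\ref{lemma:y:largest:eigenvector} -- or else by checking that the column-echelon and brute-force steps are insensitive to the particular maximiser chosen. The tightness half of the argument, by contrast, is a short induction with no hidden difficulties.
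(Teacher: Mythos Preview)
Your correctness argument is exactly what the paper intends: the corollary is a direct specialisation of Theorem~\ref{theorem:boppana:adversary} to the one-element class $\{H_n\}$, and your induction via Theorem~\ref{theorem:add:edge:within:part} together with Theorem~\ref{theorem:bop:modified:works} is precisely the content of that theorem unpacked. Nothing further is needed for the corollary as stated, since it asserts only that an optimal bisection is found, not that this happens in polynomial time; the paper's polynomial-time remark at the end of Section~\ref{sec:non-unique:bisections} refers only to the pristine hypercube.

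Your running-time analysis, on the other hand, contains a concrete error. For a cut-edge deletion the correction $d^\Delta$ has $d^\Delta_u=d^\Delta_v=-1$ and $A^\Delta_{uv}=A^\Delta_{vu}=-1$, so the perturbation of $B$ on the $\{u,v\}$ block is $\begin{pmatrix}-1&-1\\-1&-1\end{pmatrix}=-(e_u+e_v)(e_u+e_v)^T$, which is \emph{negative} semidefinite of rank one, not positive; and $\tsum(A+D)$ drops by $4$ so that, as the appendix proof of Theorem~\ref{theorem:remove:cut:edge} shows, the top eigenvalue of $B'_S$ stays at $0$ rather than increasing. Hence the ``$k=1$ after any cut-edge deletion'' claim is unfounded. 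The good news is that both types of move are rank-one negative semidefinite perturbations of $B_S$ leaving the top eigenvalue at $0$, and Cauchy interlacing then gives that the multiplicity of the top eigenvalue cannot increase, so $k\le\log n$ throughout --- provided you address the point you yourself flagged, that Step~1 may return a different maximiser $d$ than the inductively constructed one.
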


\section{The Limitations of the Algorithm}
\label{sec:limitations}

Boppana shows, that his algorithm works well on some classes of random graphs. 
However, we do not know which graph properties force the algorithm to fail. 
For example, for the considered planted bisection model, 
we require a small bisection width. On the
other hand, as we have seen in Section \ref{sec:non-unique:bisections}  
Boppana's algorithm works for the hypercubes 
and their semirandom modifications \textendash\ graphs that 
have large minimum bisection sizes.

In the following, we present newly discovered structural properties from inside the algorithm,
which provide a framework for a better analysis of the algorithm itself. 
Let $y$ be a bisection vector of $G$. We define
\begin{equation} \label{def:dy}
  d^{(y)}=-\diag(y)Ay.
\end{equation}
An equivalent but more intuitive characterization of $d^{(y)}$ is the following:
$d^{(y)}_i$ is the difference between the number of adjacent vertices in other partition as vertex $i$
and the number of adjacent vertices in same partition as $i$.

\begin{lemma}
  Let $G$ be a graph with $h(G)=\bw(G)$ and assume there is more than one optimum bisection in $G$.
  Then  (up to  constant translation vectors  $c\myvec{1}$)  there exists a 
  unique vector $d^\opt$ with $g(G,d^\opt)=\bw(G)$. 
  Additionally, for every  bisection vector $y$ of an arbitrary optimum bisection in $G$
  there exists a unique
   $\alpha^{(y)}$
  and the corresponding $d^{(y)}$, with  $g(G,d^{(y)}+\alpha^{(y)} y)=\bw(G)$.
  \label{lemma:dopt:unique}
\end{lemma}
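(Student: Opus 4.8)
The plan is to exploit the concavity of $g(G,\cdot)$ together with the characterization of optimum bisection vectors as eigenvectors from Lemma~\ref{lemma:y:largest:eigenvector}. First I would recall that $g(G,d)$ depends on $d$ only through the quantity $\tsum(\diag(d)) = \sum_i d_i$ and the eigenvalue $\lambda((A+\diag(d))_S)$; moreover, since $P\myvec{1}=\myvec{0}$, replacing $d$ by $d+c\myvec{1}$ changes $(A+\diag(d))_S$ not at all and changes $\tsum$ by $cn$, so after fixing the normalization $\sum_i d_i = 4\bw(G)-2|E|$ (as in Step~1 of the algorithm, which guarantees $\lambda>0$) the value $g(G,d)$ becomes a function of the projected vector $Pd$ alone. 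The set of maximizers of the concave function $g(G,\cdot)$ restricted to the affine hyperplane $\{\sum_i d_i = 4\bw(G)-2|E|\}$ is convex; to get uniqueness I must rule out a segment of maximizers.

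The key step is the following: suppose $d^{(1)}$ and $d^{(2)}$ are two distinct maximizers (modulo $c\myvec{1}$), so $g(G,d^{(1)})=g(G,d^{(2)})=\bw(G)$. Fix two \emph{distinct} optimum bisection vectors $y,z$ (they exist by hypothesis, and $z\ne\pm y$). By Lemma~\ref{lemma:y:largest:eigenvector}, $y$ and $z$ are both eigenvectors of $B^{(k)}_S := (A+\diag(d^{(k)}))_S$ for the largest eigenvalue, for $k=1,2$. Writing out the eigenvector equation $PBPy = \lambda y$ and using $Py=y$ (since $y\in S$), we get $P(A+\diag(d^{(k)}))y = \lambda^{(k)} y$, i.e. $A y + \diag(d^{(k)})y - \frac{1}{n}J(Ay+\diag(d^{(k)})y) = \lambda^{(k)} y$. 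Since $\diag(d^{(k)})y$ has $i$-th coordinate $d^{(k)}_i y_i$ and $y_i=\pm1$, and since $\tsum$ and $\lambda^{(k)}$ are pinned by $g(G,d^{(k)})=\bw(G)$ together with the normalization, both $\lambda^{(1)}=\lambda^{(2)}=:\lambda$ and the sum terms coincide; subtracting the two eigenvector equations yields $\diag(d^{(1)}-d^{(2)})\,y = \tfrac{1}{n}J\,\diag(d^{(1)}-d^{(2)})\,y$, hence $(d^{(1)}_i - d^{(2)}_i)\,y_i$ is the same constant $\gamma$ for every $i$; doing the same with $z$ gives $(d^{(1)}_i - d^{(2)}_i)\,z_i = \gamma'$ for all $i$. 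Dividing (using $y_i,z_i\in\{\pm1\}$) forces $y_i/z_i$ to be constant over all $i$ on which $d^{(1)}_i\ne d^{(2)}_i$; combined with the fact that a convex combination $d_t = (1-t)d^{(1)}+td^{(2)}$ is \emph{also} a maximizer for all $t\in[0,1]$ (by concavity) and the same argument applies to each $d_t$, one pushes this to conclude $d^{(1)}-d^{(2)}$ is a multiple of $\myvec{1}$, i.e.\ uniqueness modulo $c\myvec{1}$. The existence and uniqueness of $\alpha^{(y)}$ with $g(G,d^{(y)}+\alpha^{(y)}y)=\bw(G)$ then follows by identifying this unique $d^\opt$ with $d^{(y)}+\alpha^{(y)}y$: one checks $d^{(y)} = -\diag(y)Ay$ satisfies $P(A+\diag(d^{(y)}))y = (\text{scalar})\,y$ modulo a shift along $y$ and $\myvec{1}$, pins $\alpha^{(y)}$ by the normalization constraint $\sum_i (d^{(y)}_i + \alpha^{(y)} y_i) = 4\bw(G)-2|E|$, noting $\sum_i y_i = 0$ makes $\alpha^{(y)}$ instead determined by matching the full eigen-equation rather than the trace.

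I would organize the write-up as: (1) reduce to the affine slice and invoke concavity to get that the maximizer set is a convex set $\mathcal{D}$; (2) show every $d\in\mathcal{D}$ has $y,z$ (and in fact all optimum bisection vectors) as top eigenvectors of $B_S$ via Lemma~\ref{lemma:y:largest:eigenvector}; (3) derive the coordinatewise relation $(d^{(1)}_i-d^{(2)}_i)y_i = \mathrm{const}$ from subtracting eigen-equations; (4) use two \emph{distinct} bisection vectors to eliminate the remaining freedom and conclude $\mathcal{D}$ is a single point mod $c\myvec{1}$; (5) read off $\alpha^{(y)}$ and $d^{(y)}$ from this point.

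The main obstacle I anticipate is step (3)--(4): carefully justifying that the sum term $\frac1n J(\cdot)$ and the eigenvalue $\lambda$ are genuinely forced to be equal for the two maximizers — this requires showing that $g(G,d)=\bw(G)$ together with the normalization constraint determines $\lambda((A+\diag(d))_S)$ uniquely, which follows since $g(G,d) = \frac{\tsum(A+\diag(d)) - n\lambda}{4}$ and $\tsum$ is fixed by the constraint — and then chasing the algebra with $y_i^2 = 1$ without accidentally dividing by zero (i.e.\ handling coordinates where $d^{(1)}_i = d^{(2)}_i$ separately). The interplay between the two distinct bisection vectors $y$ and $z$ is what ultimately breaks the symmetry and collapses the convex maximizer set to a point, so getting that comparison exactly right is the crux.
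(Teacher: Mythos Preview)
Your approach is essentially the same as the paper's: both exploit the eigenvector equation for optimum bisection vectors (Lemma~\ref{lemma:y:largest:eigenvector}) together with the existence of two linearly independent bisection vectors to pin down $d^\opt$. The paper's route is cleaner, though. Rather than subtracting two candidate maximizers, it first proves directly that \emph{every} normalized maximizer satisfies $d^\opt = d^{(y)} + \alpha y$: from $B^\opt_S y = 0$ and $Py=y$ one gets $PB^\opt y = 0$, hence $B^\opt y = \alpha\myvec{1}$ for some scalar $\alpha$; multiplying componentwise by $y_i$ (using $y_i^2=1$) yields $d^\opt = -\diag(y)Ay + \alpha y = d^{(y)} + \alpha y$. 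Two maximizers $d^\opt_1,d^\opt_2$ then differ by $(\alpha_1-\alpha_2)y$ and simultaneously by $(\beta_1-\beta_2)y'$, so linear independence of $y,y'$ forces $d^\opt_1=d^\opt_2$. Your subtraction step already gives $(d^{(1)}_i-d^{(2)}_i)y_i=\gamma$, i.e.\ $d^{(1)}-d^{(2)}=\gamma y$ for \emph{all} $i$ (since $y_i^{-1}=y_i$), so there is no division-by-zero case to handle and the convexity detour is unnecessary; combined with $d^{(1)}-d^{(2)}=\gamma' z$ you are done immediately. One small slip: adding $c\myvec{1}$ to $d$ \emph{does} change $(A+\diag(d))_S$ (by $cP$); it is $g(G,d)$ that is invariant, because $\tsum$ and $n\lambda$ both shift by $cn$.
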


Thus, if there are two optimum bisections 
representing by $y$ and $y'$ with $d^{(y)}\ne d^{(y')}$, 
then the difference of the $d$-vectors in component $i$ 
is only dependent on $y_i$ and $y'_i$, since we have $d^{(y)}-d^{(y')}=\beta' y'-\beta y$
for some constants $\beta$ and $\beta'$.
This structural property allows us to show the following limitation for the sparse planted partition model $\cG_{n}(p,q)$.

\begin{theorem}\label{thm:main:neg:res}
  The algorithm of Boppana fails w.h.p. in the subcritical phase from \cite{Coja2005}, defined as
  $n(p-q) = \sqrt{np\cdot \gamma\ln n}$, for real $\gamma>0$.
\end{theorem}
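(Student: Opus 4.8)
The plan is to exhibit, w.h.p., a small "forbidden" configuration inside every optimum bisection of $G \sim \cG_n(p,q)$ with $n(p-q) = \sqrt{np\gamma\ln n}$, and then invoke the structural dichotomy of Lemma~\ref{lemma:dopt:unique} to derive a contradiction with $h(G) = \bw(G)$. The key fact I want to use is the following consequence of Lemma~\ref{lemma:dopt:unique}: if $h(G) = \bw(G)$ and $G$ has at least two optimum bisections with vectors $y \ne y'$ satisfying $d^{(y)} \ne d^{(y')}$, then in \emph{every} coordinate $i$ the difference $d^{(y)}_i - d^{(y')}_i = \beta' y'_i - \beta y_i$ depends only on the pair $(y_i, y'_i)$ through two global constants. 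Recalling that $d^{(y)}_i$ counts (adjacent vertices across) minus (adjacent vertices inside) relative to the bisection $y$, this is an extremely rigid arithmetic constraint: for vertices $i$ that lie on the \emph{same} side under both $y$ and $y'$ we would get $d^{(y)}_i - d^{(y')}_i \in \{\beta'-\beta, -(\beta'-\beta)\}$, a fixed value independent of $i$. So the strategy is to produce two optimum bisections differing by a single vertex swap (or a small swap) and show that the induced change in some $d$-coordinate cannot possibly equal the forced constant.

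First I would establish the probabilistic groundwork in the subcritical regime $n(p-q) = \sqrt{np\gamma\ln n}$: by the results of Coja-Oghlan~\cite{Coja2005} (this is exactly his subcritical phase), w.h.p.\ the minimum bisection is \emph{not} the planted one and, more usefully, the optimum bisection is far from unique --- there are many near-balanced cuts achieving the minimum, and in particular one can find distinct optimum bisections $y, y'$ that differ only on a small set of vertices. The cleanest version: w.h.p.\ there exist vertices $a \in Y_1$, $b \in Y_2$ (under an optimum bisection $y$) whose swap $y' = y$ with $y_a, y_b$ flipped is again an optimum bisection; such "neutral swaps" exist w.h.p.\ because in the sparse subcritical regime the cut value is flat around its minimum. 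One then checks $d^{(y)} \ne d^{(y')}$ (they differ on $a$, $b$, and their neighbourhoods), so Lemma~\ref{lemma:dopt:unique} applies and forces $d^{(y)}_i - d^{(y')}_i$ to take one of the prescribed constant values $\pm(\beta'-\beta)$, $\pm(\beta+\beta')$ on every coordinate according to $(y_i, y'_i)$.

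Then I would derive the contradiction by a local counting argument. Fix a vertex $i$ not in $\{a,b\}$ and not adjacent to $a$ or $b$; for such $i$ we have $d^{(y)}_i = d^{(y')}_i$, so $\beta'y'_i = \beta y_i$, i.e.\ $\beta = \beta'$ (taking two such $i$ on opposite sides) --- hence the forced constant on "same-side, same-side" vertices is $0$. But now take a vertex $i$ adjacent to exactly one of $a,b$, say to $a$ only, with $y_i = y'_i$ (same side under both): swapping $a$ across changes $i$'s inside/outside count by exactly $\pm 2$, so $d^{(y)}_i - d^{(y')}_i = \pm 2 \ne 0$, contradicting the forced value $0$. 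The remaining work is to verify, w.h.p.\ in $\cG_n(p,q)$, that a neutral swap $(a,b)$ exists together with a vertex $i$ adjacent to exactly one of $a,b$ on a fixed side --- a routine first/second-moment computation using $p = \Theta(\ln n / n)$ or larger, giving each vertex polylogarithmically many neighbours, so "adjacent to $a$ but not to $b$" happens for many vertices.

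The main obstacle, and where the real work lies, is \textbf{the first step: guaranteeing the existence of a neutral single-swap pair $(a,b)$ w.h.p.\ in the subcritical phase}. This is essentially a statement that the minimum-bisection value in $\cG_n(p,q)$ with this parameter scaling is achieved with multiplicity and with a "flat" neighbourhood structure; it is plausible but needs the machinery of~\cite{Coja2005} (or a direct second-moment argument counting balanced cuts of a given small size) to make rigorous, and one must be careful that the two optimum bisections produced genuinely satisfy $d^{(y)} \ne d^{(y')}$ so that the rigid case of Lemma~\ref{lemma:dopt:unique} --- rather than the trivial "unique $d^\opt$" case --- is the one that applies. Once that is in hand, the contradiction via the $\pm 2$ local count is short. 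An alternative route, if single swaps are hard to control, is to use a slightly larger symmetric difference (a short alternating path or cycle of swaps) and run the same rigidity-versus-local-count argument; the forbidden-subgraph language in the paragraph preceding the theorem suggests the authors phrase it this way, identifying a specific small subgraph whose presence in an optimum bisection is incompatible with $h(G) = \bw(G)$ and then showing it occurs w.h.p.
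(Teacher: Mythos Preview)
Your proposal is essentially correct and follows the same line as the paper. Your ``neutral swap'' contradiction---deduce $\beta=\beta'$ from a vertex adjacent to neither of $a,b$, then get $\pm 2\neq 0$ from a vertex adjacent to exactly one---is precisely the content (and indeed a clean variant of the proof) of Lemma~\ref{lemma:boppana:same:neighbors}: the paper phrases the conclusion as ``$a$ and $b$ must have the \emph{same} neighbourhood'', which is equivalent to saying no such witness $i$ exists.

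Where the paper differs from your sketch is in how it fills the gap you flag. For the existence of balanced swap pairs it does not argue from scratch but quotes Coja-Oghlan's result that in the subcritical phase the sets $N^*_i=\{v\in V_i: e(v,V_i)=e(v,V_{-i}),\ N(v)\subseteq\mathrm{core}(G)\}$ satisfy $\#(Y_i\cap N^*_i)\ge \mu/8$ w.h.p.\ with $\mu\ge n^{1-\Theta(\gamma)}$; the ``core'' condition guarantees that any two such vertices on opposite sides are non-adjacent, so the swap is indeed optimum. For the final contradiction the paper is a bit more indirect than your witness argument: it observes that Lemma~\ref{lemma:boppana:same:neighbors} forces \emph{all} vertices in $(Y_1\cap N^*_1)\cup(Y_{-1}\cap N^*_{-1})$ to share a common neighbourhood, and then a first-moment count shows that the expected number of $4$-tuples (two per side) with identical neighbourhoods is $O(n^4/(n/2-2)^6)\to 0$. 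Two edge cases---common degree $k=0$ (handled separately via Theorem~\ref{theorem:no:two:isolated}) and $k=n/2-2$ (direct probability estimate)---also need to be dispatched; your sketch does not mention these.

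One small point: your worry about needing $d^{(y)}\ne d^{(y')}$ to invoke the ``rigid case'' of Lemma~\ref{lemma:dopt:unique} is unnecessary---the identity $d^{(y)}-d^{(y')}=\alpha^{(y')}y'-\alpha^{(y)}y$ holds regardless, and your argument goes through either way.
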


In the planted partition model $\cG_{n}(p,q)$, if the graphs are dense, e.g. 
$p=1/n^c$ for a constant $c$ with $0<c<1$, the constraints for the density difference $p-q$   
assumed in Boppana's  \cite{Boppana1987} and Coja-Oghlan's  \cite{Coja2005} 
algorithms
 are essentially the same. 
However for sparse graphs,  e.g. such that $q=\cO(1)/n$, the situation changes drastically.
Now, e.g. $p=\sqrt{\log{n}} /n$ satisfy Coja-Oghlan's constraint 
$p-q \ge \Omega(\sqrt{p\ln (pn)}/\sqrt{n})$
but the condition on the difference $p-q$ assumed by Boppana is not true any more.
Theorem~\ref{thm:main:neg:res} shows that Boppana's algorithm indeed fails under this setting.

The proof of this theorem relies on the following observation, 
which can be derived from our newly discovered structural properties from above. 

\begin{lemma}
  Let $G$ be a graph with $h(G)=\bw(G)$ and let $(Y_1, Y_{-1})$ be an arbitrary optimal bisection. Then, for each pair of vertices $v_i\in Y_i$, $i\in\{1,-1\}$, not connected by an edge ($\{v_i,v_{-i}\}\not\in E$), we have: If $e(v_i,Y_i)=e(v_i,Y_{-i})$ for $i\in\{1,-1\}$ (the vertices have balanced degree), then $N(v_i)=N(v_{-i})$, i.e. both vertices have the same neighbors.
  \label{lemma:boppana:same:neighbors}
\end{lemma}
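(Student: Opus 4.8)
The plan is to manufacture a \emph{second} optimum bisection out of $(Y_1,Y_{-1})$ by swapping $v_1$ and $v_{-1}$, and then to confront the two associated correction vectors through Lemma~\ref{lemma:dopt:unique}. First I dispose of $n=2$ (there $\{v_1,v_{-1}\}\notin E$ forces $E=\emptyset$ and the claim is vacuous), so assume $n\ge4$. Let $y$ be the bisection vector of $(Y_1,Y_{-1})$ and let $y'$ be obtained from $y$ by flipping the two coordinates $v_1$ and $v_{-1}$, i.e. the bisection vector of $\bigl((Y_1\setminus\{v_1\})\cup\{v_{-1}\},\ (Y_{-1}\setminus\{v_{-1}\})\cup\{v_1\}\bigr)$. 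Because $\{v_1,v_{-1}\}\notin E$, moving these two vertices across the cut affects only their own incident edges, so the cut width changes by $\bigl(e(v_1,Y_1)-e(v_1,Y_{-1})\bigr)+\bigl(e(v_{-1},Y_{-1})-e(v_{-1},Y_1)\bigr)$, which is $0$ by the balanced-degree hypothesis. Hence $\cutwidth(y')=\cutwidth(y)=\bw(G)$, so $y'$ is again an optimum bisection vector; moreover $y'\ne\pm y$, since $y$ and $y'$ agree on the (nonempty, as $n\ge4$) set of coordinates outside $\{v_1,v_{-1}\}$. In particular $G$ has more than one optimum bisection, so Lemma~\ref{lemma:dopt:unique} is available for both $y$ and $y'$.

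Next I would combine the two instances of Lemma~\ref{lemma:dopt:unique}. Let $d^\opt$ be the (unique up to a translate by $c\myvec{1}$) optimum correction vector; then $d^{(y)}+\alpha^{(y)}y$ and $d^{(y')}+\alpha^{(y')}y'$ each equal $d^\opt$ up to such a translate, and subtracting yields $d^{(y)}-d^{(y')}=\alpha^{(y')}y'-\alpha^{(y)}y+c\,\myvec{1}$ for some scalar $c$. Summing this identity over all coordinates and using $\sum_i y_i=\sum_i y'_i=0$ together with $\sum_i d^{(y)}_i=4\cutwidth(y)-2|E|=4\cutwidth(y')-2|E|=\sum_i d^{(y')}_i$ (immediate from the ``neighbours across minus neighbours inside'' description of $d^{(y)}$) forces $c=0$, so $d^{(y)}-d^{(y')}=\alpha^{(y')}y'-\alpha^{(y)}y$.

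Now I evaluate this identity at a coordinate $j\notin\{v_1,v_{-1}\}$. There $y'_j=y_j$, so the right-hand side is $(\alpha^{(y')}-\alpha^{(y)})\,y_j$. On the left, since $y$ and $y'$ differ only at $v_1$ (from $+1$ to $-1$) and at $v_{-1}$ (from $-1$ to $+1$), definition \eqref{def:dy} gives $(d^{(y)}-d^{(y')})_j=-2y_j\,(A_{jv_1}-A_{jv_{-1}})$. Comparing the two expressions, $A_{jv_1}-A_{jv_{-1}}$ equals one and the same constant $\gamma=\tfrac12(\alpha^{(y)}-\alpha^{(y')})$ for every $j\notin\{v_1,v_{-1}\}$, and clearly $\gamma\in\{-1,0,1\}$. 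If $\gamma=0$, then $j$ is adjacent to $v_1$ exactly when it is adjacent to $v_{-1}$, for all such $j$; together with $\{v_1,v_{-1}\}\notin E$ and the absence of self-loops, this is precisely $N(v_1)=N(v_{-1})$, and the proof is done.

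The hard part — and the only place where the full force of $h(G)=\bw(G)$ is needed — is to exclude $\gamma=\pm1$. If $\gamma=1$ (the case $\gamma=-1$ being symmetric), then every one of the $n-2$ vertices different from $v_1,v_{-1}$ is adjacent to $v_1$ but not to $v_{-1}$; that is, $v_{-1}$ is isolated and $v_1$ is joined to all remaining vertices. I would rule this configuration out by showing that it makes Boppana's bound strict. One route: with $v_{-1}$ isolated, the row and column of $A$ at $v_{-1}$ vanish, while by Lemma~\ref{lemma:y:largest:eigenvector} every optimum bisection vector would have to be a top eigenvector of $(A+\diag(d^\opt))_S$ for an optimum $d^\opt$ with $\sum_i d^\opt_i\ge 4\bw(G)-2|E|$; evaluating the Rayleigh quotient of $(A+\diag(d^\opt))_S$ on a test vector supported on $v_1$, $v_{-1}$ and the (common) neighbourhood of $v_1$ then shows that $\lambda\bigl((A+\diag(d^\opt))_S\bigr)$ stays bounded away from the value required for $g(G,d^\opt)=\bw(G)$, contradicting $h(G)=\bw(G)$. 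Hence $\gamma=0$, which completes the argument; everything preceding this exclusion is bookkeeping around the swap and around Lemma~\ref{lemma:dopt:unique}, whereas ruling out the ``universal-plus-isolated'' configuration is where the real content of the hypothesis is used.
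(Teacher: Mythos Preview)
Your overall strategy coincides with the paper's: swap $v_1$ and $v_{-1}$ to manufacture a second optimum bisection $y'$, invoke Lemma~\ref{lemma:dopt:unique}, and read off $d^{(y)}-d^{(y')}=\alpha^{(y')}y'-\alpha^{(y)}y$ coordinatewise. Your derivation that $A_{jv_1}-A_{jv_{-1}}$ is a constant $\gamma$ for all $j\notin\{v_1,v_{-1}\}$ is clean and correct, and you are right that the whole lemma reduces to excluding $\gamma=\pm1$. The paper reaches the same point via a slightly different bookkeeping (it first evaluates at $v_1$ to obtain $\alpha^{(y)}+\alpha^{(y')}=0$, then argues that $\sum_{i\in Y_1\setminus\{v_1\}}\bigl(d^{(y)}_i-d^{(y')}_i\bigr)=0$ to force $\alpha^{(y)}=\alpha^{(y')}=0$); note, however, that this latter sum actually equals $-2\bigl(e(v_1,Y_1)-e(v_{-1},Y_1)\bigr)$, which is not zero a priori, so the paper's justification at this step is incomplete as well and is really the same residual case you isolated.

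The genuine gap is in your proposed exclusion of $\gamma=\pm1$: the sketched Rayleigh-quotient argument cannot succeed, because the configuration it is meant to rule out is in fact compatible with $h(G)=\bw(G)$. Take $n=4$, $V=\{1,2,3,4\}$, $E=\{\{1,2\},\{1,4\}\}$, $Y_1=\{1,2\}$, $Y_{-1}=\{3,4\}$, $v_1=1$, $v_{-1}=3$. Both $v_1$ and $v_{-1}$ have balanced degree, they are non-adjacent, $\bw(G)=1$, and with $d^\opt=(1,0,-1,0)$ one checks that for every mean-zero $x=(a,b,c,-a-b-c)$,
\[
x^T(A+\diag(d^\opt))x \;=\; -(a+c)^2 \;\le\;0,
\]
so $\lambda\bigl((A+\diag(d^\opt))_S\bigr)=0$ and $h(G)=g(G,d^\opt)=1=\bw(G)$; yet $N(v_1)=\{2,4\}\neq\emptyset=N(v_{-1})$. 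Thus the ``universal-plus-isolated'' case cannot be excluded from the hypothesis $h(G)=\bw(G)$ alone, and the lemma as stated fails on this example. In the paper's applications (the proof of Theorem~\ref{thm:main:neg:res}) this degeneracy is handled separately---isolated vertices via Theorem~\ref{theorem:no:two:isolated} and near-universal vertices by a direct probability estimate---so the lemma is only ever used on pairs where $\gamma=\pm1$ is already impossible; your write-up should either add such a non-degeneracy hypothesis or restrict to that setting.
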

I.e. if we have two balanced vertices in different parts of an optimal bisection, not connected by an edge, then the two vertices must have the same neighborhood as a necessary criterion for Boppana's algorithm to work. In the subcritical phase in Theorem~\ref{thm:main:neg:res}, there exist most likely many of such pairs of vertices, but they are unlikely to have all even the same degree.

We can also provide forbidden substructures, which make Boppana's algorithm fail. This is e.g. the case, when the graph contains a path segment located on an optimal bisection:

\begin{corollary}\label{corollary:path}
Let $G$ be a graph, as illustrated in Fig.~\ref{fig:path} (left), with $n\geq 10$ vertices containing a path segment $\{u',u\},\{u,w\},\{w,w'\}$, where $u$ and $w$ have no further edges. If there is an optimal bisection $y$, s.\,t. $y_u=y_{u'}=+1$ and $y_w=y_{w'}=-1$ (i.\,e. $\{u,w\}$ is a cut edge), then $h(G)<\bw(G)$.
\end{corollary}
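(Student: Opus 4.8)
The plan is to derive a contradiction from the assumption that $h(G)=\bw(G)$ by applying Lemma~\ref{lemma:boppana:same:neighbors} to a suitable pair of vertices in the path segment. First I would examine the two internal path vertices $u$ and $w$. By hypothesis $u$ has exactly two neighbors, $u'$ and $w$; and $w$ has exactly two neighbors, $u$ and $w'$. In the given optimal bisection $y$ we have $y_u=+1$, $y_w=-1$, so $\{u,w\}$ is a cut edge. I would then consider the modified vertices obtained by looking at $u$ and $w$ themselves: for $u$, its one neighbor on the same side is $u'$ (since $y_{u'}=+1$) and its one neighbor on the other side is $w$, so $e(u,Y_{+1})=e(u,Y_{-1})=1$, i.e. $u$ is balanced. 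Symmetrically $w$ is balanced: its neighbor $w'$ lies on its own side ($y_{w'}=-1$) and its neighbor $u$ lies on the other side.

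Next I would check the remaining hypotheses of Lemma~\ref{lemma:boppana:same:neighbors}. That lemma requires a \emph{non-edge} between the two balanced vertices in opposite parts; but $\{u,w\}$ \emph{is} an edge, so the lemma does not apply directly to the pair $(u,w)$. The fix is to use the first part of the path instead: pair $u$ with $w'$, or better, observe that after we would like to invoke the lemma we need a non-adjacent balanced pair. The cleaner route is: since $h(G)=\bw(G)$ and $u,w$ are balanced vertices joined by a cut edge, I would instead apply the structural machinery behind Lemma~\ref{lemma:boppana:same:neighbors} to deduce that the eigenvector/correction-vector structure forces $d^{(y)}_u$ and $d^{(y)}_w$ to take specific values; concretely, for a balanced vertex $v$ we get $d^{(y)}_v = -\,y_v\,(Ay)_v = 0$ by the definition~\eqref{def:dy}. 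Then I would use the uniqueness statement of Lemma~\ref{lemma:dopt:unique}: the optimal correction vector $d^{\opt}$ is unique up to $c\myvec 1$, and it must equal $d^{(y)}+\alpha^{(y)}y$. Evaluating at $u$ and $w$: $d^{\opt}_u = 0 + \alpha^{(y)}$ and $d^{\opt}_w = 0 - \alpha^{(y)}$. Now I would bring in a \emph{second} optimal bisection $y'$ obtained by flipping just $u$ and $w$ (swapping their sides): since $\{u,w\}$ was the only edge between $\{u,w\}$ and the rest that crossed, and $u'$ stays with $u$... — here one must verify $y'$ is genuinely optimal, which follows because moving $u$ to $-1$ removes the cut edge $\{u,w\}$ but adds the edge $\{u,u'\}$, and symmetrically for $w$, so $\cutwidth(y')=\cutwidth(y)=\bw(G)$. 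Applying Lemma~\ref{lemma:dopt:unique} to $y'$ and comparing $d$-vectors componentwise at all vertices outside $\{u,w\}$ forces $\alpha^{(y)}=\alpha^{(y')}=0$, hence $d^{\opt}_u=d^{\opt}_w=0$, while the same comparison at $u'$ (whose same-side/other-side neighbor counts change by the flip) forces a nonzero constraint — contradiction. The bound $n\ge 10$ enters to guarantee there is enough of the graph outside the segment that flipping $u,w$ does not accidentally unbalance the bisection and that the two bisections are genuinely distinct.

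I would organize the write-up as: (i) show $u$ and $w$ are balanced and compute $d^{(y)}_u=d^{(y)}_w=0$; (ii) construct the flipped bisection $y'$, verify it is optimal and distinct from $y$ (using $n\ge 10$); (iii) invoke Lemma~\ref{lemma:dopt:unique} for both $y$ and $y'$, use that $d^{(y)}-d^{(y')}=\beta'y'-\beta y$ depends componentwise only on $(y_i,y'_i)$, and read off the values on the four path vertices; (iv) derive the contradiction from the equations at $u'$ (or $w'$), where the local neighbor-count difference is incompatible with what the $\beta,\beta'$ relation permits. The main obstacle I anticipate is step (ii)–(iii): making precise exactly which pair of vertices or which flipped bisection yields a non-adjacent balanced pair (or the cleanest contradiction), since the naive pair $(u,w)$ is adjacent and thus excluded by Lemma~\ref{lemma:boppana:same:neighbors}; the segment structure $\{u',u\},\{u,w\},\{w,w'\}$ with $u,w$ of degree two is presumably chosen precisely so that after one flip a suitable non-adjacent balanced pair appears, but pinning down that pair and checking all its neighbor counts is the delicate part. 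A secondary subtlety is ensuring the constant-translation ambiguity $c\myvec 1$ in Lemma~\ref{lemma:dopt:unique} is handled consistently across the two applications.
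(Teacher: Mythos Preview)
Your approach has a genuine gap at step~(ii). When you flip \emph{both} $u$ and $w$, the edge $\{u,w\}$ remains a cut edge (now $u$ is on the $-1$ side and $w$ on the $+1$ side), while $\{u,u'\}$ and $\{w,w'\}$ both become cut edges. Hence $\cutwidth(y')=\cutwidth(y)+2>\bw(G)$, so $y'$ is \emph{not} optimal and Lemma~\ref{lemma:dopt:unique} does not apply. Your informal accounting (``moving $u$ removes $\{u,w\}$ but adds $\{u,u'\}$, and symmetrically for $w$'') double-counts the effect on the shared edge $\{u,w\}$. More fundamentally, the route through Lemma~\ref{lemma:dopt:unique} or Lemma~\ref{lemma:boppana:same:neighbors} needs a second optimal bisection (or a non-adjacent balanced pair), and the hypotheses of the corollary simply do not guarantee one exists: flipping $u$ alone unbalances the bisection, and there is no control over the neighborhoods of $u'$, $w'$, or any other vertex that would let you find a compensating swap.

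The paper proceeds quite differently: it invokes the technical Lemma~\ref{lemma:necessary:many:edges} with $\tilde C_{+1}=\{u\}$ and $\tilde C_{-1}=\{w\}$, so $k=1$, $\delta=0$, $l=n-1$. The first condition $3k<l$ holds for $n\ge 10$, and the edge-count condition becomes $2\cdot|E(\{u\},\{w\})|=2\ge |E(\{u,w\},V\setminus\{u,w\})|=|\{\{u,u'\},\{w,w'\}\}|=2$. That lemma works by exhibiting an explicit test vector $x$ (not a second bisection) whose Rayleigh quotient with $B_S$ is strictly positive for every admissible $\alpha$, directly contradicting $\lambda(B_S)=0$. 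This avoids entirely the need for multiple optimal bisections, which is why it succeeds where your approach stalls.
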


To prove this corollary, we use the more general but more technical Lemma~\ref{lemma:necessary:many:edges} (in Appendix) with parameters $\tilde C_{+1}=\{u\}$ and $\tilde C_{-1}=\{w\}$.

\begin{figure}[h!]
\begin{center}
  \begin{tikzpicture}[scale=0.8, every node/.style={scale=0.8}]
    \node[draw] (a) at (0,2) {$\ldots u'$};
    \node (b) at (2,2) {$u$};
    \node (c) at (4,2) {$w$};
    \node[draw] (d) at (6,2) {$w' \ldots$};
    \draw (a) -- (b) -- (c) -- (d);
  \end{tikzpicture}
  \hspace{2cm}
  \begin{tikzpicture}[scale=0.6, every node/.style={scale=0.8}]
    \node[draw] (a1) at (0,0) {$\ldots u_1'$};
    \node (b1) at (2,0) {$u_1$};
    \node (c1) at (4,0) {$w_1$};
    \node[draw] (d1) at (6,0) {$w_1' \ldots$};

    \node[draw] (a2) at (0,-2) {$\ldots u_2'$};
    \node (b2) at (2,-2) {$u_2$};
    \node (c2) at (4,-2) {$w_2$};
    \node[draw] (d2) at (6,-2) {$w_2' \ldots$};

    \draw (a1) -- (b1) -- (c1) -- (d1);
    \draw (a2) -- (b2) -- (c2) -- (d2);
    \draw (b1) -- (b2);
    \draw (c1) -- (c2);
    \draw[dashed] (a1) -- (a2);
    \draw[dashed] (d1) -- (d2);
  \end{tikzpicture}
\caption{Forbidden graph structures as in Corollary~\ref{corollary:path} (left) and in Corollary~\ref{corollary:grid} (right).}
\label{fig:path}
\label{fig:grid}
\end{center}
\end{figure}
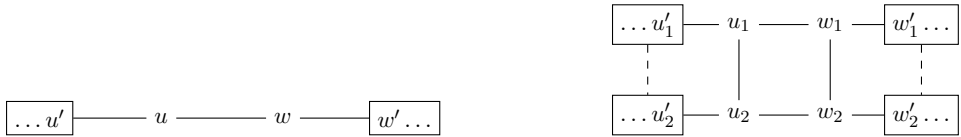

The result can also be applied for $2\times c$ lattices:
\begin{corollary}\label{corollary:grid}
Let $G$ be a graph with $n\geq 10c$ vertices containing a $2\times c$ lattice with vertices $u_i$ and $w_i$, as illustrated in Fig.~\ref{fig:grid} (right). (The construction is similar to the corollary above, but now we have a lattice instead of a single cut edge.) If there is an optimal bisection $y$, s.\,t. $y_{u_i}=y_{u_i'}=+1$ and $y_{w_i}=y_{w_i'}=-1$, then $h(G)<\bw(G)$.
\end{corollary}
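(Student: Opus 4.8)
The plan is to derive Corollary~\ref{corollary:grid} from the general forbidden-substructure Lemma~\ref{lemma:necessary:many:edges}, instantiated with $\tilde C_{+1}=\{u_1,\dots,u_c\}$ and $\tilde C_{-1}=\{w_1,\dots,w_c\}$ (the two rows of the lattice), just as Corollary~\ref{corollary:path} is the case $c=1$. The underlying mechanism is a Rayleigh-quotient obstruction. Assume for contradiction that $h(G)=\bw(G)$, fix the given optimal bisection vector $y$, and let $d^\opt$ be a maximiser of $g(G,\cdot)$, normalized by adding a suitable multiple of $\mathbf 1$ (which does not change $g$) so that Lemma~\ref{lemma:y:largest:eigenvector} applies; that lemma makes $y$ an eigenvector of $(A+\diag(d^\opt))_S$ for its largest eigenvalue. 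From the eigenvalue equation, the definition~\eqref{def:dy} of $d^{(y)}$, and $y_i^2=1$, a short computation forces $d^\opt\equiv d^{(y)}+\alpha y$ up to a harmless translation $c_0\mathbf 1$, and then, using $\tsum(d^{(y)})=4\bw(G)-2|E|$, that $(A+\diag(d^{(y)}+\alpha y))_S$ has largest eigenvalue $0$ -- equivalently, is negative semidefinite, with $y$ in its kernel. Hence it suffices to exhibit a single $\tilde x\in\swozero$ with $\tilde x^{T}\big(A+\diag(d^{(y)}+\alpha y)\big)\tilde x>0$.

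I would take $\tilde x=P\,\mathbf 1_L=\mathbf 1_L-\tfrac{2c}{n}\mathbf 1$, the projection into $S$ of the indicator of the $2c$ lattice vertices $L=\tilde C_{+1}\cup\tilde C_{-1}$. Since $L$ holds $c$ vertices from each side of an already balanced graph, both $L$ and $V\setminus L$ are balanced with respect to $y$, which gives $\sum_i y_i\tilde x_i^2=0$; consequently $\tilde x^{T}\big(A+\diag(d^{(y)}+\alpha y)\big)\tilde x=\tilde x^{T}\big(A+\diag(d^{(y)})\big)\tilde x$ and the unknown scalar $\alpha$ drops out. Expanding this, and using both $\tsum(d^{(y)})=4\bw(G)-2|E|$ and $\sum_{i\in L}\big(d^{(y)}_i+\deg(i)\big)=2\sum_{i\in L}e(i,\text{opposite part})=4c$ (each lattice vertex has exactly one neighbour across the bisection), the large $\pm|E|$ contributions cancel and one is left with
\[
  \tilde x^{T}\big(A+\diag(d^{(y)})\big)\tilde x \;=\; \Big(2\,|E(G[L])|+\textstyle\sum_{i\in L}d^{(y)}_i\Big)\;-\;\frac{16c^2}{n}\;+\;\frac{16c^2}{n^2}\,\bw(G).
\]
The parenthesised ``lattice part'' depends only on the induced grid: $G[L]$ is a $2\times c$ grid with $3c-2$ edges, and each $d^{(y)}_{u_j},d^{(y)}_{w_j}$ is an explicit value in $\{-2,-1,0\}$ (according to whether the column is interior), a direct count giving lattice part $=2c$. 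Since $\bw(G)\ge 0$ and $n\ge 10c$ makes $16c^2/n\le 1.6c$, the whole expression is at least $0.4c>0$, contradicting negative semidefiniteness; hence $h(G)<\bw(G)$.

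The crux -- and the reason Corollary~\ref{corollary:path} cannot merely be applied once per rung -- is the positivity of this lattice part, i.e.\ $2\,|E(G[L])|+\sum_{i\in L}d^{(y)}_i>0$: an isolated cut edge satisfies it with room to spare (value $2$), but inside the $2\times c$ grid each of $u_j,w_j$ acquires its horizontal neighbours, which \emph{simultaneously} raise $|E(G[L])|$ and lower $\sum_{i\in L}d^{(y)}_i$, so one must verify that the edge count still wins by a fixed margin for every $c$ while the $\Theta(c^2/n)$ error terms remain dominated -- which is precisely where the size bound $n\ge 10c$ and the hypothesis that $u_j,w_j$ carry no further edges enter. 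Isolating this estimate is the role of Lemma~\ref{lemma:necessary:many:edges}; granted that lemma, Corollary~\ref{corollary:grid} follows from the choice $\tilde C_{+1}=\{u_1,\dots,u_c\}$, $\tilde C_{-1}=\{w_1,\dots,w_c\}$ together with a routine check of its hypotheses for the grid.
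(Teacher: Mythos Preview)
Your proof is correct and matches the paper's intended route: apply Lemma~\ref{lemma:necessary:many:edges} with $\tilde C_{+1}=\{u_1,\dots,u_c\}$, $\tilde C_{-1}=\{w_1,\dots,w_c\}$, noting $k=c$, $\delta=0$, $3k<l$ from $n\ge 10c$, and $2\,|E(\tilde C_{+1},\tilde C_{-1})|=2c\ge 2c=|E(L,V\setminus L)|$. Your additional direct computation with the test vector $\tilde x=P\mathbf 1_L$ is not a genuinely separate argument---in the balanced case $\delta=0$ the lemma's test vector is $z\,\mathbf 1_L-\mathbf 1_{V\setminus L}$ with $z=l/k=(n-2c)/(2c)$, a scalar multiple of $P\mathbf 1_L$, so you are simply carrying out the lemma's Rayleigh-quotient estimate exactly for this particular instance rather than invoking its coarser edge-by-edge bound.
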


Futhermore, the algorithm fails if there are isolated vertices in both parts of an optimal bisection:
\begin{theorem}
  Let $G$ be a graph with $h(G)=\bw(G)$. Let $G'$ be the graph $G$ with two additional isolated vertices, then $h(G')\leq h(G)-\frac{4\bw(G)}{n^2}$. (Note: $G$ has $n$ vertices and $G'$ has $n+2$ vertices.)
  \label{theorem:no:two:isolated}
\end{theorem}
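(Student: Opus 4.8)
The plan is to upper-bound $h(G')$ by exhibiting, for a well-chosen correction vector of $G'$, an explicit test vector $x \in \rwoconst$ whose Rayleigh quotient with respect to $(A' + \diag(d'))_S$ is large, thereby forcing $\lambda((A'+\diag(d'))_S)$ to be large and $g(G',d')$ to be small; since this must hold for the maximizing $d'$ as well, we get the bound on $h(G') = \max_{d'} g(G',d')$.

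First I would fix an optimal correction vector $d^\opt \in \mathbb{R}^n$ for $G$ with $g(G,d^\opt) = h(G) = \bw(G)$, normalized (via the translation freedom, cf. Step~1 of Algorithm~\ref{alg:Boppana}) so that $\sum_i d^\opt_i = 4\bw(G) - 2|E|$, which guarantees $\lambda((A+D)_S) > 0$. Let $u,v$ be the two new isolated vertices and let $d' \in \mathbb{R}^{n+2}$ agree with $d^\opt$ on the old coordinates and take some value $\delta$ on the two new coordinates $u,v$; by symmetry the same value $\delta$ on both is the natural choice. The matrix $B' = A' + \diag(d')$ is then block structured: the old $n \times n$ block is $B^\opt = A + D$, and $u,v$ contribute two isolated diagonal entries equal to $\delta$. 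I would then analyze $B'_S = P'B'P'$ where $P' = I - \frac{1}{n+2}J$ is the projection onto vectors summing to zero in dimension $n+2$. The key computation is that the old eigenvector $y$ of $B^\opt_S$ realizing $\lambda(B^\opt_S)$ (which exists and equals an optimum bisection vector by Lemma~\ref{lemma:y:largest:eigenvector}), extended by zeros on $u,v$ and then re-centred to sum to zero, is an almost-eigenvector of $B'_S$; computing its Rayleigh quotient shows $\lambda(B'_S) \ge \lambda(B^\opt_S) - O(1/n)$-type loss is \emph{not} what we want — rather, we want to show $\lambda(B'_S)$ cannot drop, so that $\tsum(B') - (n+2)\lambda(B'_S)$ is forced down. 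The cleaner route: since $\tsum(B') = \tsum(B^\opt) + 2\delta$ and $\lambda(B'_S) \ge \lambda(B^\opt_S)$ (test vector argument: embed the old optimal eigenvector, adjust), we get
\[
  g(G',d') = \frac{\tsum(B') - (n+2)\lambda(B'_S)}{4} \le \frac{\tsum(B^\opt) + 2\delta - (n+2)\lambda(B^\opt_S)}{4} = g(G,d^\opt) - \frac{n+2-n}{4}\lambda(B^\opt_S) + \frac{2\delta}{4},
\]
and optimizing over $\delta$ (the isolated-vertex coordinates are essentially free but constrained by the normalization $\sum d'_i$ fixed, which ties $\delta$ to a shift of $d^\opt$) pins down the loss. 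Using $\lambda(B^\opt_S) = \frac{\tsum(B^\opt) - 4\bw(G)}{n} = \frac{(4\bw(G)-2|E|)+2|E| - 4\bw(G)}{n}$... here one must be careful — in fact $\lambda(B^\opt_S) = \frac{\tsum(B^\opt)-4h(G)}{n}$, and the $\frac{1}{n}$ versus $\frac{1}{n+2}$ discrepancy in the projections is exactly what produces the $\frac{4\bw(G)}{n^2}$ term, since $\frac{1}{n} - \frac{1}{n+2} = \frac{2}{n(n+2)} \approx \frac{2}{n^2}$ and this multiplies a quantity of order $\bw(G)$.

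The main obstacle, as this sketch already reveals, is the bookkeeping around the two different centering projections $P$ (dimension $n$) and $P'$ (dimension $n+2$): the isolated vertices force $x^T B'_S x$ to be computed with the wrong-size all-ones subtraction, and one must show that for \emph{every} choice of $d'$ (not just the symmetric extension of $d^\opt$) the value $g(G',d')$ suffers at least the stated loss. The right way to handle this is to argue in the other direction: take the $d'^\opt$ achieving $h(G')$, restrict attention to how it acts, and use the fact that any bisection vector of $G'$ (which must put $u,v$ on opposite sides, or one of them with a single-vertex imbalance corrected among the old vertices) has cut width equal to $\bw(G)$, while $g(G',\cdot) \le \bw(G')$ always; the gap between the continuous optimum and $\bw(G')$ is then forced by plugging the relevant bisection vector's "defect" back into the Rayleigh quotient. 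Concretely I expect to invoke the variational characterization once more: if $h(G') > h(G) - \frac{4\bw(G)}{n^2}$ then the optimal eigenvector of $B'^\opt_S$ would have to be too well-aligned with a bisection vector that is forbidden by the isolated-vertex structure, contradicting a multiplicity/alignment argument analogous to Lemma~\ref{lemma:y:largest:eigenvector}. I would keep the computation to the one-line estimate $\frac{1}{n}-\frac{1}{n+2} = \frac{2}{n(n+2)}$ times $2\bw(G)$ and defer the rest to the appendix.
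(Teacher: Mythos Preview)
Your proposal has a genuine gap in the direction of optimization. You want to upper-bound $h(G')=\max_{d'}g(G',d')$, so you must show that \emph{every} $d'\in\mathbb{R}^{n+2}$ satisfies $g(G',d')\le h(G)-\tfrac{4\bw(G)}{n^2}$. Instead you restrict to the one-parameter family $d'=(d^\opt,\delta,\delta)$ and then ``optimize over $\delta$''; bounding $g(G',d')$ on a subfamily says nothing about the maximum over all $d'$. You notice this (``one must show that for every choice of $d'$\ldots'') but the fallback you sketch---an alignment/multiplicity contradiction in the spirit of Lemma~\ref{lemma:y:largest:eigenvector}---is not a proof and does not explain where the quantity $\tfrac{4\bw(G)}{n^2}$ would come from. (Also, with the normalization $\sum_i d^\opt_i=4\bw(G)-2|E|$ one has $\lambda(B^\opt_S)=0$, not $>0$, so the displayed inequality becomes $g(G',d')\le h(G)+\delta/2$, which is vacuous as $\delta\to\infty$.)

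The paper's proof supplies exactly the missing idea. It reparametrizes $d'=(d^\opt,0,0)^T+z$ with $z$ ranging over all of $\mathbb{R}^{n+2}$ (a change of variables, not a restriction), so that
\[
h(G)-h(G')=\min_{z}\Bigl(\tfrac{n+2}{4}\,\lambda(B'_S)-\tfrac{\tsum(z)}{4}\Bigr).
\]
It then lower-bounds $\lambda(B'_S)=\max_x\tfrac{x^TB'x}{\|x\|^2}$ by evaluating at \emph{two} specific test vectors: $x_a=(x_1,\dots,x_n,0,0)^T$ (the embedded optimal bisection vector of $G$) and $x_b=(1,\dots,1,-\tfrac{n}{2},-\tfrac{n}{2})^T$. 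The second test vector is the crucial one you are missing: it is tailored to the two isolated vertices and produces the term $\tfrac{4\bw(G)}{2n}$. Each test vector alone can be killed by a suitable choice of $z$, but together they reduce the outer minimization to a single scalar $\delta=\tfrac{1}{n}\sum_{i\le n}z_i-\tfrac12(z_{n+1}+z_{n+2})$ and yield
\[
h(G)-h(G')\ \ge\ \min_{\delta}\max\Bigl(\tfrac12\delta,\ \tfrac{2\bw(G)}{n}+\tfrac{2-n}{4}\delta\Bigr)=\tfrac{4\bw(G)}{n^2},
\]
the minimum occurring where the two lines cross. Your heuristic $\bigl(\tfrac1n-\tfrac1{n+2}\bigr)\cdot 2\bw(G)=\tfrac{4\bw(G)}{n(n+2)}$ is close in spirit but neither matches the claimed constant nor arises from a complete argument; the actual mechanism is this two-test-vector min--max, not a projection-size discrepancy.
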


\section{SDP Characterizations of the Graph Bisection Problem}
\label{sec:comparing:boppana:feige}

Feige and Kilian express the minimum-size bisection problem for an instance graph $G$
as a semidefinite programming problem (SDP) 
with solution $h_p(G)$ and prove that the function $h_d(G)$,
which is the solution to the dual SDP,
reaches $\bw(G)$ 
 w.h.p.
Since $\bw(G)\geq h_p(G)\geq h_d(G)$, they conclude that $h_p(G)$ as well
reaches $\bw(G)$ w.h.p.
The proposed algorithm computes $h_p(G)$  
and reconstructs the minimum bisection of $G$ from the optimum solution
of the primal SDP.
The authors conjecture in \cite[Sec.~4.1.]{Feige2001} the following: 
"Possibly, for every graph $G$, the function $h_p(G)$ and 
the lower bound $h(G)$ computed in Boppana's algorithm
give the same value, making the lemma that $h_p(G)=\bw(G)$ w.h.p. a restatement 
of the main theorem of~\cite{Boppana1987}.
In this section we answer this question affirmatively.

The semidefinite programming approach for optimization problems
was studied by Alizadeh~\cite{alizadeh1995interior}, who as first
provided an equivalent SDP formulation of Boppana's algorithm.
Before we give an SDP introduced by Feige an Kilian, 
we recall briefly some basic definitions and provide an SDP 
formulation for Boppana's  approach.
On the space $\mathbb{R}^{n\times m}$ of $n \times m$ matrices,
we denote by $A\bullet B$ an inner product  of $A$ and $B$ defined as
$A\bullet B = \tr(AB)=\sum_{i=1}^n \sum_{j=1}^m A_{ij}B_{ij}$,
where $\tr(C)$ is the trace of the (square) matrix $C$.
Let A be an $n\times n$ symmetric real matrix, then $A$ is called 
symmetric positive semidefinite (SPSD)
if $A$ is symmetric, i.e. $A^T=A$,
and for all real vectors $v\in \mathbb{R}^n$ we have $v^T A  v \ge 0$. 
This property is denoted by $A\succeq 0$. Note that the eigenvalues 
of a symmetric matrix are real.

For given real vector $c\in \mathbb{R}^n$ and $m+1$
symmetric matrices $F_0,\ldots,F_m\in \mathbb{R}^{n\times n}$  
an SDP over variables $x \in \mathbb{R}^n$ is defined as
\begin{equation} \label{eq:def:sdp:primal}
  \min_x c^T x \quad \text{subject to}\quad  
 F_0 +\sum_{i=1}^m x_i F_i\ \succeq \ 0.
\end{equation}
The dual program associated with the SDP 
(for details see  e.g. \cite{vandenberghe1996})
is the program over the variable matrix $Y=Y^T \in \mathbb{R}^{n\times n}$:
\begin{equation} \label{eq:def:sdp:dual}
  \max_Y\  - F_0 \bullet  Y  \quad \text{subject to}\quad  
  \forall i: \ F_i \bullet Y =c_i \quad \text{and}\quad  Y\ \succeq \ 0.
\end{equation}
It is known that the optimal value of the maximization dual SDP
is never larger than the optimal value of the minimization primal counterpart.
However, unlike linear programming, for semidefinite programs
there may be a duality gap, i.e. the primal and/or dual might not 
attain their respective optima.

To prove that for any graph $G$ Boppana's function $h(G)$
gives the same value as $h_p(G)$
we formulate the function $h$ as a (primal) SDP.
We provide also its dual program and prove that the optimum 
solutions of primal and dual are equal in this case.
Then we show that the dual formulation of the Boppana's optimization 
is equivalent to the primal SDP defined by Feige and Kilian~\cite{Feige2001}.

Below,  $G=(V,E)$ denotes a graph, $A$ the adjacency matrix of $G$
and for a given vector $d$, as usually,  let $D=\diag(d)$,  for short.
We provide the SDP for the function $h$ (Eq.~\eqref{eq:for:function:h}) that
differ slightly from that one given in~\cite{alizadeh1995interior}.  

\begin{proposition}\label{prop:Bopp:reformulation}
For any graph $G=(V,E)$, the objective function
\[
  h(G)\ = \ \max_{d\in \mathbb{R}^n} \frac{\tsum(A+D)-n \lambda((A+D)_S)}{4}
\]
maximized by Boppana's algorithm 
can be characterized
as an SDP as follows:
\begin{equation}\label{primal:sdp:boppana}
\left\{ \begin{array}{rcl}
 \multicolumn{3}{l}{ p(G) = {\displaystyle \min_{z\in \mathbb{R},d\in \mathbb{R}^n}}
   (nz-\myvec{1}^T d )  \quad \text{subject to} }\\[3mm]
  zI-A+\frac{JA+AJ}{n}-\frac{\tsum(A)J}{n^2}-D+\frac{\myvec{1} d^T+d \myvec{1}^T}{n}-\frac{\tsum(D)J}{n^2} &\succeq & 0,
\end{array}
\right.
\end{equation}
with the relationship  $h(G)=\frac{|E|}{2}-\frac{1}{4}p(G)$.
The dual program to the program~\eqref{primal:sdp:boppana} can be expressed as follows: 
\begin{equation}\label{dual:sdp:boppana}
\left\{
\begin{array}{rcl}
  \multicolumn{3}{l}{d(G)= {\displaystyle\max_{Y\in \mathbb{R}^{n\times n}}} 
    \left(A\bullet Y-\frac{1}{n}\sum_j \deg(j) \sum_i y_{ij}-\frac{1}{n}\sum_i \deg(i) \sum_j y_{ij}+\frac{1}{n^2}\sum_{i,j}y_{ij}\right)}\\[3mm]
   \multicolumn{3}{l}{\text{subject to}}\\[2mm]
  \sum_i y_{ii}&=&n,\\[2mm]
\forall i \quad  y_{ii}-\frac{1}{n}\sum_j y_{ji}-\frac{1}{n} \sum_j y_{ij}+\frac{1}{n^2} \sum_{k,j} y_{kj}&=&1,\\[2mm]
  Y&\succeq& 0.
\end{array}
\right.
\end{equation}
\end{proposition}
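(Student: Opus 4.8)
The plan is to reformulate the inner minimax in the definition of $h(G)$ as a single SDP, and then apply the standard SDP duality rules \eqref{eq:def:sdp:primal}--\eqref{eq:def:sdp:dual} to obtain the dual. First I would rewrite $\lambda((A+D)_S)$ using the variational characterization: since $(A+D)_S = P(A+D)P$ and (after the normalization $\sum_i d_i = 2|E|$) the largest eigenvalue of $(A+D)_S$ is nonnegative, we have $\lambda((A+D)_S) = \min\{\,z : zI - P(A+D)P \succeq 0\,\}$. Substituting $P = I - \tfrac1n J$ and expanding $P(A+D)P$ gives exactly the matrix expression appearing in the constraint of \eqref{primal:sdp:boppana}: $PAP = A - \tfrac{JA+AJ}{n} + \tfrac{\tsum(A)J}{n^2}$, and similarly $PDP = D - \tfrac{\myvec{1}d^T + d\myvec{1}^T}{n} + \tfrac{\tsum(D)J}{n^2}$ (using $JD = \myvec{1}d^T$, $DJ = d\myvec{1}^T$, $JDJ = \tsum(D)J$). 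Hence $zI - P(A+D)P \succeq 0$ is precisely the feasibility condition, and minimizing $nz - \myvec{1}^T d$ over feasible $(z,d)$ computes $\min_{d}\bigl(n\lambda((A+D)_S) - \tsum(D)\bigr)$. Since $\tsum(A+D) = \tsum(A) + \tsum(D)$ and $\tsum(A) = 2|E|$ is a constant, we get $p(G) = -4h(G) + 2|E| = 2|E| - 4h(G)$, i.e.\ $h(G) = \tfrac{|E|}{2} - \tfrac14 p(G)$, as claimed. One subtlety to handle carefully is that the normalization constraint $\sum_i d_i = 2|E|$ used in Boppana's algorithm to force $\lambda \ge 0$ does not appear explicitly in \eqref{primal:sdp:boppana}; I would argue that translating $d$ by $c\myvec{1}$ changes the objective $nz - \myvec{1}^T d$ and the required $z$ in a way that leaves $nz - \myvec{1}^T d$ invariant (since $\lambda$ and $\tsum(D)$ both shift by $nc$), so the unconstrained minimum equals the constrained one.

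Next I would cast \eqref{primal:sdp:boppana} into the canonical primal form \eqref{eq:def:sdp:primal}. The $n+1$ scalar variables are $z$ and $d_1,\dots,d_n$; the cost vector pairs $z$ with coefficient $n$ and each $d_i$ with coefficient $-1$. The constant matrix is $F_0 = -A + \tfrac{JA+AJ}{n} - \tfrac{\tsum(A)J}{n^2}$; the matrix multiplying $z$ is $I$; and the matrix multiplying $d_i$ is $-E_{ii} + \tfrac{\myvec{1}e_i^T + e_i\myvec{1}^T}{n} - \tfrac{J}{n^2}$ (the derivative of $-PDP$ in the direction $e_i$). Then the dual \eqref{eq:def:sdp:dual} is $\max_Y\, -F_0 \bullet Y$ subject to $I \bullet Y = n$ (the constraint coming from the $z$ variable, giving $\sum_i y_{ii} = n$), one linear constraint per $d_i$ (coming from $F_i \bullet Y = c_i = -1$), and $Y \succeq 0$. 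Computing $F_i \bullet Y$ for the $d_i$-matrix: $-E_{ii}\bullet Y = -y_{ii}$, $\tfrac1n(\myvec{1}e_i^T + e_i\myvec{1}^T)\bullet Y = \tfrac1n\sum_j y_{ji} + \tfrac1n\sum_j y_{ij}$, and $-\tfrac{J}{n^2}\bullet Y = -\tfrac1{n^2}\sum_{k,j}y_{kj}$; setting this equal to $-1$ and negating gives the per-$i$ constraint in \eqref{dual:sdp:boppana}. For the objective, $-F_0\bullet Y = A\bullet Y - \tfrac1n(JA+AJ)\bullet Y + \tfrac{\tsum(A)}{n^2}J\bullet Y$, and since $(JA)\bullet Y = \sum_{i,j}(JA)_{ij}y_{ij} = \sum_{i,j}\deg(j)\,y_{ij}$ (as column $j$ of $JA$ is the all-$\deg(j)$ vector) and similarly $(AJ)\bullet Y = \sum_{i,j}\deg(i)\,y_{ij}$, this matches the stated dual objective.

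The main obstacle I expect is not the algebra but the bookkeeping of projections: correctly expanding $PDP$ for the direction-derivative matrices $F_i$ and keeping track of which index of $Y$ is summed where, so that $(JA)\bullet Y$, $(AJ)\bullet Y$, $\sum_j y_{ji}$ and $\sum_j y_{ij}$ land on the correct side (row vs.\ column sums) — a place where a transpose error silently produces a wrong-looking but symmetric-on-$Y$ statement. Since both $F_0$ and all $F_i$ are symmetric, one can always symmetrize $Y$ without loss, which I would note explicitly to justify writing things like $\sum_j y_{ij}$ and $\sum_j y_{ji}$ as separate terms. Finally, to complete the proposition as stated I should remark that \eqref{primal:sdp:boppana} and \eqref{dual:sdp:boppana} are genuinely primal-dual in the sense of weak duality $d(G) \le p(G)$; the equality $d(G) = p(G)$ (no duality gap) is the separate claim proved afterward in the paper, so here it suffices to exhibit the two programs and verify they are a dual pair via the mechanical rules \eqref{eq:def:sdp:primal}--\eqref{eq:def:sdp:dual}.
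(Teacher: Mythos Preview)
Your proposal is correct and follows essentially the same approach as the paper: both recognize the constraint as $zI - P(A+D)P \succeq 0$, expand $P(A+D)P$ via $P = I - \tfrac{1}{n}J$, and then mechanically apply the standard primal/dual SDP rules \eqref{eq:def:sdp:primal}--\eqref{eq:def:sdp:dual}; the only cosmetic difference is that the paper absorbs $\myvec{1}^T d$ into the matrix via the shift $M(d) = P(A+D)P - \tfrac{\myvec{1}^T d}{n}I$ and then substitutes back, whereas you keep it separate and argue translation invariance in $d$. One small thing to watch: your computation of $-F_0\bullet Y$ correctly produces $\tfrac{\tsum(A)}{n^2}\sum_{i,j}y_{ij}$ in the last term, while the stated dual \eqref{dual:sdp:boppana} has only $\tfrac{1}{n^2}\sum_{i,j}y_{ij}$ --- this is a typo in the statement (harmless, since the constraints force $\sum_{i,j}y_{ij}=0$), so be careful when you assert it ``matches''.
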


Using these formulations we prove  that the primal and dual SDPs  
attain the same  optima.

\begin{theorem}\label{th:sdp:boppana:dual:primal:eq}
  For the semidefinite programs of Proposition~\ref{prop:Bopp:reformulation} 
  the optimal value $p^*$ of the primal SDP~\eqref{primal:sdp:boppana}
  is equal to the optimal value $d^*$ of the dual SDP~\eqref{dual:sdp:boppana}.
  Moreover, there exists a feasible solution $(z,d)$ achieving the optimal value $p^*$.
\end{theorem}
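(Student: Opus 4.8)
The plan is to establish strong duality for the pair of SDPs in Proposition~\ref{prop:Bopp:reformulation} by exhibiting a strictly feasible (Slater) point for one of the two programs, after which the standard SDP duality theorem (see e.g.\ \cite{vandenberghe1996}) gives $p^\ast = d^\ast$ together with attainment on the side for which strict feasibility holds. First I would check which program admits an interior point. The constraint matrix in the primal~\eqref{primal:sdp:boppana} has a built-in degeneracy: writing $M(z,d)$ for the large matrix, one sees directly that $M(z,d)\,\myvec{1}=0$ for every $(z,d)$, because each of the correction terms $\tfrac{JA+AJ}{n}$, $\tfrac{\tsum(A)J}{n^2}$, $\tfrac{\myvec1 d^T+d\myvec1^T}{n}$, $\tfrac{\tsum(D)J}{n^2}$ is precisely engineered so that the row sums of $zI-A-D$ are cancelled and the kernel contains $\myvec1$. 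Hence $M(z,d)$ can never be positive \emph{definite}, only semidefinite, so Slater's condition fails for the primal as literally stated. The right move is therefore to work modulo the vector $\myvec1$: restrict all the matrices to the subspace $S=\myvec1^\perp$, equivalently conjugate by the projection $P=I-\tfrac1nJ$. On $S$ the primal constraint becomes $P(zI-A-D)P\succeq 0$ on $S$, and by choosing $z$ large enough (larger than $\lambda_{\max}$ of $P(A+D)P$ on $S$) and, say, $d=0$, one gets a point with $P(zI-A)P \succ 0$ as an operator on $S$ — a genuine Slater point for the \emph{reduced} primal.

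Next I would run the duality theorem in the reduced space. Since the reduced primal is strictly feasible and bounded below (it is bounded because $h(G)\le \bw(G)$ is finite, equivalently $p(G)\ge 2|E|-4\bw(G) > -\infty$), strong duality holds: the reduced dual attains its optimum and the two optimal values coincide. Then I would translate this back to the programs as written. The passage from the reduced to the ambient formulation must be shown to preserve both feasible sets and objective values: a primal point $(z,d)$ is feasible for~\eqref{primal:sdp:boppana} iff its reduction is feasible on $S$ (because the ambient matrix is $0$ on $\myvec1$ and equals the reduced one on $S$), and the two linear constraints in the dual~\eqref{dual:sdp:boppana} — namely $\sum_i y_{ii}=n$ and the per-$i$ constraint — are exactly the images under $P(\cdot)P$ of the identity, so $Y$ is dual-feasible in the ambient sense iff $PYP$ is feasible in the reduced sense, with matching objective. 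I would also double-check that the stated dual~\eqref{dual:sdp:boppana} really is the Lagrangian dual of~\eqref{primal:sdp:boppana}: this is a bookkeeping computation matching $F_0,\dots,F_m$ against the constraint data, writing the dual objective $-F_0\bullet Y$ and the constraints $F_i\bullet Y = c_i$, and simplifying the $\tfrac1n$- and $\tfrac1{n^2}$-terms to the degree-weighted sums displayed; I would relegate this to the appendix as routine.

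For the final clause — that a feasible primal solution \emph{achieving} $p^\ast$ exists — I would argue that the primal optimum is attained because it is the value $\tfrac{|E|}2\cdot 2 - 2 p(G)\cdots$, i.e.\ it corresponds via $h(G)=\tfrac{|E|}{2}-\tfrac14 p(G)$ to Boppana's $h(G)$, and $h(G)=\max_{d}g(G,d)$ is attained: $g(G,\cdot)$ is concave (stated in the excerpt) and, using the constraint $\sum_i d_i = 2|E|$ from Step~1 of Algorithm~\ref{alg:Boppana} to keep $\lambda((A+D)_S)>0$, the supremum is achieved at some $d^\opt$ (the feasible region can be taken compact after this normalization because $g\to-\infty$ as $\|d\|\to\infty$ on the hyperplane). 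Setting $z = \lambda((A+D^\opt)_S)$ then gives a primal-feasible $(z,d^\opt)$ with objective $nz - \myvec1^Td^\opt = n\lambda - 2|E| = 2|E| - 4h(G) = p(G) = p^\ast$.

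The main obstacle I anticipate is precisely the $\myvec1$-in-the-kernel degeneracy: one cannot apply Slater naively, and the bulk of the careful writing goes into setting up the reduction to $S$ cleanly and verifying that feasibility and objective values transfer in both directions between the ambient and reduced programs — and, relatedly, confirming that~\eqref{dual:sdp:boppana} is genuinely the dual of~\eqref{primal:sdp:boppana} and not merely a related program. The compactness/attainment argument for $h(G)$ is standard once the normalization $\sum_i d_i = 2|E|$ is imposed, so I expect that part to be short.
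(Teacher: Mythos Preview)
Your opening computation is wrong, and the error drives the entire detour through the subspace $S$. The constraint matrix in~\eqref{primal:sdp:boppana} is exactly $zI - P(A+D)P$ (this is how it is derived in the proof of Proposition~\ref{prop:Bopp:reformulation}). Since $P\myvec{1}=0$, one has
\[
\bigl(zI - P(A+D)P\bigr)\,\myvec{1} \;=\; z\,\myvec{1},
\]
not $0$. The ``correction terms'' you list cancel the row sums of $-A-D$, but they do \emph{not} touch the $zI$ block, so the row sums of the full matrix are all equal to $z$. Consequently, for any fixed $d$ and any $z$ strictly larger than the largest eigenvalue of $P(A+D)P$, the constraint matrix is genuinely positive \emph{definite} on all of $\mathbb{R}^n$. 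Slater's condition holds for the primal as written, with no need to pass to a reduced program on~$S$; the whole apparatus of ``set up the reduction, transfer feasibility and objective values in both directions, check that \eqref{dual:sdp:boppana} matches the reduced dual'' is unnecessary.

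This is precisely the paper's argument: it rewrites the primal in the equivalent form $\min z$ subject to $zI - M(d)\succeq 0$ with $M(d)=P(A+\diag(d))P - \tfrac{\myvec{1}^T d}{n}I$ (again from Proposition~\ref{prop:Bopp:reformulation}), picks any $d'$ and any $z'>\lambda(M(d'))$ to get a strictly feasible point, and invokes \cite[Thm.~3.1]{vandenberghe1996} to conclude $p^\ast=d^\ast$. One line.

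Your attainment argument for the second clause is essentially the paper's: normalize $d$ (the paper restricts to mean zero, you fix $\sum_i d_i=2|E|$; either works since $g(G,\cdot)$ is invariant under $d\mapsto d+c\myvec{1}$), use concavity of $g$ together with $g(G,d)\to -\infty$ as any component of $d$ diverges to get a maximizer $d^{\opt}$, and then take $z=\lambda(M(d^{\opt}))$.
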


\begin{proof}
Consider the primal SDP~\eqref{primal:sdp:boppana} of Boppana in the form
\[
  \min_{z\in\mathbb{R}, d\in\mathbb{R}^n} 
  z \quad\text{ s.t. }\quad  zI-M(d) \ \succeq \ 0,  
\]
with $M(d)=P(A+\diag(d))P-\frac{\myvec{1}^Td}{n}I$ and, recall, $P=I-\frac{J}{n}$.
Note that this formulation is equivalent to~\eqref{primal:sdp:boppana}, as we have
shown in the proof of Proposition~\ref{prop:Bopp:reformulation}.
We show that this primal SDP problem is strictly feasible, i.e. that 
there exists an $z'$ and an $d'$ with $z'I-M(d')\succ 0.$
To this aim we choose an arbitrary $d'$ and then some 
$z'>\lambda(M(d'))$.
From \cite[Thm. 3.1]{vandenberghe1996}, 
it follows that the optima of primal and dual obtain the same value.

To prove the second part of the theorem, i.e. there exists a feasible solution achieving
the optimal value $p^*$, consider the following. The function $h(G)$ maximizes $g(G,d)$ 
over vectors $d\in\mathbb{R}^n$, while $d$ can be restricted to vectors of mean zero.
The function $g$ is convex and goes to $-\infty$ for vectors $d$ with some component
going to $\infty$. Thus, $g$ reaches its maximum at some finite $d^\opt$.
Now we choose $d=d^\opt$ and $z=\lambda(M(d^\opt))$. Clearly, this solution is feasible and
obtains the optimal value $p^*$.
\end{proof}

For a  graph $G=(V,E)$, Feige and Kilian express the minimum bisection 
problem as an SDP over an $n\times n$ matrix $Y$ as follows: 
\begin{equation}\label{primal:sdp:feige:kilian}
  h_p(G) = \min_{Y\in \mathbb{R}^{n\times n}}  h_Y(G) \quad\text{ s.t. }\quad
  \ \forall i \ y_{ii}=1, \ \sum_{i,j} y_{ij} =0,  \text{ and } Y\succeq 0,
\end{equation}
where $h_Y(G)=\sum_{\substack{\{i,j\} \in E \\ i<j}} \frac{1-y_{ij}}{2}.$
For proving that the SDP takes as optimum the bisection 
width w.h.p. on $\mathcal{G}_n(p,q)$, the authors consider the dual of their SDP:
\begin{equation}\label{dual:sdp:feige:kilian}
  h_d(G)=\max_{x\in \mathbb{R}^{n} } \left(\frac{|E|}{2}+\frac{1}{4}\sum_i x_i\right)
  \quad\text{ s.t. }\quad
  M=-A-x_0 J-\diag(x) \succeq 0,
\end{equation}
where $A$ is the adjacency matrix of $G$.
They show that the dual takes the value of the bisection width w.h.p. and bounds the optimum of the primal SDP. Although we know that their SDP and Boppana's algorithm both work well 
on $\mathcal{G}_n(p,q)$, 
it was open so far how they are related to each other.
Below we answer this question showing that the formulations are equivalent.
We start with the following:

\begin{theorem}
  The primal SDP~\eqref{primal:sdp:feige:kilian}
  is equivalent 
  to the dual SDP~\eqref{dual:sdp:boppana}, with the relationship 
  $h_p(G)=\frac{|E|}{2} -\frac{1}{4}d(G)$.
\label{theorem:feige:is:dual:boppana}
\end{theorem}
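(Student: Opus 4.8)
The plan is to exhibit an explicit variable substitution that carries the dual SDP~\eqref{dual:sdp:boppana} of Boppana into the primal SDP~\eqref{primal:sdp:feige:kilian} of Feige and Kilian, and vice versa, so that feasible points correspond bijectively and the objective values match up to the stated affine transformation. Recall that in~\eqref{dual:sdp:boppana} the matrix $Y\succeq 0$ is constrained only by $\sum_i y_{ii}=n$ and, for each $i$, the ``row-column-centering'' condition $y_{ii}-\frac1n\sum_j y_{ji}-\frac1n\sum_j y_{ij}+\frac1{n^2}\sum_{k,j}y_{kj}=1$. The key observation is that the left-hand side of that per-$i$ constraint is exactly the $i$-th diagonal entry of the centered matrix $PYP$ where $P=I-\frac1nJ$; so the constraint says $(PYP)_{ii}=1$ for all $i$, and the trace condition $\sum_i y_{ii}=n$ together with $\sum_i(PYP)_{ii}=n$ forces $\tsum(Y)=\sum_{k,j}y_{kj}=0$ as well (this drops out of summing the per-$i$ equations). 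Hence the feasible region of~\eqref{dual:sdp:boppana} maps, via $\widetilde Y = PYP$, into the set $\{\widetilde Y\succeq 0: \widetilde y_{ii}=1,\ \tsum(\widetilde Y)=0\}$, which is precisely the feasible set of~\eqref{primal:sdp:feige:kilian}.

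Next I would check that the objective of~\eqref{dual:sdp:boppana} is invariant under this centering and matches $h_Y$. Expanding $A\bullet(PYP)$ and using $PJ=JP=0$, one gets $A\bullet(PYP) = A\bullet Y - \frac1n\sum_j\deg(j)\sum_i y_{ij} - \frac1n\sum_i\deg(i)\sum_j y_{ij} + \frac1{n^2}(\sum_{i,j}A_{ij})\tsum(Y)$, which is exactly the bracketed objective of~\eqref{dual:sdp:boppana} (the last term vanishes when $\tsum(Y)=0$, but it vanishes anyway for feasible $Y$). So $d(G)=\max\{A\bullet\widetilde Y:\widetilde Y\text{ feasible for~\eqref{primal:sdp:feige:kilian}'s constraint set}\}$. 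On the other hand $h_Y(G)=\sum_{\{i,j\}\in E,i<j}\frac{1-y_{ij}}2 = \frac{|E|}2 - \frac14\sum_{\{i,j\}\in E}2y_{ij}\cdot\frac12$... more precisely $h_Y(G) = \frac{|E|}{2} - \frac12\sum_{i<j}A_{ij}y_{ij} = \frac{|E|}{2} - \frac14 (A\bullet Y)$ since $A\bullet Y=\sum_{i,j}A_{ij}y_{ij}=2\sum_{i<j}A_{ij}y_{ij}$. Therefore minimizing $h_Y$ over the Feige--Kilian feasible set is the same as maximizing $A\bullet Y$ over it, and $h_p(G)=\frac{|E|}2-\frac14 d(G)$, which is the claimed relationship. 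For the reverse direction, given any $\widetilde Y$ feasible for~\eqref{primal:sdp:feige:kilian} one has $P\widetilde Y P=\widetilde Y$ already (because $\tsum(\widetilde Y)=0$ and $\widetilde y_{ii}=1$ imply the centering does nothing on it — this needs a short verification that $\widetilde YJ$ and $J\widetilde Y$ act as zero on the relevant quantities, which follows since $\mathbf 1$ need not be in the kernel, so one must instead argue at the level of the constraints and objective, both of which only see $P\widetilde YP$), so $\widetilde Y$ is itself a feasible point of~\eqref{dual:sdp:boppana} with matching objective.

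The main obstacle I anticipate is the bijectivity bookkeeping in the reverse direction: the map $Y\mapsto PYP$ is not injective on all symmetric matrices (it kills anything supported on the span of $\mathbf 1$), so to claim genuine \emph{equivalence} of the two SDPs — not merely equality of optimal values — I need to show that every feasible $Y$ of~\eqref{dual:sdp:boppana} and its image $PYP$ have the same objective value (done above) and that conversely every feasible $\widetilde Y$ of~\eqref{primal:sdp:feige:kilian} is \emph{itself} feasible for~\eqref{dual:sdp:boppana}. The latter amounts to verifying that $\widetilde y_{ii}=1$ and $\tsum(\widetilde Y)=0$ imply both $\sum_i\widetilde y_{ii}=n$ (immediate) and the per-$i$ centering equation; the centering equation for such $\widetilde Y$ reads $1-\frac1n\sum_j\widetilde y_{ji}-\frac1n\sum_j\widetilde y_{ij}+0=1$, i.e. $\sum_j\widetilde y_{ij}=-\sum_j\widetilde y_{ji}$, and since $\widetilde Y$ is symmetric this says $\sum_j\widetilde y_{ij}=0$ for every $i$ — which is \emph{stronger} than $\tsum(\widetilde Y)=0$ and is \emph{not} forced by the Feige--Kilian constraints. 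This gap means the naive identification fails, and the fix is to observe that one may always \emph{replace} a Feige--Kilian-feasible $\widetilde Y$ by $P\widetilde Y P$ without changing $h_{\widetilde Y}(G)$ (again because $h_Y$ depends on $\widetilde Y$ only through off-diagonal entries against $A$, and $A\bullet(P\widetilde YP)=A\bullet\widetilde Y$ when $\tsum(\widetilde Y)=0$ and the degree-weighted row sums... — here a careful computation is needed) while making all row sums zero, hence landing in the dual-Boppana feasible set. Concretely I would phrase the theorem's proof as: (i) show the objective of~\eqref{dual:sdp:boppana} equals $A\bullet(PYP)$; (ii) show $Y\mapsto PYP$ maps feasible-Boppana-dual onto $\{\widetilde Y\succeq0: \widetilde y_{ii}=1,\ \text{all row sums }0\}$; (iii) show this last set has the same optimal $A\bullet\widetilde Y$ value as the full Feige--Kilian feasible set, by a symmetrization/centering argument; (iv) conclude $d(G)$ and $h_p(G)$ are related by the affine map, and promote this to equivalence of the programs since the optimizers transfer explicitly. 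Step (iii) is where the real content sits; everything else is linear algebra with $P$ and $J$.
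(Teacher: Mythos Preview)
Your overall strategy---match the two feasible sets and show the objectives agree up to the affine map---is exactly what the paper does, and your identification of the per-$i$ constraint of~\eqref{dual:sdp:boppana} as $(PYP)_{ii}=1$ is correct. But at the point you yourself flag as ``the main obstacle'' you take a wrong turn.

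You write that for a Feige--Kilian feasible $\widetilde Y$ the condition $\sum_j\widetilde y_{ij}=0$ for every $i$ is ``stronger than $\tsum(\widetilde Y)=0$ and is \emph{not} forced by the Feige--Kilian constraints.'' This is false, and it is precisely the missing idea. Because $\widetilde Y\succeq 0$, one has $\mathbf{1}^T\widetilde Y\mathbf{1}=\tsum(\widetilde Y)=0$, and for a positive semidefinite matrix $v^T\widetilde Yv=0$ forces $\widetilde Yv=0$ (write $\widetilde Y=M^TM$; then $\|Mv\|^2=0$). Hence $\widetilde Y\mathbf{1}=0$, i.e.\ every row sum vanishes automatically. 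The paper isolates exactly this fact as a separate claim (via the Gram representation) and then the two feasible sets coincide \emph{as sets}: a Feige--Kilian feasible $\widetilde Y$ already satisfies $(P\widetilde YP)_{ii}=\widetilde y_{ii}=1$ and $\sum_i\widetilde y_{ii}=n$, so it is Boppana-dual feasible with no modification, and the degree-weighted terms in the objective of~\eqref{dual:sdp:boppana} vanish on the nose.

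Your proposed workaround---replace $\widetilde Y$ by $P\widetilde YP$---does not salvage the argument. If the row sums $r_i=\sum_j\widetilde y_{ij}$ are not all zero, then $(P\widetilde YP)_{ii}=1-\tfrac{2}{n}r_i\neq 1$, so $P\widetilde YP$ fails the second constraint of~\eqref{dual:sdp:boppana} (since $P^2=P$, the constraint for $Y'=P\widetilde YP$ reads $(PY'P)_{ii}=(P\widetilde YP)_{ii}$). Thus the centering map lands outside the Boppana-dual feasible set unless the row sums were already zero---which, by the PSD fact above, they are. Once you insert that one-line observation, your proof goes through and in fact coincides with the paper's.
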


From  Theorems  \ref{th:sdp:boppana:dual:primal:eq} and \ref{theorem:feige:is:dual:boppana} we get 

\begin{corollary}\label{corr:feige:is:boppana}
  Let $G$ be an arbitrary graph. Then for the lower bound $h(G)$ 
  of Boppana's algorithm and for the objective functions $h_p(G)$
  of the primal SDP~\eqref{primal:sdp:feige:kilian}, resp. $h_d(G)$ 
  of the dual SDP~\eqref{primal:sdp:feige:kilian}
  of Feige and Kilian 
  \cite{Feige2001} it is true
  \[ h(G)= h_p(G)=h_d(G).\]
\end{corollary}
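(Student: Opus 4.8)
The plan is to obtain Corollary~\ref{corr:feige:is:boppana} purely by chaining together the two theorems that precede it, so no new technical work is really needed --- the corollary is a bookkeeping step that reconciles the various objective-function values. First I would recall the chain of inequalities that Feige and Kilian establish, namely $\bw(G)\ge h_p(G)\ge h_d(G)$, which holds for every graph $G$ by weak SDP duality applied to the pair \eqref{primal:sdp:feige:kilian}--\eqref{dual:sdp:feige:kilian}. This is stated in the text introducing Section~\ref{sec:comparing:boppana:feige}, so I may simply invoke it.

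Next I would invoke Theorem~\ref{theorem:feige:is:dual:boppana}, which says the Feige--Kilian primal \eqref{primal:sdp:feige:kilian} is equivalent to Boppana's dual \eqref{dual:sdp:boppana} with $h_p(G)=\tfrac{|E|}{2}-\tfrac14 d(G)$, and Theorem~\ref{th:sdp:boppana:dual:primal:eq}, which says Boppana's primal \eqref{primal:sdp:boppana} and dual \eqref{dual:sdp:boppana} have equal optima, i.e.\ $p(G)=d(G)$ (the $p^*=d^*$ of that theorem), with $h(G)=\tfrac{|E|}{2}-\tfrac14 p(G)$ from Proposition~\ref{prop:Bopp:reformulation}. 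Combining these gives
\[
  h(G)\;=\;\frac{|E|}{2}-\frac14 p(G)\;=\;\frac{|E|}{2}-\frac14 d(G)\;=\;h_p(G).
\]
So $h(G)=h_p(G)$, which is exactly the equality Feige and Kilian conjectured.

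It then remains to pin down $h_d(G)$. Here I would note that the dual SDP~\eqref{dual:sdp:feige:kilian} of Feige and Kilian is, up to the change of variables used in the proof of Theorem~\ref{theorem:feige:is:dual:boppana}, Boppana's \emph{primal} SDP~\eqref{primal:sdp:boppana} --- indeed both are maximization problems in $O(n)$ scalar variables over a single matrix inequality, and the translation $x_0,\,x\mapsto z,\,d$ together with the identity $\tsum(A)=2|E|$ matches the objectives $\tfrac{|E|}{2}+\tfrac14\sum_i x_i$ and $\tfrac{|E|}{2}-\tfrac14(nz-\myvec1^Td)$. Thus $h_d(G)=\tfrac{|E|}{2}-\tfrac14 p(G)=h(G)$, closing the loop $h(G)=h_p(G)=h_d(G)$. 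The one point that needs a word of care --- and the only place where anything could go wrong --- is the identification of \eqref{dual:sdp:feige:kilian} with \eqref{primal:sdp:boppana}: one must check that the linear constraint $\sum_i d_i^{\opt}=2|E|$ imposed in Step~1 of Boppana's algorithm, and the mean-zero normalization of $d$, are consistent with the extra projector terms $\tfrac{JA+AJ}{n}$, $-\tfrac{\tsum(A)J}{n^2}$, etc.\ appearing in \eqref{primal:sdp:boppana} but absent from the raw matrix $-A-x_0J-\diag(x)$ of \eqref{dual:sdp:feige:kilian}. Since Theorem~\ref{th:sdp:boppana:dual:primal:eq} already guarantees an optimal $(z,d)$ is attained and $p(G)=d(G)$, this consistency check is the substance; everything else is substitution of the stated relationships.
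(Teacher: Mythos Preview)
Your derivation of $h(G)=h_p(G)$ is exactly what the paper intends: combine Proposition~\ref{prop:Bopp:reformulation}, Theorem~\ref{th:sdp:boppana:dual:primal:eq} and Theorem~\ref{theorem:feige:is:dual:boppana} to get
\[
h(G)=\frac{|E|}{2}-\frac14\,p(G)=\frac{|E|}{2}-\frac14\,d(G)=h_p(G).
\]
The paper offers no further argument than ``from Theorems~\ref{th:sdp:boppana:dual:primal:eq} and~\ref{theorem:feige:is:dual:boppana}'', so on this half you are at least as explicit as the paper.

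The part that needs correction is your treatment of $h_d(G)$. You assert that \eqref{dual:sdp:feige:kilian} is Boppana's primal \eqref{primal:sdp:boppana} ``up to the change of variables used in the proof of Theorem~\ref{theorem:feige:is:dual:boppana}'', but that proof transforms the \emph{matrix}-variable programs (FK primal $\leftrightarrow$ Boppana dual) and gives no map between the vector-variable programs. Under the natural substitution $x=d-z\myvec1$ the objectives do match, but the constraints do not: writing $B=A+\diag(d)$, Boppana's primal asks for $zI-PBP\succeq 0$, whereas eliminating $x_0$ from \eqref{dual:sdp:feige:kilian} asks for $\exists\,x_0:\ zI-B-x_0J\succeq 0$. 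A Schur-complement computation shows the latter is satisfiable when $z>\lambda_{\max}(PBP|_S)$ but in general \emph{fails} at $z=\lambda_{\max}(PBP|_S)$ (one needs $x_0\to-\infty$). So the feasible sets differ on the boundary; only the optimal \emph{values} coincide, via a limiting argument you have not supplied. Your remark that ``this consistency check is the substance'' is therefore accurate, but the check is more than a bookkeeping formality.

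A cleaner way to close the loop --- and presumably what the paper has in mind --- is to invoke strong duality for the Feige--Kilian pair directly: the program \eqref{dual:sdp:feige:kilian} is strictly feasible (take any $x_0$ and each $x_i$ sufficiently negative so that $-A-x_0J-\diag(x)\succ 0$), hence Slater's condition gives $h_p(G)=h_d(G)$ with no gap. Combined with $h(G)=h_p(G)$ this yields the full statement without needing to match the two vector-variable SDPs constraint-by-constraint.
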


Thus, the both algorithms provide for any graph $G$ the same objective value.
We want to point out another important fact: 
the bisection algorithm proposed in~\cite{Feige2001} use an SDP
 formulation, where the variables are a matrix with dimension $n\times n$. Thus, there are
 $n^2$ variables for a graph with $n$ vertices.
In contrast, Boppana's algorithm uses $n$ variables in the convex optimization problem.
If we consider the dual SDP, we again have only $n+1$ variables. 
However, due to Corollary~\ref{corr:feige:is:boppana},
we can't be better than Boppana's algorithm.

Abbe et~al.~\cite{abbe2016exact} and independently 
Mossel et~al. \cite{mossel2015consistency}
have shown, that there is a sharp threshold
phenomenon when considering the $\cG_n(p,q)$ model with $p=\alpha\log(n)/n$ and
$q=\beta\log(n)/n$ for fixed constants $\alpha,\beta$, $\alpha>\beta$. Exact recovery
of the planted bisection is possible if and only if $(\sqrt{\alpha}-\sqrt{\beta})^2>2$
(see e.g. \cite{mossel2015consistency} for a formal definition of exact cluster recovery problem).
Hajek et~al.~\cite{hajek2016achieving} show, than an SDP equivalent to
the one of Feige and Kilian achieves this bound. Since, due to 
Corollary~\ref{corr:feige:is:boppana}, we know that
the SDP is equivalent to Boppana's algorithm, we conclude that also Boppana's algorithm
achieves the optimal threshold for finding and certifying the optimal bisection in the
considered model. We get:
\begin{theorem}
  Let $\alpha$ and $\beta$, $\alpha>\beta$, be constants. Consider the graph model
$\cG_n(p,q)$ with $p=\alpha\log(n)/n$ and $q=\beta\log(n)/n$. Then, as $n\to\infty$,
if $(\sqrt{\alpha}-\sqrt{\beta})^2>2$, Boppana's algorithm recovers the planted bisection
w.h.p. If $(\sqrt{\alpha}-\sqrt{\beta})^2<2$, no algorithm is able to recover the planted
bisection w.h.p.
\end{theorem}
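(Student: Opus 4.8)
The plan is to prove the two halves separately. The lower bound — if $(\sqrt\alpha-\sqrt\beta)^2<2$ then no algorithm recovers the planted bisection w.h.p.\ — is exactly the information-theoretic converse of Abbe et~al.~\cite{abbe2016exact} and Mossel et~al.~\cite{mossel2015consistency}; it applies to \emph{every} estimator, so here one simply quotes it. All the remaining work concerns the upper bound, i.e.\ that Boppana's algorithm recovers the planted bisection w.h.p.\ when $(\sqrt\alpha-\sqrt\beta)^2>2$.

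For the upper bound I would follow the chain already indicated before the statement. Write $\sigma\in\{+1,-1\}^n$ for the indicator of the planted partition; it is a bisection vector. By Hajek, Wu and Xu~\cite{hajek2016achieving}, when $\sqrt\alpha-\sqrt\beta>\sqrt2$ the following holds w.h.p.\ for $G\sim\cG_n(p,q)$: the Feige--Kilian primal SDP~\eqref{primal:sdp:feige:kilian} has $\sigma\sigma^T$ as its \emph{unique} optimum and $h_p(G)=\bw(G)$ — so in particular the planted bisection is the unique minimum bisection of $G$ — and the accompanying dual certificate $M$ for~\eqref{dual:sdp:feige:kilian} satisfies $M\succeq 0$, $M\sigma=0$ and $\operatorname{rank}(M)=n-1$. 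By Corollary~\ref{corr:feige:is:boppana} we have $h(G)=h_p(G)=h_d(G)$, hence $h(G)=\bw(G)$ w.h.p., so Step~1 of Boppana's algorithm returns the correct value $\bw(G)$.

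It remains to verify that Step~2 outputs the planted bisection. Since $h(G)=\bw(G)$ and the algorithm enforces $\sum_i d^\opt_i=2|E|\ge 4\bw(G)-2|E|$, Lemma~\ref{lemma:y:largest:eigenvector} applies to $B^\opt=A+\diag(d^\opt)$ and shows that $\sigma$ is an eigenvector of $B^\opt_S$ for its largest eigenvalue $\lambda(B^\opt_S)$. I would then argue that this eigenvalue is \emph{simple} w.h.p.: carrying the rank-$(n-1)$ dual certificate of~\cite{hajek2016achieving} through the SDP equivalences of Section~\ref{sec:comparing:boppana:feige} (Theorems~\ref{theorem:feige:is:dual:boppana} and~\ref{th:sdp:boppana:dual:primal:eq}), the certificate corresponds, up to an additive multiple of the identity, to Boppana's primal slack matrix $\lambda(B^\opt_S)\,I-B^\opt_S$ at the optimum, so this slack matrix has rank $n-1$, i.e.\ the top eigenspace of $B^\opt_S$ is one-dimensional and equals $\operatorname{span}(\sigma)$. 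Given simplicity, the eigenvector chosen in Step~2 is a nonzero multiple of $\pm\sigma$, and since $\sigma$ is balanced with entries in $\{+1,-1\}$, splitting at the median (followed by the then-vacuous $\sum_i\hat x_i>0$ tie-break) returns $\hat x=\sigma$; hence $\cutwidth(\hat x)=\bw(G)=h(G)$ and the algorithm outputs ``optimum bisection'' having reconstructed the planted partition. Alternatively one can run the modified Algorithm~\ref{alg:modif:Boppana}: since $h(G)=\bw(G)$, Theorem~\ref{theorem:bop:modified:works} guarantees it reconstructs every optimal bisection, which w.h.p.\ is only the planted one — but one must then also check that the multiplicity of $\lambda(B^\opt_S)$ is small enough for the brute-force step to stay polynomial, which is again supplied by the rank-$(n-1)$ certificate (it forces multiplicity $1$).

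The main obstacle is precisely this simplicity claim for Step~2. It cannot be borrowed from the classical analysis of Boppana~\cite{Boppana1987} (and Blumofe~\cite{Blumofe1993}), because near the threshold the density gap is only $p-q=(\alpha-\beta)\log n/n=\Theta\!\big(\sqrt{p\ln n/n}\big)$ with a constant that may fall below the ``sufficiently large $c$'' those arguments require, so assumption~\eqref{eq:dens:diff} need not hold. Hence simplicity has to be extracted from the SDP side, and the delicate points are (i) checking that the change of variables behind Theorems~\ref{theorem:feige:is:dual:boppana} and~\ref{th:sdp:boppana:dual:primal:eq} maps the rank of the Feige--Kilian dual certificate onto the rank of Boppana's primal slack matrix (equivalently, onto the multiplicity of $\lambda(B^\opt_S)$), and (ii) ensuring this applies at the particular maximizer $d^\opt$ produced by the algorithm, for which one uses that the optimal correction vector is unique up to translation by $c\myvec{1}$ (in the spirit of Lemma~\ref{lemma:dopt:unique}).
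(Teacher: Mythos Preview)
Your approach matches the paper's exactly on the two main ingredients: the converse is quoted from \cite{abbe2016exact}, and the achievability is obtained by combining \cite[Thm.~2]{hajek2016achieving} (the Feige--Kilian SDP attains the optimum above threshold) with the SDP equivalence of Theorem~\ref{theorem:feige:is:dual:boppana} / Corollary~\ref{corr:feige:is:boppana}. The paper's own proof is in fact only those two sentences; it stops at ``the same holds for Boppana's algorithm'' once $h(G)=h_p(G)=\bw(G)$ is established.

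Where you diverge is that you go further than the paper does: you worry about Step~2 --- whether the eigenvector actually yields the planted bisection --- and propose to transport the rank-$(n-1)$ dual certificate of \cite{hajek2016achieving} through the SDP equivalences to force simplicity of $\lambda(B^\opt_S)$. The paper does not carry out (or even mention) this step in the proof of the present theorem; it treats the value equality as sufficient. Your concern is legitimate, since near the threshold the density gap $p-q$ need not meet assumption~\eqref{eq:dens:diff} with a large enough constant for the classical Boppana/Blumofe simplicity argument to apply, so the recovery part cannot simply be borrowed from Theorem~\ref{thm:boppana:certify}. In that sense your proposal is more complete than the paper's proof; the rank-transfer argument you sketch is the natural route, though you are right that one must check carefully that the change of variables in Proposition~\ref{prop:Bopp:reformulation} identifies the kernel of the Feige--Kilian dual certificate with the top eigenspace of $B^\opt_S$ (modulo the $\mathbf{1}$ direction killed by $P$).
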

\begin{proof}
  The second part of the theorem is exactly the statement from \cite{abbe2016exact}. The first part, i.e. that Boppana's algorithm is able to recover the bisection, follows from \cite[Thm. 2]{hajek2016achieving}. 
  Hajek et al.  show, that for $(\sqrt{\alpha}-\sqrt{\beta})^2>2$ the SDP of Feige and Kilian obtain the optimal solution. Due to Theorem \ref{theorem:feige:is:dual:boppana}, the same holds for Boppana's algorithm.
\end{proof}

\section{Discussion and Open Problems}

Boppana's spectral method is a practically implementable heuristic. Computing eigenvalues and eigenvectors is well-studied and can be done very efficiently. Falkner, Rendl and Wolkowicz  \cite{falkner1994computational} show in a numerical study that using spectral techniques for graph partitioning is very robust and upper and lower bounds for the bisection width can be obtained such that the relative gap is often just a few percentage points apart. In \cite{tu2000algorithms} and \cite{tu1998spectral}, Tu, Shieh and Cheng present numerical experiments including results for Boppana's algorithm. They verify that the algorithm indeed has good average case behavior over certain probability distributions on graphs.
We conducted further experiments on the graph model $\cR_n(r,b)$ which
indicated, that Boppana's algorithm also works for $r=5$, but not for
$r=3$ and $r=4$. An interesting question arising is, which properties of 3- and
4-regular graphs from the planted bisection model let the algorithm fail.

\bibliography{literature}

\section{Proofs}\label{sec:appendix}

\subsection*{Proof of Lemma~\ref{lemma:y:largest:eigenvector}}
We know
\begin{align*}
 && g(G,d^\opt)=\bw(G) &=\frac{\tsum(B^\opt)-n \lambda(B^\opt_S)}{4} \eqcomment{B^\opt=A+\diag(d^\opt)}\\
\Leftrightarrow && \lambda(B^\opt_S) &=\frac{\tsum(B^\opt)-4\bw(G)}{n} \eqlabel{tmp:y:ev:lambda}\\
&&&\geq\frac{\tsum(A)+4\bw(G)-2|E|-4\bw(G)}{n}=0 \eqcomment{\sum_i d^\opt_i\geq 4\bw(G)-2|E|}
\end{align*}
Thus, we conclude that $\lambda(B^\opt_S)\geq 0$.

We compute the value of the Rayleigh quotient of $B^\opt_S$ 
and the optimum bisection vector $y$:
\begin{align*}
  \frac{y^T B^\opt_S y}{\|y\|^2} &= \frac{y^T PB^\opt P y}{\|y\|^2}= \frac{y^T B^\opt y}{n} \eqcomment{y \text{ has mean zero}}\\
&= \frac{y^T (A+\diag(d^\opt)) y}{n}= \frac{y^T A y+\sum_i d^\opt_i}{n} \eqcomment{y_i^2=1 \text{ on the diagonal}}
\end{align*}
We have $y^TAy=\sum_{i,j} A_{ij}y_i y_j$.
According to the definition $A_{ij}=1$ if there is an edge $\{i,j\}\in E$. Edges with both vertices in the same part contribute (twice) by $1$ to the sum. Cut edges on the other hand contribute (twice) by $-1$. There are $\bw(G)$ cut edges. Hence, $y^TAy=\tsum(A)-4\bw(G)$
and we get:
\[\frac{y^T A y+\sum_i d^\opt_i}{n}=\frac{\tsum(A)-4\bw(G)+\sum_i d^\opt_i}{n}=\frac{\tsum(B^\opt)-4\bw(G)}{n}\overset{\eqref{tmp:y:ev:lambda}}{=}\lambda(B^\opt_S).\]

Since the Rayleigh quotient of $B^\opt_S$ and $y$  
takes the value $\lambda(B^\opt_S)$ and $\lambda(B^\opt_S)\geq 0$, we conclude 
that $y$ is an eigenvector of $B^\opt_S$ corresponding to the eigenvalue $\lambda(B^\opt_S)$.\qed

\subsection*{Proof of Theorem~\ref{theorem:bop:modified:works}}
Due to Lemma~\ref{lemma:y:largest:eigenvector}, 
all optimum bisection vectors $y$ are found in this subspace. 
We show even more, namely that 
non-optimum bisection vectors $y$ are not in this subspace. 
For contradiction, assume $y$ is eigenvector. Consider
\begin{align*}
  \lambda(B^\opt_S)&=\frac{y^TB^\opt_Sy}{\|y\|}=\frac{y^TB^\opt y}{\|y\|}=
  \frac{y^T(A+\diag(d^\opt))y}{\|y\|}
  =\frac{y^TAy+\sum_i d^\opt_i}{\|y\|}.
\end{align*}
For a bisection vector $y$, the value $y^TAy$ 
counts the number of cut edges. 
Since this has been minimized, $y$ has to be an optimal bisection.\qed

\subsection*{Proof of Theorem~\ref{theorem:add:edge:within:part}}
We start by proving the first part, i.e. when  
we add an edge~$\{u,v\}$. Let $A$ and $A'$ denote the adjacency matrices of $G$ and $G'$, 
respectively. It holds $A'=A+A^\Delta$ with 
$A^\Delta_{uv}=A^\Delta_{vu}=1$ and zero everywhere else. 
Since $h(G)=\bw(G)$, there exists a $d^\opt$ with 
$g(G,d^\opt)=\bw(G)$. For $G'$, we set $d'=d^\opt+d^\Delta$ with
\[ 
  d^\Delta_i=\begin{cases}
  -1 & \text{ if } i=u \text{ or } i=v,\\
  0 & \text{ else.}
  \end{cases}
\]

W.l.o.g. we restrict ourselves to solutions, with  
$\sum_i d^\opt_i=4\bw(G)-2|E|$ and hence have $\lambda(B^\opt_S)=0$ 
where $B^\opt=A+\diag(d^\opt)$. 
Since $\sum_i d'_i=4\bw(G)-2|E|-2=4\bw(G)-2|E'|$, we want to show 
that $\lambda(B'_S)=0$ holds, where $B'=A'+\diag(d')$. Since 
$B'
=A+A^\Delta+\diag(d^\opt+d^\Delta)=B^\opt+A^\Delta+\diag(d^\Delta)$, we get
\begin{align*}
\lambda(B'_S)&=\max_{x\in\swozero} \frac{x^T (B^\opt+A^\Delta+\diag(d^\Delta)) x}{\|x\|^2}\\
  &=\max_{x\in\swozero} \frac{x^T B^\opt x +x^T(A^\Delta+\diag(d^\Delta)) x}{\|x\|^2}\\
  &=\max_{x\in\swozero} \frac{x^T B^\opt x +2x_ux_v-x_u^2-x_v^2}{\|x\|^2}\\
  &=\max_{x\in\swozero} \frac{x^T B^\opt x -(x_u-x_v)^2}{\|x\|^2}\\
  &\leq \max_{x\in\swozero} \frac{x^T B^\opt x}{\|x\|^2}=0.
\end{align*}
For the bisection vector of an minimal cut size,  
we have $x_u=x_v=1$ or $x_u=x_v=-1$ and thus the 
last inequality is equality. 
Hence, $\lambda(B'_S)=0$ and $g(G',d')=\bw(G')$. This 
completes the proof for the first part.

The proof for the second part is similar to the above one. 
Assume $\{u,v\}$, with $u\in Y_1$ and $v\in Y_2$, is a removed edge from $G$.
We define $d^\Delta$ as we have done above and we let $A'=A+A^\Delta$, 
with $A^\Delta_{uv}=A^\Delta_{vu}=-1$ and zero everywhere else.
Since $\sum_i d'_i=4\bw(G)-2|E|-2=4(\bw(G)-1)-2(|E|-1)=4\bw(G')-2|E'|$, 
our aim is  to show that $\lambda(B'_S)=0$ holds with $B'=A'+\diag(d')$. Indeed we have:
\begin{align*}
\lambda(B'_S)&=\max_{x\in\swozero} \frac{x^T B^\opt x +x^T(A^\Delta+\diag(d^\Delta)) x}{\|x\|^2}\\
  &=\max_{x\in\swozero} \frac{x^T B^\opt x -2x_ux_v-x_u^2-x_v^2}{\|x\|^2}\\
  &=\max_{x\in\swozero} \frac{x^T B^\opt x -(x_u+x_v)^2}{\|x\|^2}\\
  &\leq \max_{x\in\swozero} \frac{x^T B^\opt x}{\|x\|^2}=0.
\end{align*}
For the bisection vector of an optimal bisection size, 
we have $x_u=1,x_v=-1$ or $x_u=-1,x_v=1$ and hence the last inequality is equality. 
We can conclude 
\[g(G',d')=\frac{\tsum(B')-n\lambda(B_S')}{4}=\frac{\tsum(B')-0}{4}=\frac{4\bw(G)-4}{4}=\bw(G)-1.\]
This completes the proof of the theorem.\qed 

\subsection*{Proof of Lemma~\ref{lemma:dopt:unique}}

The following fact is needed for the proof of Lemma~\ref{lemma:dopt:unique}. It has been observed independently in~\cite{Blumofe1993}.
\begin{lemma}
  Let $G$ be a graph with $h(G)=\bw(G)$
  and let $y$ be the bisection vector of an arbitrary optimum solution.
  Then for every $d^\opt$,
  with $g(G,d^\opt)=\bw(G)$ and $\sum_i d^\opt_i=4\bw(G)-2|E|$, there exists some $\alpha^{(y)}\in\mathbb{R}$ such that $d^\text{opt}=d^{(y)}+\alpha^{(y)} y$.
\label{theorem:dy:alpha}
\end{lemma}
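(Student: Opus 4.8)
The plan is to leverage the structure already extracted in Lemma~\ref{lemma:y:largest:eigenvector}. Since $h(G)=\bw(G)$, any $d^\opt$ with $g(G,d^\opt)=\bw(G)$ and $\sum_i d^\opt_i = 4\bw(G)-2|E|$ satisfies in particular the hypothesis $\sum_i d^\opt_i \ge 4\bw(G)-2|E|$ of that lemma, so $y$ is an eigenvector of $B^\opt_S=(A+\diag(d^\opt))_S$ for its largest eigenvalue $\lambda(B^\opt_S)$. First I would pin down that eigenvalue exactly: substituting the \emph{equality} $\sum_i d^\opt_i = 4\bw(G)-2|E|$ into the identity $\lambda(B^\opt_S)=(\tsum(B^\opt)-4\bw(G))/n$ obtained in the proof of Lemma~\ref{lemma:y:largest:eigenvector}, together with $\tsum(A)=2|E|$, gives $\tsum(B^\opt)=4\bw(G)$ and hence $\lambda(B^\opt_S)=0$.

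Next I would convert "eigenvector of $B^\opt_S$ with eigenvalue $0$" into a statement about $B^\opt=A+\diag(d^\opt)$ itself. Because $y$ has mean zero we have $Py=y$, so $0=B^\opt_S y = P B^\opt P y = P B^\opt y$, which means $B^\opt y\in\ker P=\operatorname{span}(\myvec 1)$; that is, $(A+\diag(d^\opt))\,y = c\,\myvec 1$ for some scalar $c\in\mathbb{R}$.

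The second ingredient is a direct computation for $d^{(y)}=-\diag(y)Ay$: the $i$-th component of $\diag(d^{(y)})y$ is $d^{(y)}_i y_i = -y_i (Ay)_i y_i = -(Ay)_i$, using $y_i^2=1$. Hence $(A+\diag(d^{(y)}))\,y = Ay-Ay = \myvec 0$. Subtracting the two vector identities then yields $(\diag(d^\opt)-\diag(d^{(y)}))\,y = c\,\myvec 1$, whose $i$-th component reads $(d^\opt_i-d^{(y)}_i)\,y_i = c$; since $y_i\in\{+1,-1\}$ we have $1/y_i=y_i$, so $d^\opt_i-d^{(y)}_i = c\,y_i$ for every $i$, i.e.\ $d^\opt = d^{(y)}+c\,y$, and we take $\alpha^{(y)}=c$. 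For completeness I would also note the normalization is automatically consistent, since $\sum_i d^{(y)}_i = -y^T A y = \tsum(A)-y^TAy$'s value $4\bw(G)-2|E|$ (again from the proof of Lemma~\ref{lemma:y:largest:eigenvector}) and $\sum_i \alpha^{(y)}y_i = 0$.

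The only step requiring genuine care is the first one: one must use the exact normalization $\sum_i d^\opt_i = 4\bw(G)-2|E|$ (not merely the inequality) so that $\lambda(B^\opt_S)$ is precisely $0$ and $y$ lands in $\ker B^\opt$ after projection; without this the relation between $d^\opt$ and $d^{(y)}$ would only be up to an additive multiple of $\myvec 1$ as well. Everything after that is routine linear algebra driven by $y_i^2=1$.
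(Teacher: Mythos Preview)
Your argument is correct and follows essentially the same route as the paper: from $\lambda(B^\opt_S)=0$ and $Py=y$ one deduces $B^\opt y\in\operatorname{span}(\myvec 1)$, and then the identity $\diag(y)\,\diag(d)\,y=d$ (equivalently your subtraction trick using $(A+\diag(d^{(y)}))y=\myvec 0$) converts this into $d^\opt=d^{(y)}+\alpha y$. The only cosmetic difference is that the paper multiplies by $\diag(y)$ directly rather than subtracting the two vector identities, which amounts to the same computation.
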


\begin{proof}
The assumptions imply $\lambda(B^\opt_S)=0$ with $B^\opt=A+\diag(d^\opt)$. Next,
due to a fact stated in Lemma~\ref{lemma:y:largest:eigenvector}, $y$ is an eigenvector of $B^\opt_S$ corresponding to the largest eigenvalue $0$. We get the following sequence 
of equivalent conditions:
\begin{align*}
&& B^\opt_S y &= 0y=(0,\ldots,0)^T\\
\Leftrightarrow && PB^\opt P y &=(0,\ldots,0)^T\\
\Leftrightarrow && PB^\opt y &=(0,\ldots,0)^T \eqcomment{y \text{ has mean zero}}\\
\intertext{Since $P$ projects into the zero vector only vectors 
of the  subspace spanned by the identity vector, 
thus we can continue for some
$\alpha\in\mathbb{R}$ }
\Leftrightarrow && B^\opt y &=\alpha (1,\ldots,1)^T\\
\Leftrightarrow && (A+D^\opt) y &=\alpha (1,\ldots,1)^T \eqcomment{D^\opt=\diag(d^\opt)}\\
\Leftrightarrow && Ay+D^\opt y &=\alpha (1,\ldots,1)^T\\
\Leftrightarrow && D^\opt y &=-Ay+\alpha (1,\ldots,1)^T\\
\intertext{In the next step, we multiply the vectors in the equation with the diagonal matrix $\diag(y)$. Since the $y_i\in\{1,-1\}$, the multiplication is revertible and hence ``$\Leftrightarrow$''.}
\Leftrightarrow && \diag(y)D^\opt y &=-\diag(y)Ay+\alpha \diag(y)(1,\ldots,1)^T\\
\Leftrightarrow && d^\opt &=-\diag(y)Ay+\alpha y\\
\Leftrightarrow && d^\opt &=d^{(y)}+\alpha y \eqcomment{\text{Def.~\ref{def:dy}}}
\end{align*}
This completes the proof. Note that $\sum_i y_i=0$ and $\sum_i d^{(y)}_i=4\bw(G)-2|E|$.
\end{proof}

\begin{proof}[Proof of Lemma~\ref{lemma:dopt:unique}]
Consider two optimum bisections with bisection vectors $y$ and $y'$. (Note that 
we consider $y$ and $-y$ as same bisection.) 
For contradiction, assume there are two different $d^\opt_1\ne d^\opt_2$
(up to a constant transition). 
Due to Lemma~\ref{theorem:dy:alpha} 
we have that for every  $y$ 
representing an optimum bisection values $d^\opt_1$ and $d^\opt_2$  can be expressed as
$d^\opt_1=d^{(y)}+\alpha_1 y$ and $d^\opt_2=d^{(y)}+\alpha_2 y$. The difference is then
\[d^\opt_1-d^\opt_2=(\alpha_1-\alpha_2) y.\]
For $y'$ representing an optimum bisection, 
we have analogously $d^\opt_1=d^{(y')}+\beta_1 y'$ and $d^\opt_2=d^{(y')}+\beta_2 y'$ with difference
\[d^\opt_1-d^\opt_2=(\beta_1-\beta_2) y'.\]
We conclude
\[(\alpha_1-\alpha_2) y=(\beta_1-\beta_2) y'.\]

Since $y$ and $y'$ are linearly independent, 
we conclude $\alpha_1=\alpha_2$ and $\beta_1=\beta_2$. 
This means, if there are two optimum bisections, then 
there is only one $d^\opt$ and $\alpha$ is unique!
\end{proof}

\subsection*{Proof of Theorem~\ref{thm:main:neg:res}}
Let $G$ be a graph sampled from the subcritical phase and $(V_1,V_{-1})$ be the planted bisection. Coja-Oghlan \cite{Coja2005} defines two sets of vertices:
\[N_i=\{v\in V_i: e(v,V_i)=e(v,V_{-i})\}\]
\[N^*_i=\{v\in N_i: N(v)\setminus \mathrm{core}(G)=\emptyset\}\]
Let further $(Y_1,Y_{-1})$ be an optimal bisection. Coja-Oghlan claims that, w.h.p., $\#(Y_i\cap N^*_i)\geq \mu/8$ (eventually swap the parts), where $\mu=E(\#N_{1}+\#N_{-1})$ and $\mu \geq n^{1-\Theta(\gamma)}$ with $n(p'-p)=\sqrt{np'\cdot \gamma \ln n}$, $\gamma=\caO(1)$ \cite[page  122]{cojahabil}. Then there are $\mathrm{exp}(\Omega(\mu))$ many optimal bisections.
On the other hand, we will show that, assuming that Boppana works on $G$, the probability that $\#(Y_i\cap N^*_i)\geq 2$ will tend to 0, which means that with w.h.p., Boppana will not work on $G$.

Consider any pair of vertices $v_1\in Y_1\cap N^*_1$ and $v_{-1}\in Y_{-1}\cap N^*_{-1}$. $v_1$ and $v_{-1}$ are not connected by an edge, since they have only neighbors in the core of $G$. Furthermore, they both have balanced degree. Thus, we can apply Lemma~\ref{lemma:boppana:same:neighbors} and conclude, that $v_1$ and $v_{-1}$ have the same neighbors. In direct consequence, all vertices in $Y_i\cap N^*_i$, $i\in\{1,-1\}$ have the same neighbors and the same number of edges to each part as well. We denote this number by $k=e(v_1,V_1)$.

In the following, we will consider sets of 4 vertices, while two are chosen from $Y_1\cap N^*_1$ and two from $Y_{-1}\cap N^*_{-1}$. By our assumption of $\#(Y_i\cap N^*_i)\geq 2$, we can choose at least one such set w.h.p.

Let us first rule out two edge cases. In the first case, the vertices have degree $k=0$. Then Boppana does not work due to Theorem~\ref{theorem:no:two:isolated}. In the second case, the vertices have maximal many edges, i.e. $k=n/2-2$ many edges to each part. W.h.p., a graph does not even have two vertices in each part with $k$ edges:
\[\frac{(n/2)^2 (n/2-1)^2}{4} (p'^{n/2-2}p^{n/2-2})^4 (1-p)^4 (1-p')^2\rightarrow 0\]

Thus, we have to consider $1\leq k\leq n/2-2$. Let $C^{(k)}_i=\{v\in V_i : e(v,V_i)=e(v,V_{-i})=k\}$ be the set of vertices with a balanced number of exactly $k$ edges to each part. With the $k$ from above, we have $Y_i\cap N^*_i\subseteq C^{(k)}_i$.

We want to estimate the expected number of 4-element sets $\{v_1,u_1,v_{-1},u_{-1}\}\subseteq C^{(k)}_1 \cup C^{(k)}_{-1}$ with $v_1,u_1\in C^{(k)}_1$ and $v_{-1},u_{-1}\in C^{(k)}_{-1}$, where all vertices have the same neighbors. Let us take $v_1$ as reference vertex and thus the $k$ edges from $v_1$ to $V_1$ as well as $k$ edges to $V_{-1}$ are given. Now we estimate the probability, that $v_{-1},u_1,u_{-1}$ have exactly the same neighbors. For each vertex and each part, the $k$ neighbors are chosen independently, since the four vertices are not connected to each other. In both parts, there are $n/2-2$ possible neighbors. This makes $\binom{n/2-2}{k}\geq \binom{n/2-2}{1}=n/2-2$ possibilities for the $k$ edges in one part and only one of them coincides with the edges of $v_1$. For 3 vertices to have the same neighbors as $v_1$ in two parts each, the probability is at most $\frac{1}{(n/2-2)^6}$. The expected number of 4 vertices as described with the same neighbors is therefore
\[E(\#4-\mathrm{elem-set})\leq \binom{n/2}{2}^2 \cdot \frac{1}{(n/2-2)^6}\leq \frac{(n/2)^4}{(n/2-2)^6}\rightarrow 0\]

This means, w.h.p. we will not find any 4-element set. In consequence, $\#(Y_i\cap N^*_i)\geq 2$ may not be true w.h.p.\qed

\subsection*{Proof of Lemma~\ref{lemma:boppana:same:neighbors}}
  Let $y$ be the bisection vector corresponding to the optimal bisection in the lemma. Let $v_i\in Y_i$, $i\in\{1,-1\}$ be vertices as in the lemma, which fulfill $e(v_i,Y_i)=e(v_i,Y_{-i})$. We obtain the bisection  vector $y'$ as vector corresponding to $(Y_1\setminus \{v_1\} \cup \{v_{-1}\},Y_{-1}\setminus \{v_{-1}\} \cup \{v_1\})$. Due to the balanced degree, this bisection is optimal as well.

Hence, we have two optimal bisections and from Lemma~\ref{lemma:dopt:unique} we know, that the $d^\opt$ is unique and there are unique $\alpha^{(y)}$ and $\alpha^{(y')}$ corresponding to $y$ and $y'$, resp. It holds
\begin{align*}
&  d^{(y)}+\alpha^{(y)}y=d^{(y')}+\alpha^{(y')}y'\\
\Leftrightarrow &  d^{(y)}-d^{(y')}=\alpha^{(y')}y'-\alpha^{(y)}y
\end{align*}
Since $v_1$ has balanced degree and is only connected to vertices, which are in the same part in $y$ and $y'$, we have $d^{(y)}_{v_1}-d^{(y')}_{v_1}=0$. Furthermore, $y_{v_1}=1$, $y'_{v_1}=-1$. Thus we conclude by the equation above, that $-\alpha^{(y')}-\alpha^{(y)}=0$.

Since $y$ and $y'$ are optimal bisections and $e(v_i,Y_i)=e(v_i,Y_{-i})$, we have
\[\sum_{i\in Y_1\setminus \{v_1\}} d^{(y)}_i-\sum_{i\in Y_1\setminus \{v_1\}} d^{(y')}_i=0\]
because
\[
\begin{array}{rcl}
  \sum_{i\in Y_1\setminus \{v_1\}} d^{(y)}_i 
  & = &
  \bw(G) -e(v_1,Y_{-1})- 2 \cdot | (Y_1\setminus \{v_1\})\times (Y_1\setminus \{v_1\}) \cap E(G)|
  - e(v_1,Y_{1}) \\[3mm]
    & = &
  \sum_{i\in Y_1\setminus \{v_1\}} d^{(y')}_i .
\end{array}
\]
But we have also
\[\sum_{i\in Y_1\setminus \{v_1\}} d^{(y)}_i-d^{(y')}_i=(n/2-1)(\alpha^{(y')}-\alpha^{(y)})=-2\alpha^{(y)}(n/2-1)\]
Thus, $\alpha^{(y)}=\alpha^{(y')}=0$. It follows $d_i^{(y)}-d_i^{(y')}=0$, so that each vertex must have no edge to $v_1$ and $v_{-1}$ or must have an edge to both of them. Hence, the $v_1$ and $v_{-1}$ have exactly the same neighbors.\qed

\subsection*{Proof of Theorem~\ref{theorem:no:two:isolated}}
Let $A$ be the adjacency matrix of $G$ and
\[A'=\left(\begin{tabular}{c|cc}A & 0 & 0\\\hline 0&0&0\\0&0&0\end{tabular}\right)\]
be the adjacency matrix of $G'$, where we added two isolated vertices to $G$. Since $h(G)=\bw(G)$,
there exists a $d^\text{opt}$, such that $g(G,d^\text{opt})=\bw(G)$ and $\lambda((A+d^\opt)_S)=0$. It then holds $\sum_i d^\opt_i=4\bw(G)-2|E|$.
\begin{align*}
& h(G)-h(G')\\ 
=&g(G,d^\opt)-\max_{d'} g(G',d')\\
=& \frac{\tsum(A)+\tsum(d^\opt)}{4}-\max_{d'}\frac{\tsum(A')+\tsum(d')-(n+2)\lambda(B_S')}{4} \eqcomment{B'_S=(A'+\diag(d'))_S}\\
=& \frac{\tsum(d^\opt)}{4}-\max_{d'}\frac{\tsum(d')-(n+2)\lambda(B_S')}{4} \eqcomment{\tsum(A)=\tsum(A')}\\
=& \frac{\tsum(d^\opt)}{4}-\max_z\frac{\tsum(z+(d_\opt^T,0,0)^T)-(n+2)\lambda(B_S')}{4} \eqcomment{d'=z+(d_\opt^T,0,0)^T}\\
=&-\max_z\frac{\tsum(z)-(n+2)\lambda(B_S')}{4}\\
=&\min_z\left(\frac{n+2}{4}\lambda(B_S')-\frac{\tsum(z)}{4}\right)\\
=&\min_z\left(\frac{n+2}{4}\max_{x\in\swozero} \frac{x^T (A'+\diag (d')) x}{\|x\|^2}-\frac{\tsum(z)}{4}\right)\\
=&\min_z\left(\frac{n+2}{4}\max_{x\in\swozero} \left(\frac{x^T (A'+\diag(z+(d_\opt^T,0,0)^T)) x}{\|x\|^2}\right)-\frac{\tsum(z)}{4}\right)
\intertext{We restrict ourselves two two kinds of vector $x_a=(x_1,\ldots,x_n,0,0)^T$ with $\sum_{i=1}^n x_i=0$ and $x_b=(1,\ldots,1,-\frac{n}{2},-\frac{n}{2})$:}
\geq &\min_z\left(\frac{n+2}{4}\max_{x\in \{x_a,x_b\}} \left(\frac{x^T (A'+\diag(z+(d_\opt^T,0,0)^T)) x}{\|x\|^2}\right)-\frac{\tsum(z)}{4}\right)
\eqlabel{isolated:drop:eqn}
\end{align*}
We want to show that this term is at least $\frac{4\bw(G)}{n^2}$. Therefore, we analyze the $\max$-term separately and then show, for which $d'$ we have to choose which of the $x_a$ and $x_b$.

Firstly, consider vector $x_a$. Let $z^{(n)}$ denote the first $n$ components of vector $z$. Then
\begin{align*}
  &\max_{x_a=(x_1,\ldots,x_n,0,0)^T, \sum_{i=1}^n x_i=0} \left(\frac{x_a^T (A'+\diag(z+(d_\opt^T,0,0)^T)) x_a}{\|x_a\|^2}\right)\\
 =&\max_{\sum_{i=1}^n x_i=0} \left(\frac{x^T (A+\diag(d^\opt)) x}{\|x\|^2} + \frac{x^T \diag(z^{(n)}) x}{\|x\|^2}\right)\\
 =&\max_{\sum_{i=1}^n x_i=0} \left(\frac{x^T B x}{\|x\|^2} + \frac{x^T \diag(z^{(n)}) x}{\|x\|^2}\right)
\intertext{We choose an optimal bisection vector $y$ of $G$:}
\geq&\frac{y^T B y}{\|y\|^2} + \frac{y^T \diag(z^{(n)}) y}{\|y\|^2}=\frac{\sum_{i=1}^n z_i}{n} \eqlabel{tmp:add:two:vec:xa} \eqcomment{\text{Lemma~\ref{lemma:y:largest:eigenvector}}}\\  
\end{align*}

Secondly, we consider $x_b=(1,\ldots,1,-\frac{n}{2},-\frac{n}{2})^T$:
\begin{align*}
 & \frac{x_b^T (A'+\diag(z+(d_\opt^T,0,0)^T)) x_b}{\|x_b\|^2}\\
=& \frac{\tsum(A)+\sum_i d_i^\opt+x_b^T \diag(z) x_b}{\|x_b\|^2}\\
=& \frac{4\bw(G)+x_b^T \diag(z) x_b}{\|x_b\|^2} \eqcomment{\sum_i d^\opt_i=4\bw(G)-2|E|}\\
=& \frac{4\bw(G)}{(n+2)\left(\frac{n}{2}\right)} + \frac{\sum_{i=1}^n z_i+(z_{n+1}+z_{n+2})\left(\frac{n}{2}\right)^2}{(n+2)\left(\frac{n}{2}\right)} \eqlabel{tmp:add:two:vec:xb}
\end{align*}

We insert the result \eqref{tmp:add:two:vec:xa} for $x_a$ and \eqref{tmp:add:two:vec:xb} for $x_b$ into \eqref{isolated:drop:eqn}:
\[\eqref{isolated:drop:eqn}\geq \min_z\left(\frac{n+2}{4}\max \left(
  \frac{\sum_{i=1}^n z_i}{n},
  \frac{4\bw(G)}{(n+2)\left(\frac{n}{2}\right)} + \frac{\sum_{i=1}^n z_i+(z_{n+1}+z_{n+2})\left(\frac{n}{2}\right)^2}{(n+2)\left(\frac{n}{2}\right)}\right)-\frac{\tsum(z)}{4}
\right)\]

We again simplify the terms separately for \eqref{tmp:add:two:vec:xa}
\begin{align*}
&\frac{n+2}{4} \frac{\sum_{i=1}^n z_i}{n} - \frac{\sum_{i=1}^{n+2} z_i}{4}\\
=&\frac{(n+2)\sum_{i=1}^n z_i-n\sum_{i=1}^n z_i-n(z_{n+1}+z_{n+2})}{4n}\\
=&\frac{2\sum_{i=1}^n z_i-n(z_{n+1}+z_{n+2})}{4n}\\
=&\frac{\sum_{i=1}^n z_i}{2n}-\frac{z_{n+1}+z_{n+2}}{4}=\frac{1}{2}\delta \eqcomment{\delta=\frac{\sum_{i=1}^n z_i}{n}-\frac{z_{n+1}+z_{n+2}}{2}}
\intertext{and \eqref{tmp:add:two:vec:xb}}
 & \frac{n+2}{4} \eqref{tmp:add:two:vec:xb}-\frac{\sum_{i=1}^{n+2} z_i}{4}\\
=& \frac{4\bw(G)}{2n} + \frac{\sum_{i=1}^n z_i+(z_{n+1}+z_{n+2})\left(\frac{n}{2}\right)^2}{2n}-\frac{\sum_{i=1}^n z_i}{4}-\frac{z_{n+1}+z_{n+2}}{4}\\
=& \frac{4\bw(G)}{2n} + \left(\frac{1}{2n}-\frac{1}{4}\right)\sum_{i=1}^n z_i+\left(\frac{n}{8}-\frac{1}{4}\right)(z_{n+1}+z_{n+2})\\
=& \frac{4\bw(G)}{2n} + \frac{2-n}{4n}\sum_{i=1}^n z_i+\frac{n-2}{8}(z_{n+1}+z_{n+2})\\
=& \frac{4\bw(G)}{2n} + \frac{2-n}{4}\left(\frac{\sum_{i=1}^n z_i}{n}-\frac{z_{n+1}+z_{n+2}}{2}\right)\\
=& \frac{2\bw(G)}{n} + \frac{2-n}{4}\delta.
\end{align*}
In both cases, the minimization over $z$ could be reduced to a minimization over $\delta$ and we conclude
\[h(G)-h(G')\geq \eqref{isolated:drop:eqn}\geq \min_\delta \max \left(\frac{1}{2}\delta, \frac{2b}{n}+\frac{2-n}{4}\delta\right).\]
The first term in the maximum is monotone increasing and the second one monotone decreasing (for $n\geq 3$). Hence, the minimum is at the intersection point of these two lines:
\begin{align*}
\frac{1}{2}\delta_\mathrm{min}&=\frac{2\bw(G)}{n}+\frac{2-n}{4}\delta_\mathrm{min}\\
\frac{2-2+n}{4}\delta_\mathrm{min}&=\frac{2\bw(G)}{n}\\
\frac{n}{4}\delta_\mathrm{min}&=\frac{2\bw(G)}{n}\\
\delta_\mathrm{min}&=\frac{8\bw(G)}{n^2}
\end{align*}
It follows
\[h(G)-h(G')\geq \frac{1}{2} \delta_\mathrm{min}=\frac{4\bw(G)}{n^2}.\]\qed

\subsection*{Proof of Proposition~\ref{prop:Bopp:reformulation}}

To obtain an SDP formulation 
we start with Boppana's function $h(G)$ and transform it 
successively as follows:
\begin{align*}
h(G)=&\max_{d\in\mathbb{R}^n} \frac{\tsum(A+\diag(d))-n\lambda((A+\diag(d))_S)}{4}\\
=&\max_{d\in\mathbb{R}^n} \frac{J\bullet A+\myvec{1}^T d-n\lambda(P(A+\diag(d))P)}{4}\\
=&\frac{J\bullet A}{4}+\frac{1}{4}\max_{d\in\mathbb{R}^n} \left(\myvec{1}^T d-n\lambda(P(A+\diag(d))P)\right)\\
=&\frac{J\bullet A}{4}+\frac{1}{4}\max_{d\in\mathbb{R}^n} \left(-n\lambda(P(A+\diag(d))P-\frac{\myvec{1}^T d}{n}I)\right)\\
=&\frac{J\bullet A}{4}-\frac{n}{4}\min_{d\in\mathbb{R}^n} \lambda\left(P(A+\diag(d))P-\frac{\myvec{1}^T d}{n}I\right)\\
=&\frac{J\bullet A}{4}-\frac{n}{4}\min_{d\in\mathbb{R}^n} \lambda(M(d)),
\end{align*}
where $M(d)=P(A+\diag(d))P-\frac{\myvec{1}^T d}{n}I$.
Hence, we want to solve the following problem: Minimize the largest eigenvalue of the matrix $M(d)$ for $d\in\mathbb{R}^n$. For this problem, \cite{vandenberghe1996} gives the SDP formulation:
\[\min z\quad\text{ s.t. }\quad zI-M(d)\succeq 0,\]
with $z\in\mathbb{R}, d\in\mathbb{R}^n$.
Inserting $M(d)$ and then substituting $z$ with $z-\frac{\myvec{1}^T d}{n}$, we get
\[
   \min_{z\in\mathbb{R}, d\in\mathbb{R}^n} 
   \left(z-\frac{\myvec{1}^T d}{n}\right)\quad\text{ s.t. }\quad zI - P(A+\diag(d))P\succeq 0.\]
It is easy to see that the constraint matrix above is equal to the
constraint matrix of~\eqref{primal:sdp:boppana}, since $P=I-\frac{J}{n}$.
This completes the proof that $h(G)$ maximized by Boppana's algorithm 
gives the same value as the optimum solution of~\eqref{primal:sdp:boppana}
because under the constraints we have
\[
  h(G)=\frac{J\bullet A}{4}-\frac{1}{4}\min_{z\in\mathbb{R}, d\in\mathbb{R}^n} 
 (nz-\myvec{1}^T d) .
 \]

To obtain the formulation for a dual program, consider the primal SDP in the form:
\[
  \min_{z\in\mathbb{R}, d\in\mathbb{R}^n} 
  (nz-\myvec{1}^T d) \quad\text{ s.t. }\quad  -PAP+zI-\sum_i d_i PI_i P \ \succeq \ 0,  
\]
where $I_i$ denotes the matrix which has a single 1
in the $i$th row and the $i$th column and zero everywhere else.
The dual can be derived by using  
the rules~\eqref{eq:def:sdp:dual}. We obtain:
\[
  \max_{Y\in\mathbb{R}^{n\times n}} (PAP)\bullet Y\quad{s.t.}\quad
    I\bullet Y =n,\quad
   \forall i:  \  -PI_i P\bullet Y=-1,\quad
  Y\succeq 0.
\]
Thus, since  $P=I-\frac{J}{n}$, we get the following formulation for the dual SDP:
\[
  \max_{Y\in\mathbb{R}^{n\times n}} \left(A-\frac{JA+AJ}{n}+\frac{\tsum(A)J}{n^2}\right)\bullet Y
\]
under the constraints:
\begin{align*}
  \sum_i y_{ii}&=n,\\
  \forall i  \quad y_{ii}-\frac{1}{n}\sum_j y_{ji}-\frac{1}{n} \sum_j y_{ij}+\frac{1}{n^2} \sum_{k,j} y_{kj}&=1,\\
  Y&\succeq 0.
\end{align*}

Here we can note that the second constraint is equal to $(P Y P)_{ii}=1$,
for all $i$. Note further that $(AJ)\bullet Y=\sum_i \deg(i) \sum_j y_{ij}$
and an analogous holds for $(JA)\bullet Y$. Hence, we can reformulate the objective function as follows:
\[
   \max_{Y\in\mathbb{R}^{n\times n}}  \left(A\bullet Y-\frac{1}{n}\sum_j \deg(j) \sum_i y_{ij}-\frac{1}{n}
   \sum_i \deg(i) \sum_j y_{ij}+\frac{1}  {n^2}\sum_{i,j}y_{ij}\right).
\] 
This completes the proof.\qed

\subsection*{Proof of Theorem~\ref{theorem:feige:is:dual:boppana}}
We start with the following fact:
\begin{claim}
  Let $X$ be a positive semidefinite matrix. Then the conditions 
  $(a)$ $\forall i: \sum_j x_{ij}=0$ and 
  $(b)$ $\sum_{i,j} x_{ij}=0$ are equivalent.
\label{lemma:semidefinite:row:sum}
\end{claim}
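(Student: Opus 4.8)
The plan is to prove the two implications separately; only the direction $(b)\Rightarrow(a)$ uses positive semidefiniteness, and it rests on a standard kernel property of PSD matrices.

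First, $(a)\Rightarrow(b)$ holds for any matrix whatsoever: summing the $n$ equalities $\sum_j x_{ij}=0$ over $i$ yields $\sum_{i,j}x_{ij}=0$ at once, so nothing beyond arithmetic is needed here.

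For $(b)\Rightarrow(a)$, I would first rewrite $(b)$ as $\myvec{1}^T X\myvec{1}=0$, where $\myvec{1}$ is the all-ones vector. The key step is the fact that, for a symmetric positive semidefinite matrix $X$ and any vector $v$, the equality $v^T X v=0$ forces $Xv=0$. This follows by writing $X=B^T B$ (possible precisely because $X\succeq 0$; e.g.\ $B=X^{1/2}$, or one may use the spectral decomposition $X=\sum_k \lambda_k u_k u_k^T$ with all $\lambda_k\ge 0$): then $0=v^T X v=\|Bv\|^2$, so $Bv=0$, and hence $Xv=B^T(Bv)=0$. Applying this with $v=\myvec{1}$ gives $X\myvec{1}=0$, i.e.\ $\sum_j x_{ij}=(X\myvec{1})_i=0$ for every $i$, which is exactly $(a)$.

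The statement is therefore essentially immediate; the only point deserving care — and the "hard part," such as it is — is recognizing that positive semidefiniteness is genuinely indispensable for $(b)\Rightarrow(a)$ and that it enters solely through the passage from $\myvec{1}^T X\myvec{1}=0$ to $X\myvec{1}=0$. Without it the implication fails, as witnessed by $X=\diag(1,-1)$, which satisfies $(b)$ but not $(a)$.
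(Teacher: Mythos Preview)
Your proof is correct and follows essentially the same approach as the paper. The paper phrases the factorization $X=B^TB$ as a Gram-matrix representation $x_{ij}=\langle u_i,u_j\rangle$ and then argues $\|\sum_i u_i\|^2=0\Rightarrow\sum_i u_i=0\Rightarrow\sum_j x_{ij}=\langle u_i,\sum_j u_j\rangle=0$; this is exactly your $\|B\myvec{1}\|^2=0\Rightarrow B\myvec{1}=0\Rightarrow X\myvec{1}=0$ in different notation.
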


\begin{proof}
  We show two directions. If $(a)$ holds, it follows directly that $(b)$ is true as well. 
  We proceed with proving of the second direction and assume, that $(b)$ holds.

Each positive semidefinite matrix~$X$ 
can be represented as a Gram matrix,  
i.e. as matrix of scalar products $x_{ij}=\langle u_i,u_j\rangle$ of vectors~$u_i$.
  Thus, we have
\[\sum_{i,j} x_{ij}=\sum_{i,j} \langle u_i,u_j\rangle=\sum_i \langle u_i,\sum_j u_j\rangle=\langle \sum_i u_i,\sum_j u_j\rangle=0,\]
where we used condition~$(b)$.
The scalar product of the vector $\sum_i u_i$ with itself is zero and we conclude that it is the zero vector: $\sum_i u_i=\myvec{0}$. Now we compute
\[\sum_j x_{ij}=\sum_j \langle u_i,u_j\rangle=\langle u_i,\sum_j u_j\rangle=\langle u_i,\myvec{0}\rangle=0\]
which gives condition~$(a)$. 
\end{proof}

Now we ready to prove Theorem~\ref{theorem:feige:is:dual:boppana}.
For convenience we restate the primal SDP~\eqref{primal:sdp:feige:kilian} 
as follows: 
\begin{equation}\label{n:primal:sdp:feige:kilian}
  h_p(G) = \min_Y \left(\frac{|E|}{2} -\frac{1}{4}(A\bullet Y)\right) \quad\text{ s.t. }\quad
  \ \forall i \ y_{ii}=1, \ \sum_{i,j} y_{ij} =0,  \text{ and } Y\succeq 0,
\end{equation}
We show that for the following program
\begin{equation} \label{eq:tmp:obj.f}  
  h'_p(G) = \max_Y A\bullet Y \\
\end{equation}
under the constraints:
\begin{align*}
  \forall i: y_{ii}&=1,\\
  \sum_{i,j} y_{ij}&=0,\\
  Y&\succeq 0,
\end{align*}
we have $h'_p(G)=d(G)$, where recall, $d(G)$ is the objective function of~\eqref{dual:sdp:boppana}.
Then we conclude $h_p(G)=\frac{|E|}{2} -\frac{1}{4}h'_p(G)=\frac{|E|}{2} -\frac{1}{4}d(G)$.

Consider an optimal solution matrix~$Y$ for the SDP. 
We show that $Y$ is a solution to the dual program~\eqref{dual:sdp:boppana} as well,
with the value for the objective function equal to~\eqref{eq:tmp:obj.f}.
 
Since $y_{ii}=1$, the first constraint  of ~\eqref{dual:sdp:boppana} is fulfilled. 
Due to Claim~\ref{lemma:semidefinite:row:sum} and since $\sum_{i,j} y_{ij}=0$, we have $\sum_j y_{ij}=0$ for all $i$. Hence, the second constraint of~\eqref{dual:sdp:boppana}:
\[
  \forall i\quad y_{ii}-\frac{1}{n}\sum_j y_{ji}-\frac{1}{n} \sum_j y_{ij}+\frac{1}{n^2} \sum_{k,j} y_{kj}=1
\]
is fulfilled as well. In the objective function of~\eqref{dual:sdp:boppana}, 
the second and third term are zero, since $(AJ)\bullet Y=\sum_i \deg(i) \sum_j y_{ij}=0$.
Obviously, the fourth term is zero due to the constraints as well.
Hence, we obtain the same value as~$h'_p(G)$.

For the other direction, consider an optimum 
solution matrix~$Y$ of SDP~\eqref{dual:sdp:boppana}. 
First we show that the first and second constraint
of~\eqref{dual:sdp:boppana}  imply $\sum_{i,j} y_{ij}=0$:
\begin{align*}
&& 
\forall i \quad  y_{ii}-\frac{1}{n}\sum_j y_{ji}-\frac{1}{n} \sum_j y_{ij}+\frac{1}{n^2} 
  \sum_{k,j} y_{kj}&=1 \eqcomment{\text{second contraint of~\eqref{dual:sdp:boppana} for each $i$}}\\
\Rightarrow && \sum_i y_{ii}-\frac{1}{n}\sum_{i,j} y_{ji}-\frac{1}{n} \sum_{i,j} y_{ij}+\frac{n}{n^2} \sum_{i,j} y_{ij}&=n \eqcomment{\text{sum all $n$ constraints}}\\
\Rightarrow && n-\frac{1}{n}\sum_{i,j} y_{ji}-\frac{1}{n} \sum_{i,j} y_{ij}+\frac{n}{n^2} \sum_{i,j} y_{ij}&=n \eqcomment{\text{use the  first contraint of~\eqref{dual:sdp:boppana}.}}\\
\Rightarrow && \sum_{i,j} y_{ij}&=0.
\end{align*}
Next, due to Claim~\ref{lemma:semidefinite:row:sum} we know
that  $\sum_j y_{ij}=0$ for all $i$. Again from the  second constraint, 
of~\eqref{dual:sdp:boppana}
we conclude that $y_{ii}=1$. Hence, the constraints 
of the SDP~\eqref{primal:sdp:feige:kilian} are fulfilled. 
Obviously, the second, third and fourth term in the objective function 
of~\eqref{dual:sdp:boppana}
are zero again and the objective values of both SDPs are the same as well.\qed

\subsection*{Lemma~\ref{lemma:necessary:many:edges} and its Proof}

\begin{lemma}[Necessary for many edges]
  Let $G=(V,E)$ be a graph and $y$ an optimal bisection vector of $G$. For $i\in\{+1,-1\}$ let $C_i=\{u \mid y_u=i \wedge \exists v: y_v=-i \wedge \{u,v\}\in E\}$ be the set of vertices in part $i$ located at the cut. If there exist non-empty $\tilde C_i\subseteq C_i$ with $k=\min\{|\tilde C_{+1}|,|\tilde C_{-1}|\}$, $k+\delta=\max\{|\tilde C_{+1}|,|\tilde C_{-1}|\}$, $l=|V|-(k+\delta)$, s.t.
  \begin{compactitem}
    \item $(3k<l \,\land\, \delta=0)$ or $(4k<l \,\land\, \delta<\min \{\frac{4k^2}{l-4k},\frac{7}{128}l\})$
    \item $2 |E(\tilde C_{+1},\tilde C_{-1})| \geq |E(\tilde C_{+1}\cup \tilde C_{-1},V\setminus (\tilde C_{+1}\cup \tilde C_{-1}))|$,
  \end{compactitem}
then $h(G)<\bw(G)$.
\label{lemma:necessary:many:edges}
\end{lemma}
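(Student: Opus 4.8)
The plan is to argue by contradiction: assume $h(G)=\bw(G)$ and exhibit a mean-zero vector on which the matrix governing Boppana's bound has a strictly positive quadratic form, which is impossible.

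First I would fix the correction vector. Choose $d^\opt$ with $g(G,d^\opt)=\bw(G)$; since $g$ is invariant under $d\mapsto d+c\myvec 1$ we may normalize $\sum_i d^\opt_i=4\bw(G)-2|E|$, and then the defining identity of $g$ forces $\lambda(B^\opt_S)=0$, i.e. the largest eigenvalue of $B^\opt_S$ is $0$, where $B^\opt=A+\diag(d^\opt)$. By Lemma~\ref{lemma:y:largest:eigenvector} the optimal bisection vector $y$ is a top eigenvector of $B^\opt_S$, so $x^TB^\opt x=x^TB^\opt_S x\le 0$ for every $x$ with $\sum_i x_i=0$. By Lemma~\ref{theorem:dy:alpha} we may write $d^\opt=d^{(y)}+\alpha y$ for some unknown $\alpha\in\mathbb R$, where $d^{(y)}_i$ is the excess of $i$'s neighbours in the opposite part over those in its own part. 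Writing $\tilde C=\tilde C_{+1}\cup\tilde C_{-1}$, this means that for every mean-zero $x$,
\[
  x^TAx+\sum_i d^{(y)}_i x_i^2+\alpha\sum_i y_i x_i^2\ \le\ 0 .
\]

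The crux is to choose $x$ so that the left-hand side is strictly positive. The idea is to \emph{merge} the two cut clusters onto one side while killing the $\alpha$-term: take $x=+1$ on $\tilde C_{+1}$ and on the $k$ vertices of $\tilde C_{-1}$ with most neighbours in $\tilde C_{+1}$ (all of $\tilde C_{-1}$ when $\delta=0$), and $x$ equal to a small negative constant on a set $W\subseteq V\setminus\tilde C$ selected so that $x$ has mean zero and $W$ is balanced with respect to $y$. Then $\sum_i y_ix_i^2=0$, the unknown $\alpha$ disappears, and a direct expansion gives a main term $4\,|E(\tilde C_{+1},\tilde C_{-1})|$ — the cross edges, previously cut, now enter $x^TAx$ and $\sum_i d^{(y)}_ix_i^2$ with the opposite sign — together with the signed boundary contribution $|E(\tilde C_{+1},Y_{-1}\setminus\tilde C_{-1})|+|E(\tilde C_{-1},Y_{+1}\setminus\tilde C_{+1})|-|E(\tilde C_{+1},Y_{+1}\setminus\tilde C_{+1})|-|E(\tilde C_{-1},Y_{-1}\setminus\tilde C_{-1})|$ and lower-order error terms. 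By the second hypothesis, $2|E(\tilde C_{+1},\tilde C_{-1})|\ge|E(\tilde C,V\setminus\tilde C)|$, the main term plus the signed boundary contribution is already at least $2|E(\tilde C_{+1},\tilde C_{-1})|$, which is strictly positive because the $\tilde C_i$ are non-empty sets of cut vertices.

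The remaining, and main, obstacle is purely quantitative: showing that the error terms are strictly dominated by this positive part. The error terms are of order $k\,|E(\tilde C,W)|/|W|$ (from $\tilde C$-to-$W$ edges), of order $k^2\big(\sum_{i\in W}d^{(y)}_i\big)/|W|^2$ (from the small negative correction), and, when $\delta>0$, the slack lost by dropping $\delta$ vertices from $\tilde C_{-1}$ and re-balancing $W$. This is exactly what the first hypothesis controls: $|V\setminus\tilde C|$ is of order $l\gg k$, so $W$ can be taken large and among well-behaved vertices (few neighbours in $\tilde C$, small degree), while the optimality of $y$ together with the bound $\bw(G)\ge|E(\tilde C_{+1},\tilde C_{-1})|+|E(\tilde C_{+1},Y_{-1}\setminus\tilde C_{-1})|+|E(\tilde C_{-1},Y_{+1}\setminus\tilde C_{+1})|$ keeps $\sum_{i\in W}d^{(y)}_i$ under control; $3k<l$ suffices when $\delta=0$, whereas $\delta>0$ forces the sharper $4k<l$ and $\delta<\min\{4k^2/(l-4k),\ \tfrac{7}{128}l\}$ to absorb the extra terms. (If the merge step itself produces a second optimal bisection $y'$, one may instead use Lemma~\ref{lemma:dopt:unique}: $d^\opt=d^{(y)}+\alpha^{(y)}y=d^{(y')}+\alpha^{(y')}y'$ is so rigid that the same edge inequalities become impossible; the hypotheses are tuned so that one of the two routes always closes.) In either case $x^TB^\opt x>0$ contradicts $x^TB^\opt_Sx\le 0$, so $h(G)<\bw(G)$; Corollaries~\ref{corollary:path} and~\ref{corollary:grid} then follow by the stated choices of $\tilde C_{+1},\tilde C_{-1}$.
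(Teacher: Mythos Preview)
Your overall framework matches the paper: assume $h(G)=\bw(G)$, normalize so that $\lambda(B^\opt_S)=0$, write $d^\opt=d^{(y)}+\alpha y$, and exhibit a mean-zero $x$ with $\sum_i y_ix_i^2=0$ (to kill $\alpha$) for which $x^T(A+\diag(d^{(y)}))x>0$. That is exactly the skeleton of the paper's argument.

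The divergence, and the gap, is in the test vector. You take $x\in\{0,1,-c\}$ on a \emph{selected} subset $W\subseteq V\setminus\tilde C$, then try to choose $W$ among ``well-behaved'' vertices to control error terms like $c^2\sum_{i\in W}d^{(y)}_i$ and edges between $W$ and its complement. This introduces new negative contributions from every inner edge between $W$ and $(V\setminus\tilde C)\setminus W$, and the diagonal term $\sum_{i\in W}d^{(y)}_i$ is not bounded by anything in the hypotheses: nothing prevents $V\setminus\tilde C$ from being dense with large degrees, so the pigeonhole selection of $W$ you gesture at need not exist. More to the point, the precise numerical conditions $3k<l$, $4k<l$, $\delta<\min\{4k^2/(l-4k),\tfrac{7}{128}l\}$ in the statement are not artifacts of generic error control; they come from a specific algebraic construction, and your sketch never connects to them.

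The paper avoids all of this by taking $x$ \emph{constant on each of the three parts} $Y_{+1}\setminus\tilde C_{+1}$, $\tilde C_{+1}\cup\tilde C_{-1}$, $Y_{-1}\setminus\tilde C_{-1}$, with values $-1$, $z$, $-\beta z$ respectively. Then every inner edge with both endpoints in the same part contributes $-(x_i-x_j)^2=0$, every cut edge contributes $(x_i+x_j)^2\ge 0$, and the \emph{only} negative terms are inner edges from $\tilde C$ to the outside, each worth $>-2z^2$ once one shows $\beta<1/3$. Edges in $E(\tilde C_{+1},\tilde C_{-1})$ contribute $4z^2$ each, so the second hypothesis closes the argument directly. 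The two free parameters $z,\beta$ are fixed by the two linear constraints $\sum_i x_i=0$ and $\sum_i y_ix_i^2=0$; solving gives an explicit $z$, and the first hypothesis is exactly what forces the denominator $4k^2+4k\delta-\delta l>0$, $z>4$ (or $z>3$ when $\delta=0$), and $\beta<1/3$. There is no averaging, no choice of $W$, and no appeal to degrees outside $\tilde C$. Your proposal would need to be reworked along these lines to go through.
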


An illustration is found in Figure~\ref{fig:necessary:many:edges}. The parameter $\delta$ allows for some unbalaced size of the subsets.

\begin{figure}[h!]
\begin{center}
  \begin{tikzpicture}[scale=0.8, every node/.style={scale=0.8}]
    \draw (0,0) ellipse (0.5 and 1.5);
    \node at (0,-0.5) {$\tilde C_{+1}$};
    \node at (-0.6,1.2) {$\delta$};
    \node at (-0.6,-1.2) {$k$};
    \draw[dashed] (-0.45,0.7) -- (0.45,0.7);
    \draw[dashed] (1,1.5) -- (1,-2);

    \draw (2,-0.5) ellipse (0.5 and 1);
    \node at (2,-0.5) {$\tilde C_{-1}$};
    \node at (2.6,-1.2) {$k$};

    \draw (0,0) -- (2,0);
    \draw (0,1.2) -- (2,0);
    \draw (0,-1.2) -- (2,-1.2);
    \draw (2,0) -- (3,1);
    \draw (2,-1.2) -- (3,-1.7);
    \draw (2,-1.2) -- (0,-1.7);
  \end{tikzpicture}
\caption{Forbidden graph structures as in Lemma~\ref{lemma:necessary:many:edges}.}
\label{fig:necessary:many:edges}
\end{center}
\end{figure}
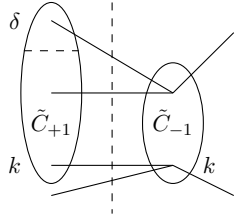

\begin{proof}
  For contradiction, we assume $h(G)=\bw(G)$. For the bisection vector $y$, we then have $d^\opt=d^{(y)}+\alpha y$ for some $\alpha\in\mathbb{R}$. Then $\lambda(B_S)=0$ for $B=A+\diag(d^\opt)$. We will contradict this by choosing a vector $x$ and then show that the Rayleigh quotient for $x$ and $B_S$ is larger than 0 (for any $\alpha$). W.\,l.\,o.\,g. we assume $|\tilde C_{+1}|\geq |\tilde C_{-1}|$. We choose

\[x_i=\begin{cases}
  -1 & \text{ if } y_i=+1 \wedge i\not\in \tilde C_{+1},\\
  z & \text{ if } i \in \tilde C_{+1}\cup \tilde C_{-1},\\
  -\beta z & \text{ if } y_i=-1 \wedge i\not\in \tilde C_{-1},\\
\end{cases}\]
with $\beta=\sqrt{\frac{\delta+l/z^2}{\delta+l}}$ and $z=\frac{2kl+\delta l+2\sqrt{kl(k +\delta)(l+\delta)}}{4k^2+4\delta k-\delta l}$. Note that for $\delta=0$, we have $z=l/k>3$, $\beta=1/z<1/3$ and $-\beta z=-1$.

First we derive the $z$ above by enforcing $\sum_i x_i=0$ and choosing $\beta$ as above:
\begin{align*}
\sum_i x_i&=l (-1) + (k+\delta) z + k z + (\delta+l)(-\beta z)\\
&=-l + (k+\delta) z + k z - (\delta+l)\sqrt{\frac{\delta z^2+l}{\delta+l}}\\
&=-l + (2k+\delta) z - \sqrt{(\delta+l)(\delta z^2+l)}\overset{!}{=}0
\end{align*}
\begin{align*}
\Leftrightarrow && \sqrt{(\delta+l)(\delta z^2+l)}&=(2k+\delta) z -l\\
\Rightarrow && (\delta+l)(\delta z^2+l)&=((2k+\delta) z -l)^2\\
\Leftrightarrow && \delta^2 z^2+\delta l +\delta l z^2+l^2&=(2k+\delta)^2 z^2 +l^2 -2(2k+\delta)lz\\
\Leftrightarrow && \delta^2 z^2+\delta l +\delta l z^2&=4k^2 z^2 +\delta^2 z^2 +4k\delta z^2 -4klz-2\delta lz\\
\Leftrightarrow && 0&=(4k^2 +4k\delta - \delta l) z^2 +(-4kl-2\delta l)z - \delta l\\
\Leftrightarrow && z&=\frac{2kl+\delta l\pm \sqrt{(2kl+\delta l)^2+\delta l (4k^2 +4k\delta - \delta l)}}{4k^2 +4k\delta - \delta l}\\
\Leftrightarrow && z&=\frac{2kl+\delta l\pm \sqrt{4(kl)^2+4kl\delta l+\delta l (4k^2 +4k\delta)}}{4k^2 +4k\delta - \delta l}\\
\Leftrightarrow && z&=\frac{2kl+\delta l\pm 2\sqrt{kl(kl+\delta l+\delta (k +\delta))}}{4k^2 +4k\delta - \delta l}\\
\Leftrightarrow && z&=\frac{2kl+\delta l\pm 2\sqrt{kl(k +\delta)(l+\delta)}}{4k^2 +4k\delta - \delta l}\\
\end{align*}
We take the larger $z$-solution with the $+$.

We show that by our choice of $\beta$, the sum of squares for both parts is the same:
\begin{align*}
  \sum_{i: y_i=+1} x_i^2 - \sum_{i: y_i=-1} x_i^2&=(l (-1)^2 + (k+\delta) z^2) - (k z^2 + (\delta+l)(-\beta z)^2)\\
&=l + (k+\delta) z^2 - k z^2 - (\delta+l)\frac{\delta z^2+l}{\delta+l}\\
&=l + (k+\delta) z^2 - k z^2 - (\delta z^2+l)=0
\end{align*}

Thus, $\alpha$ will have no effect:
\begin{align*}
  \frac{x^T B_S x}{\|x\|^2}&=\frac{x^T B x}{\|x\|^2}\\
&=\frac{x^T (A+\diag(d^{(y)}+\alpha y) x}{\|x\|^2}\\
&=\frac{x^T (A+\diag(d^{(y)})) x+x^T (\alpha y) x}{\|x\|^2} \eqcomment{x^T (\alpha y) x=\alpha \sum_i y_i x_i^2=0}\\
&=\frac{x^T (A+\diag(d^{(y)})) x}{\|x\|^2}\eqlabel{tmp:combining:lemma}
\end{align*}

From now we consider the case $4k<l$ and $\delta<\min \{\frac{4k^2}{l-4k},\frac{7}{128}l\}$.
Next we show $z>4$. From $\delta<\frac{4k^2}{l-4k}$, we get for the denominator of $z$ that $4k^2 +4k\delta - \delta l>0$. For the enumerator, we have:
\begin{align*}
  2kl+\delta l+ 2\sqrt{kl(k +\delta)(l+\delta)} &= 2kl+5\delta l+ 2\sqrt{kl(k +\delta)(l+\delta)}-4\delta l\\
  &> 8k^2+20\delta k+ 4\sqrt{k^2(k +\delta)(4k+\delta)}-4\delta l \eqcomment{\text{Assumption } 4k<l}\\
  &= 8k(k+\delta)+12\delta k+ 4\sqrt{k^2 4k^2}-4\delta l\\
  &>16k(k+\delta)-4\delta l\\
  &=4(4k(k+\delta)-\delta l) \eqcomment{\text{4 times denominator of } z}
\end{align*}
Since the enumerator is more than 4 times larger then the denominator and both are positive, we conclude $z>4$. From $\delta<\frac{7}{128}l$ follows further, that $\beta<1/3$:
\begin{align*}
&& \beta^2=\frac{\delta+l/z^2}{\delta+l} & \leq \frac{1}{9}=\left(\frac{1}{3}\right)^2\\
\Leftrightarrow && 9(\delta+l/z^2) &\leq \delta+l\\
\Leftrightarrow && 8\delta &\leq l-\frac{9l}{z^2}\\
\Leftarrow && 8\delta &\leq l-\frac{9l}{16} \eqcomment{z>4}\\
\Leftrightarrow && \delta &\leq \frac{7}{16\cdot 8}l\\
\end{align*}

Now we want to show that \eqref{tmp:combining:lemma} is larger than zero. For this we decompose $B=A+\diag(d^{(y)}$ into $B=\sum_{e\in E} B^e$ and analyze $x^T B^e x$ for each edge $e$ separately. Note that $d^{(y)}_i$ is for vertex $i$ the number of neighbors in the other part minus the number of neighbors in the same part. For the decomposition, we set $B^e_{ii}=B^e_{jj}=1$, if $e=\{i,j\}$ is a cut edge and $B^e_{ii}=B^e_{jj}=-1$, if $e$ is a inner edge. Further, $B_{ij}=B_{ji}=1$.

If $e=\{i,j\}$ is a cut edge, we have $x^T B^e x=2x_ix_j+x_i^2+x_j^2=(x_i+x_j)^2$. Thus, cut edges always contribute positive. We only consider the edges $E(\tilde C_{+1},\tilde C_{-1})$. Since $x_i=x_j=z$, they contribute $4z^2$ each.

If $e=\{i,j\}$ is a inner edge, we have $x^T B^e x=2x_ix_j-x_i^2-x_j^2$. For inner edges in $V\setminus (\tilde C_{+1}\cup \tilde C_{-1})$, $x_i=x_j$ and the contribution is 0. The same holds for inner edges in $\tilde C_{+1}$ and $\tilde C_{-1}$. Thus, we only have to consider the edges $E(\tilde C_{+1}\cup \tilde C_{-1},V\setminus (\tilde C_{+1}\cup \tilde C_{-1}))$. One vertex is $z$, the other $-1$ or $-\beta z<-1$. Thus, the contribution is $-2z-1-z^2$ or $-2\beta z^2-\beta^2 z^2-z^2=-(3\beta+1)z^2$. Since $0<\beta<1/3$, both are larger than $-2z^2$.

We conclude:
\[x^TBx > |E(\tilde C_{+1},\tilde C_{-1})|\cdot 4z^2 + |E(\tilde C_{+1}\cup \tilde C_{-1},V\setminus (\tilde C_{+1}\cup \tilde C_{-1}))| \cdot (-2z^2)\]
By the assumption in the Lemma, this is greater or equal to zero. 
\end{proof}

\end{document}